\newtheorem{example}{Example}
\newtheorem{definition}{Definition}
\newtheorem{theorem}{Theorem}
\newtheorem{lemma}{Lemma}
\newtheorem{corollary}{Corollary}
\newcommand{\builddate}{\ifnumcomp{\year}{<}{10}{0}{}\the\year-\ifnumcomp{\month}{<}{10}{0}{}\the\month-\ifnumcomp{\day}{<}{10}{0}{}\the\day\ \currenttime}
\ifnumcomp{\svnhour}{=}{23}{\def\svnhour{01}}{}
\ifnumcomp{\svnhour}{=}{22}{\def\svnhour{00}}{}
\ifnumcomp{\svnhour}{=}{21}{\def\svnhour{23}}{}
\ifnumcomp{\svnhour}{=}{20}{\def\svnhour{22}}{}
\ifnumcomp{\svnhour}{=}{19}{\def\svnhour{21}}{}
\ifnumcomp{\svnhour}{=}{18}{\def\svnhour{20}}{}
\ifnumcomp{\svnhour}{=}{17}{\def\svnhour{19}}{}
\ifnumcomp{\svnhour}{=}{16}{\def\svnhour{18}}{}
\ifnumcomp{\svnhour}{=}{15}{\def\svnhour{17}}{}
\ifnumcomp{\svnhour}{=}{14}{\def\svnhour{16}}{}
\ifnumcomp{\svnhour}{=}{13}{\def\svnhour{15}}{}
\ifnumcomp{\svnhour}{=}{12}{\def\svnhour{14}}{}
\ifnumcomp{\svnhour}{=}{11}{\def\svnhour{13}}{}
\ifnumcomp{\svnhour}{=}{10}{\def\svnhour{12}}{}
\ifnumcomp{\svnhour}{=}{09}{\def\svnhour{11}}{}
\ifnumcomp{\svnhour}{=}{08}{\def\svnhour{10}}{}
\ifnumcomp{\svnhour}{=}{07}{\def\svnhour{09}}{}
\ifnumcomp{\svnhour}{=}{06}{\def\svnhour{08}}{}
\ifnumcomp{\svnhour}{=}{05}{\def\svnhour{07}}{}
\ifnumcomp{\svnhour}{=}{04}{\def\svnhour{06}}{}
\ifnumcomp{\svnhour}{=}{03}{\def\svnhour{05}}{}
\ifnumcomp{\svnhour}{=}{02}{\def\svnhour{04}}{}
\ifnumcomp{\svnhour}{=}{01}{\def\svnhour{03}}{}
\ifnumcomp{\svnhour}{=}{00}{\def\svnhour{02}}{}
\newcommand{\wipinfo}{
  \ \ Last change: \svnyear-\svnmonth-\svnday\ \svnhour:\svnminute:\svnsecond \\
  \qquad\quad Build: \builddate \\
  \ \qquad\, Revision: \svnfilerev \qquad\qquad\qquad\qquad\ \ 
}
\newcommand{\N}{\mathbb{N}} 
\newcommand{\Q}{\mathbb{Q}} 
\newcommand{\cardof}[1]{|{#1}|}
\newcommand{\ddedrule}[2]{\frac{\displaystyle #1}{\displaystyle #2}}
\newcommand{\dedrule}[2]{\frac{#1}{#2}}
\newcommand{\trans}[1][]{\xrightarrow{\, {#1} \, }}
\newcommand{\ntrans}[1][]{\mathrel{{\trans[#1]}\makebox[0em][r]{$\not$\hspace{2ex}}}{\!}}
\newcommand{\gtgeq}{\trianglerighteq}
\newcommand{\strans}[1][]{\stackrel{#1}{\longrightarrow}}
\newcommand{\tuple}[1]{\langle{#1}\rangle}
\newcommand{\rank}{\mathop{\sf r}}
\newcommand{\openT}{\mathbb{T}}
\newcommand{\openTerms}{\openT(\Sigma)}
\newcommand{\closedTerms}{T(\Sigma)}
\newcommand{\openDT}{\mathbb{DT}}
\newcommand{\openDTerms}{\openDT(\Sigma)}
\newcommand{\closedDTerms}{\DT(\Sigma)}
\newcommand{\Var}{\mathop{\textit{Var}}}
\newcommand{\PTrn}{\textit{PTr}}
\newcommand{\PTr}{\PTrn(\Sigma, A)}
\newcommand{\pprem}[1]{\textrm{pprem}(#1)}
\newcommand{\nprem}[1]{\textrm{nprem}(#1)}
\newcommand{\qprem}[1]{\textrm{qprem}(#1)}
\newcommand{\prem}[1]{\textrm{prem}(#1)}
\newcommand{\conc}[1]{\textrm{conc}(#1)}
\newcommand{\Red}[2]{\textrm{Red}}
\newcommand{\nvdg}{n_{\mathrm{VDG}}}
\newcommand{\TVar}{\mathcal{V}}
\newcommand{\PVar}{\mathcal{M}}
\newcommand{\DVar}{\mathcal{D}}
\newcommand{\support}{\mathsf{support}}
\newcommand{\V}{\mathcal{V}}
\newcommand{\M}{\mathcal{M}}
\newcommand{\degPTSS}{\textsf{D}}
\newcommand{\Act}{A}
\newcommand{\ntmufnu}{\ensuremath{\textit{nt}\mu\textit{f}\nu}}
\newcommand{\ntmuxnu}{\ensuremath{\textit{nt}\mu\textit{x}\nu}}
\newcommand{\ntmufxnu}{\ensuremath{\ntmufnu\textit{/}\ntmuxnu}}
\newcommand{\ntmuft}{\ensuremath{\textit{nt}\mu\textit{f}\theta}}
\newcommand{\ntmuxt}{\ensuremath{\textit{nt}\mu\textit{x}\theta}}
\newcommand{\ntmufxt}{\ensuremath{\ntmuft\textit{/}\ntmuxt}}
\newcommand{\bntmuft}{\ensuremath{\textit{\bfseries nt}\boldmath\mu\textit{\bfseries f}\boldmath\theta}}
\newcommand{\bntmuxt}{\ensuremath{\textit{\bfseries nt}\boldmath\mu\textit{\bfseries x}\boldmath\theta}}
\newcommand{\bntmufxt}{\ensuremath{\bntmuft/\bntmuxt}}
\newcommand{\nxmuft}{\ensuremath{\textit{nx}\mu\textit{f}\theta}}
\newcommand{\nxmutt}{\ensuremath{\textit{nx}\mu\textit{t}\theta}}
\newcommand{\ntyft}{\ensuremath{\textit{ntyft}}}
\newcommand{\ntyxt}{\ensuremath{\textit{ntyxt}}}
\newcommand{\ntyfxt}{\ensuremath{\ntyft\textit{/}\ntyxt}}
\newcommand{\Diag}{\mathop{\textsf{Diag}}}
\newcommand{\diag}[1]{\textsf{Diag}\{#1\}}
\newcommand{\partZ}{\Xi}
\newcommand{\limp}{\Rightarrow}
\definecolor{lightblue}{RGB}{224,224,255}
\definecolor{lightred}{RGB}{255,224,224}
\definecolor{lightgreen}{RGB}{224,255,224}
\definecolor{lightyellow}{RGB}{255,255,224}
\definecolor{lightpurple}{RGB}{255,224,255}
\definecolor{darkerred}{RGB}{64,0,0}
\definecolor{darkred}{RGB}{128,0,0}
\definecolor{darkblue}{RGB}{0,0,128}
\definecolor{darkgreen}{RGB}{0,128,0}
\definecolor{darkpurple}{RGB}{128,0,128}
\newcommand{\colorpar}[3]{\colorbox{#1}{\parbox{#2}{#3}}}
\newcommand{\marginremark}[3]{\marginpar{\colorpar{#2}{\linewidth}{\color{#1}#3}}}
\def\THICKhrulefill{\leavevmode \leaders \hrule height 5pt\hfill \kern \z@}
\newcommand{\remarkPRD}[1]{\marginremark{darkred}{lightred}{\tiny{[PRD]~ #1}}}
\newcommand{\remarkMDL}[1]{\marginremark{darkpurple}{lightpurple}{\tiny{[MDL]~ #1}}}
\newcommand{\remarkDG}[1]{\marginremark{darkgreen}{lightyellow}{\tiny{[DG]~ #1}}}
\newcommand{\bisim}{\sim}
\newcommand{\relR}{\mathrel{\textsf{R}}}
\newcommand{\closed}[1]{{#1}\text{-closed}}
\newcommand{\liff}{\Leftrightarrow}
\renewcommand{\implies}{\Rightarrow}
\newcommand{\proves}{\vdash}
\newcommand{\rtop}{\text{top}}
\newcommand{\qtop}{\text{qtop}}
\newcommand{\DT}{\textsf{DT}}
\newcommand{\wsproves}{\vdash_{\textit{ws}}}
\newif\ifdraft
\renewcommand{\remarkDG}[1]{}
\renewcommand{\remarkMDL}[1]{}
\renewcommand{\remarkPRD}[1]{}
\title{Tree rules in probabilistic transition system specifications with negative and quantitative premises%
  \thanks{Supported by Project ANPCYT PAE-PICT 02272, SeCyT-UNC, 
                 Eramus Mundus Action 2 Lot 13A EU Mobility Programme 2010-2401/001-001-EMA2 and EU 7FP grant agreement 295261 (MEALS).}}
\author{
Matias David Lee$^\dag$  \qquad\qquad
Daniel Gebler$^\ddag$  \qquad\qquad
Pedro R. D'Argenio$^\dag$
\and
\institute{$^\dag$FaMAF -- CONICET \\Universidad Nacional de C\'ordoba\\Ciudad Universitaria, X5000HUA C\'ordoba \\ Argentina}
\email{\{lee,dargenio\}@famaf.unc.edu.ar}
\and
\institute{$^\ddag$Department of Computer Science\\VU University Amsterdam \\ De Boelelaan 1081a, 1081HV Amsterdam \\ The Netherlands}
\email{e.d.gebler@vu.nl}
%
\ifdraft
\and\institute{\wipinfo}
\fi
}
\begin{document}
\maketitle

\begin{abstract}
Probabilistic transition system specifications (PTSSs) in the
\ntmufxnu\ format provide structural operational semantics for
Segala-type systems that exhibit both probabilistic and
nondeterministic behavior and guarantee that bisimilarity is a
congruence.
Similar to the nondeterministic case of the rule format {\it tyft/tyxt}, we
show that the well-foundedness requirement is unnecessary in the
probabilistic setting.
To achieve this, we first define a generalized version of the
\ntmufxnu\ format in which quantitative premises and conclusions
include nested convex combinations of distributions.  
Also this format guarantees that bisimilarity
is a congruence.
Then, for a given (possibly non-well-founded) PTSS in the new format,
we construct an equivalent well-founded PTSS consisting
of only rules of the simpler (well-founded) probabilistic ntree
format.
Furthermore, we develop a proof-theoretic
notion for these PTSSs that coincides with the existing stratification-based meaning
in case the PTSS is stratifiable. This
continues the line of research lifting structural operational semantic
results from the nondeterministic setting to systems with both
probabilistic and nondeterministic behavior.
\end{abstract}

\section{Introduction}

Plotkin's structural operational semantics~\cite{Plotkin81} is a
popular method to provide 
a rigorous interpretation to specification and programming languages. The interpretation 
is given in terms of transition systems.
The method has been formalized with an algebraic flavor as \emph{transition
  systems specifications
  (TSS)}~\cite[etc.]{GrooteVaandrager92,BloomIM95:jacm,Groote93,BolGroote96}.
Basically, a TSS contains a signature, a set of labels, and a set of
rules. The signature defines the terms in the language.  Labels
represent actions performed by a process (i.e., a term over the
signature) in one step of the execution (i.e., one transition).
Rules define how a process should behave (i.e., produce a transition)
in terms of the behavior of its subprocesses. 
That is, rules define
compositionally the transition system associated to each term of the
language.
This technique has been widely studied mainly on the realm of
languages and process algebras describing only non-deterministic
behavior (see~\cite{MousaviEtAl07} for an overview).

The introduction of probabilistic process
algebras~\cite[etc.]{DBLP:journals/iandc/BaetenBS95,DBLP:journals/iandc/GlabbeekSS95}
motivated the need for a theory of structural operational semantics to
define \emph{probabilistic} transition systems.
A few results have appeared in this direction,
notably~\cite{DBLP:journals/entcs/Bartels02,Bartels2004,DBLP:journals/tocl/LanotteT09,klin2008structural,DL-fossacs12}.
All these works introduced rule formats that ensures that bisimulation
equivalence is a congruence for operators whose semantics is defined
within such format.
%
The most general of those formats is the \ntmufxnu\ format ~\cite{DL-fossacs12} that provides semantics in terms of Segala's probabilistic automata~\cite{Segala95}.

\pagebreak[4]

The \ntmufxnu\ format is the probabilistic relative to the
\ntyfxt\ format~\cite{Groote93} extending it in two ways.  First,
it is designed to deal with probabilistic transitions of the form
$t\trans[a]\pi$, where $t$ is a term in the appropriate signature, and
$\pi$ is a distribution on terms. Second, it includes quantitative
premises that allow for probabilistic testing of the form
$\pi(\{t_1,\ldots,t_n\})> q$, that is, it allows to verify if the
probability that the system moves to one state (i.e.\ term) in
$\{t_1,\ldots,t_n\}$ according to $\pi$ is greater than $q\in[0,1]$.
%
The congruence theorem for the
\ntmufxnu\ format~\cite[Thm.~12]{DL-fossacs12} states that if a
probabilistic transition system specification (PTSS) $P$ has all its
rules in \ntmufxnu\ format, then bisimulation equivalence is a
congruence for all operators in $P$.
Unfortunately, \cite{DL-fossacs12}~missed an important condition:
rules have to be well-founded (basically, there should not be a cyclic
dependency on the terms appearing in the premises of the rule).
This paper will correct this mistake.

The well-foundedness condition has also appeared from the very beginning
in the non-deterministic setting.  Most of the formats have it
implicit as they did not allowed lookahead.  Congruence theorems for
formats with lookahead such as
\textit{tyft/tyxt}~\cite{GrooteVaandrager92} or
\ntyfxt~\cite{Groote93} explicitly demanded TSS to be well-founded.
It remained unknown for a while whether such condition was actually
required until Fokkink and van Glabbeek proved it
unnecessary~\cite{FokkinkvanGlabbeek96}.
The proof proceeds by reducing a TSS in \textit{tyft/tyxt}~format (not
necessarily well-founded) to an equivalent TSS containing only so
called \emph{tree rules} (i.e., well-founded rules in \textit{tyft}
format with premises containing only variables instead of
arbitrary open terms).  Similarly, they showed that a TSS in
\ntyfxt\ format can be translated into an equivalent TSS containing
only ntree rules (tree rules with negative premises which are not
necessarily restricted to single variables).

In this paper, we also show that the restriction to well-founded PTSSs
is not necessary to guarantee congruence. We also proceed by
reducing a PTSS in \ntmufxnu\ format to an equivalent PTSS containing
only \textit{pntree rules}.
However, a pntree rule cannot simply be defined as an \ntmufnu\ rule
where positive premises are restricted to the form $x\trans[a]\mu$,
with $x$ and $\mu$ being term and distribution variables, respectively.
It turns out that quantitative premises in \ntmufxnu\ rules are
too limited.  The \ntmufxnu\ format only allows for
quantitative premises of the form $\mu(Y)\gtgeq q$ with $\mu$ being a
distribution variable, $Y$ an infinite set of term variables,
${\gtgeq}\in\{>,\geq\}$, and $q\in[0,1]$.
Instead, the pntree format requires premises of the form
$\theta(Y)\gtgeq q$ where $\theta$ is a nested convex combinations of
products of distribution variables.  We call these objects
\emph{distribution terms}.
%
So, we extend the \ntmufxnu\ format to deal with distribution terms, and prove, more generally, that a
PTSS in the new format ---\,called \ntmufxt\,--- can be translated into an equivalent PTSS
with only pntree rules (hence, well-founded).
Just like for the case of the \ntyfxt\ format, full negative
premises are required, i.e., negative premises in pntree rules cannot
be limited to the form $x\ntrans[a]$, with $x$ being a term
variable.

\medskip

Summarizing, the following results are introduced in this paper:
\begin{itemize}
\item%
  We define the \ntmufxt\ format, which extends the \ntmufxnu\ format
  to deal with distribution terms in quantitative premises.
\item%
  We prove that if a PTSS is in \ntmufxt\ format and it is
  well-founded, then bisimulation equivalence is a congruence for all
  its operators.  This also corrects the mistake in the proof of
  Theorem~12 in~\cite{DL-fossacs12} which omitted to consider the
  well-foundedness hypothesis.
\item%
  We show that for all PTSS in \ntmufxt\ format (not necessarily
  well-founded) there is a PTSS with only pntree rules that defines
  exactly the same probabilistic transition relation (by ``defines'' we
  mean ``has as a \emph{supported model}'')
\item%
  We dropped the well-foundedness hypothesis from the congruence theorem:
  since every pntree rule is also a well-founded \ntmuft\ rule,
  the previous results imply that bisimulation equivalence is a
  congruence for all operators of a (not necessarily
  well-founded) PTSS in \ntmufxt\ format.
\remarkDG{Does statement 4 not states explicitly that `if a PTSS is in \ntmufxt\ format (but not necessarily well-founded), bisimulation equivalence is a congruence for all its operators' - thus make well-foundedness a not necessary precondition? What exactly is in this case the correction?}
\remarkPRD{Actually this is a good point \texttt{:-S}.  Check if the new text is better}
\remarkDG{I believe statements 2 and 4 are still conflicting. Is it not the case that: 1) The proof of bisimilarity is a congruence assumes and works well with the well-foundedness condition. 2) Every PTSS in non well-founded \ntmufxt\  format can be translated to a transition equivalence well-founded PTSS in pntree format 3) Same equiv + pntree are well-founded \ntmufxt\ rules concludes that congruence property is given also for non-well founded PTSS. Do you mean in the beginning of the statement \ntmufxnu\ format which cannot be reduced to a well-founded A\ntmufxnu\ format?}
\remarkPRD{Well, I remove the explanation. There should not be any conflict, now}
\item%
  Besides, in the process, we also redefined important concepts for PTSS
  originally defined for TSS, in particular, the concept of
  ``well supported proof''.
\end{itemize}

\section{Preliminaries}\label{sec:preliminaries}



We assume the presence of an infinite set of (term) variables $\TVar$ and
we let $x, y, z, x',$ $x_0, x_1, \ldots$ range over $\TVar$.
%
A \emph{signature} is a structure $\Sigma = (F, \rank)$, where 
%
\begin{inparaenum}[(i)]
\item%
  $F$ is a set of \emph{function names} disjoint with $\TVar$, and
\item%
  $\rank : F \to \N_0$ is a \emph{rank function} which gives the arity 
  of a function name; if $f \in F$ and $\rank(f)=0$ then $f$ is called
  a \emph{constant name}.
\end{inparaenum}
%
Let $W \subseteq \TVar$ be a set of variables. The set of $\Sigma$-terms 
over $W$, notation $T(\Sigma, W)$ is the least set satisfying: 
%
\begin{inparaenum}[(i)]
\item%
  $W \subseteq T(\Sigma, W)$, and
\item%
  if $f \in F$ and $t_1, \cdots, t_{\rank(f)} \in T(\Sigma, W)$, then
  $f (t_1, \cdots, t_{\rank(f)}) \in T(\Sigma, W)$.
\end{inparaenum}
%
$T(\Sigma, \emptyset)$ is abbreviated as $\closedTerms$; the elements
of $\closedTerms$ are called \emph{closed terms}. $T(\Sigma, \TVar)$ is
abbreviated as $\openTerms$; the elements of $\openTerms$ are called
\emph{open terms}. $\Var(t) \subseteq \TVar$ is the set of variables in
the open term t. 

In order to deal with languages that describe probabilistic behavior we need expressions denoting probability distributions.
Let $\Delta(\closedTerms)$ denote the set of all (discrete) probability
distributions on $\closedTerms$.
We let $\pi, \pi', \pi_0, \pi_1, \ldots$ range over
$\Delta(\closedTerms)$.
As usual, for $\pi \in \Delta(\closedTerms)$ and $T \subseteq
\closedTerms$, we define $\pi(T)=\sum_{t\in T}\pi(t)$.
For $t \in \closedTerms$, let $\delta_t$ denote the Dirac
distribution, i.e., $\delta_t(t)=1$ and  $\delta_t(t')=0$ if $t\not=t'$.
Moreover, the product measure $\prod_{i=1}^n\pi_i$ is defined by
$(\prod_{i=1}^n\pi_i)(t_1,\ldots,t_n)=\prod_{i=1}^n\pi_i(t_i)$. In
particular, if $n=0$, $(\prod_{j\in\emptyset}\pi_j) = \delta_{()}$ is
the distribution that assigns probability 1 to the empty tuple.
Let $g:\closedTerms^n\to\closedTerms$ and recall that 
$g^{-1}(t') = \{ \vec{t} \in\closedTerms^n \mid g(\vec{t})=t'\}$.  Then
$(\prod_{i=1}^n\pi_i)\circ g^{-1}$ is a well defined probability
distribution on closed terms.
In particular, if $g:\closedTerms^0\to\closedTerms$ and $g(())=t$, 
then $(\prod_{j\in\emptyset}\pi_j) \circ g^{-1} = \delta_{()} \circ g^{-1} = \delta_{t}$.

For a term $t\in\openTerms$ we let $\delta_t$ be an
\emph{instantiable Dirac distribution}. That is, $\delta_t$ is a
symbol that takes value $\delta_{t'}$ when variables in $t$
are substituted so that $t$ becomes a closed term $t'\in\closedTerms$.
Let $\DVar = \{\delta_t : t\in \openTerms\}$ be the set of
instantiable Dirac distributions.
%
A \emph{distribution variable} is a variable that takes values on
$\Delta(\closedTerms)$.  Let $\PVar$ be an infinite set of distribution
variables. Let $\mu, \mu', \mu_0, \mu_1, \ldots$ range over $\PVar$
and $\zeta, \zeta', \zeta_0, \zeta_1, \ldots$ range over $\PVar \cup \TVar$.
Let $D \subseteq \PVar$ be a set of distribution variables and $V\subseteq\TVar$ be a set of term variables.  
The set of \emph{distribution terms} over $D$ and $V$, notation $\DT(\Sigma, D, V)$ is the least set satisfying: 
%
\begin{inparaenum}[(i)]
\item%
$D \cup \{\delta_t : t\in T(\Sigma,V)\} \subseteq \DT(\Sigma, D,V)$, and
\item%
  ${\textstyle \sum_{i\in I} p_i (\prod_{n_i \in N_i} \theta_{n_i}) \circ g_i^{-1}} 
  \in \DT(\Sigma, D, V)$ where
    $p_i \in (0,1]$ with $\sum_{i\in I} p_i = 1$,
  each $g_i$ is a
  function s.t.\ $g_i : \closedTerms^{N_i} \rightarrow \closedTerms$,
  and
  $\theta_{n_i}\in \DT(\Sigma, D, V)$.
\end{inparaenum}
 Intuitively, $g_i^{-1}(t)$ decomposes term $t$ into its sub-terms $t_1,\ldots,t_{N_i}$ and probability $\theta(t)$ of term $t$ is calculated as the convex combination of the product probability of its sub-terms $\theta_1(t_i),\ldots,\theta_{N_i}(t_{N_i})$.
%
$\DT(\Sigma, \emptyset, \emptyset)$ is abbreviated as $\closedDTerms$; the elements
of $\closedDTerms$ are actual distributions on terms. $\DT(\Sigma, \PVar, \TVar)$ is
abbreviated as $\openDTerms$.
$\Var(\theta) \subseteq \PVar \cup \TVar$ is the set of  
(distribution and term) variables appearing in $\theta$.
%

A substitution is a mapping that assigns terms to variables.  In our
case we need to extend this notion to distribution terms and
instantiable Dirac distributions.
A \emph{substitution} $\rho$ is a mapping in $(\TVar\cup
\PVar) \to (\openTerms\cup\openDTerms)$ such that $\rho(x)
\in \openTerms$ whenever $x\in \TVar$, and $\rho(\mu) \in
\openDTerms$ whenever $\mu\in \PVar$.
A substitution $\rho$ extends to open terms and sets of terms as usual,
to instantiable Dirac distributions by $\rho(\delta_t)=\delta_{\rho(t)}$ 
and to distribution terms by 
$\rho({\textstyle \sum_{i\in I} p_i (\prod_{n_i \in N_i} \theta_{n_i}) \circ g_i^{-1}}) = 
 {\textstyle \sum_{i\in I} p_i (\prod_{n_i \in N_i} \rho(\theta_{n_i})) \circ g_i^{-1}}
$. Notice that the construction of distribution terms ensures that closed substitution 
instances of distribution terms denote indeed probability distribution.

\section{Probabilistic Transition System Specifications}\label{sec:ptss}

A (probabilistic) transition relation describes the behavior of a
process by prescribing the possible actions it can perform at each
state.
Each action is described with a label on the relation and the
evolution to the next state is given by a probability distribution on
terms.
%
%
We will follow the probabilistic automata style of~\cite{Segala95} which generalize the so
called reactive model~\cite{LarsenSkou91}.  Let $\Sigma$ be a
signature and $\Act$ be a set of labels.  A \emph{transition relation}
is a set ${\trans} \subseteq \PTr$, where $\PTr = \closedTerms \times
\Act \times \Delta(\closedTerms)$.  We denote $(t,a,\pi)\in{\trans}$
by $t\trans[a]\pi$.

Transition relations are usually defined by means of structured
operational semantics in Plotkin's style~\cite{Plotkin81}. 
We follow the approach
of~\cite{GrooteVaandrager92,Groote93,BolGroote96} which provides an
algebraic characterization for transition system specifications.

\begin{definition}\label{def:ptss}%
  A \emph{probabilistic transition system specification} (PTSS) is a
  triple $P = (\Sigma, \Act, R)$ where $\Sigma = (F, r)$ is a signature,
  $\Act$ is a set of labels, and $R$ is a set of rules of the form:
  \[
  \ddedrule{ \{t_k \trans[a_k] \mu_k : k\in K \}\cup
              \{t_l \ntrans[b_l] : l \in L\} \cup
              \{\theta_j (W_j) \gtrless_j q_j : j \in J\} } %
            { t \trans[a] \theta}
  \]
  where
  $K, L, J$ are index sets, 
  $t, t_k, t_l \in \openTerms$, $a, a_k, b_l \in A$, 
  $\mu_k \in \PVar$, $W_j\subseteq \TVar$,
  ${\gtrless_j} \in \{{>},{\geq}, <, \leq \}$, $q_j\in[0,1]$
   and $\theta_j, \theta \in \openDTerms$
\end{definition}

An expression of the form $t \trans[a] \theta$, (resp. $t \ntrans[a]$, $\theta (W)
\gtrless p$) is a \emph{positive literal} (resp. \emph{negative literal, quantitative literal})
where $t \in \openTerms$, $a \in A$, $\theta \in \openDTerms$, $W \subseteq \Var \cup \closedTerms$ and $p \in [0,1]$.
For any rule $r \in R$, literals above the line are called
\emph{premises}, notation $\prem{r}$; the literal below the line is
called \emph{conclusion}, notation $\conc{r}$.
We denote with $\pprem{r}$ ($\nprem{r}$, $\qprem{r}$) the set of
positive (negative, quantitative) literals of the rule $r$.
A rule $r$ is called \emph{positive} if $\nprem{r} = \emptyset$.  A PTSS is called positive if
it has only positive rules. A rule $r$ without premises is called an
\emph{axiom}.
In general, we allow the sets of positive, negative, and quantitative premises to be infinite.

Substitutions provide instances to the rules of a PTSS that, together
with some appropriate machinery, allows us to define probabilistic
transition relations.  Given a substitution $\rho$, it extends to
literals as follows:
%
%
  $\rho(t \ntrans[a]) = \rho(t) \ntrans[a]$, \
  $\rho(\theta(W) \ {\gtrless} \ p) = \rho(\theta)(\rho(W)) \ {\gtrless} \ p$, and
  $\rho(t \trans[a] \theta) = \rho(t) \trans[a] \rho(\theta)$.
Then, the notion of substitution extends to rules as expected.  We say
that $r'$ is a (closed) instance of a rule $r$ if there is a
(closed) substitution $\rho$ so that $r'=\rho(r)$.

 We say that $\rho$ is a \emph{proper substitution of $r$} if for all
 quantitative premises $\rho(\theta(W) \gtrless p)$ of $r$ it holds that
 $\rho(\theta(w)) > 0$ for all $w\in W$.  Thus, if $\rho$ is proper, all
 terms in $\rho(W)$ are in the support of $\rho(\theta)$.  Proper
 substitutions avoid the introduction of spurious terms.  This is of
 particular importance for the conservative extension theorem of
 \cite[Theorem~14]{DL-fossacs12}.
We use only this kind of substitution in the paper. 


As has already been argued many times (e.g.~\cite{Groote93,BolGroote96,vanGlabbeek04}), transition system
specifications with negative premises do not uniquely define a
transition relation and different reasonable techniques may lead to
incomparable models.
In any case, we expect that a transition relation associated to a PTSS
$P$
\begin{inparaenum}[(i)]
\item%
  respects the rules of $P$, that is, whenever the premises of a
  closed instance of a rule of $P$ belong to the transition relation,
  so does its conclusion; and
\item%
  it does not include more transitions than those explicitly
  justified, i.e., a transition is defined only if it is the conclusion of a
  closed rule whose premises are in the transition relation.
\end{inparaenum}
The first notion corresponds to that of model, and the second one to
that of supported transition.

Before formally defining these notions we introduce some notation.
Given a transition relation ${\trans} \subseteq \PTr$, a positive
literal $t \trans[a]\pi$ \emph{holds in} $\trans$, notation
${\trans} \models t \trans[a]\pi$, if $(t, a, \pi) \in {\trans}$.
A negative literal $t \ntrans[a]$  \emph{holds in} $\trans$,
notation ${\trans} \models t \ntrans[a]$,
if there is no $\pi \in \Delta(\closedTerms)$ s.t.\ $(t, a, \pi) \in {\trans}$.
A quantitative literal $\pi(T) \gtrless p$ \emph{holds in} $\trans$,
notation ${\trans} \models \pi(T) \gtrless p$ precisely when $\pi (T)
\gtrless p$.  Notice that the satisfaction of a quantitative literal
does not
depend on the transition relation. We nonetheless
use this last notation as it turns out to be convenient.
Given a set of literals $H$, we write ${\trans} \models H$
if $\forall \phi \in H: {\trans} \models \phi$.

\begin{definition}\label{def:supmodel}%
  Let $P = (\Sigma, A, R)$ be a PTSS. Let ${\trans} \subseteq \PTr$ be
  a probabilistic transition system (PTS). Then ${\trans}$ is \emph{a
   supported model of} $P$ if
  it satisfies that: $\psi \in {\trans}$ iff there is a rule
  $\frac{H}{\chi} \in R$ and a proper substitution $\rho$
  s.t.\ $\rho(\chi) = \psi$ and ${\trans} \models \rho (H)$.
  For ${\trans}$ to be a \emph{model} of $P$ we only require that
   the ``if'' holds, and for ${\trans}$ to be
   \emph{supported by} $P$ we only require that the ``only if'' holds.
\end{definition}

We have already pointed out that PTSSs with negative premises do not
uniquely define a transition relation.  In fact, a PTSS may have more
than one supported model.  For instance, the PTSS with the single
constant $f$, set of labels $\{a,b\}$ and the two rules
$\frac{ f \strans[a] \mu}{ f \strans[a] \delta_{f}}$ and
$\frac{ f \strans[a]\!\!\!\!\!\!\not \quad }{ f \strans[b] \delta_{f}}$,
has two supported models: $\{f\trans[a]\delta_{f}\}$ and
$\{f\trans[b]\delta_{f}\}$.
We will not dwell on this problem which has been studied at length
in~\cite{BolGroote96} and~\cite{vanGlabbeek04} in a non-probabilistic
setting.
Instead we present two different approaches to resolve this problem:
stratification and well supported proofs. 


\subsection{Stratification}\label{sec:stratification}
%
A stratification defines an order on closed positive literals that
ensures that the validity of a transition does
not depend on the negation of the same transition.

\begin{definition}
  \label{def:stratification}%
  Let $P = (\Sigma, A, R)$ be a PTSS.  A function $S: \PTr \to
  \alpha$, where $\alpha$ is an ordinal, is called a
  \emph{stratification} of $P$ (and $P$ is said to be
  \emph{stratified}) if for every rule
  \[r = \ddedrule{\{t_k \trans[a_k] \mu_k : k\in K \}\cup \{t_l \ntrans[b_l]
    : l \in L\} \cup \{\theta_j (W_j) \gtrless q_j : j \in J\} } 
  {t \trans[a] \theta }\]
  and proper substitution
  $\rho : (\TVar\cup\M) \to (\closedTerms\cup\Delta(\closedTerms))$
  it holds that:
  \begin{inparaenum}[(i)]
  \item%
    for all $k \in K$, $S(\rho (t_k \trans[a_k] \mu_k)) \leq
    S(\conc{r})$, and
  \item%
    for all $l \in L$ and $\mu \in \M$, $S(\rho (t_l \trans[b_l] \mu))
    < S(\conc{r}) $.
  \end{inparaenum}
  Each set $S_{\beta} = \{\phi \mid S(\phi)=\beta\}$, with
  $\beta<\alpha$, is called a \emph{stratum}.
  If for all $k \in K$, $S(\rho(t_k\trans[a_k]\mu_k)) < S(\conc{r})$,
  then the stratification is said to be \emph{strict}.
\end{definition}

A transition relation is constructed stratum by stratum in an
increasing manner by transfinite recursion.  
If it has been decided whether a
transition in a stratum $S_{\beta'}$, with $\beta'<\beta$, is valid or
not, we already know the validity of the negative premise occurring in
the premises of a transition $\varphi$ in stratum $S_\beta$ (since all
positive instances of the negative premises are in strictly lesser
strata) and hence we can determine the validity of $\varphi$.
Notice that a stratification does not take quantitative premises into account because 
their satisfaction does not depend on the transition relation.

\begin{definition}\label{def:assoc_with}%
  Let $P = (\Sigma, A, R)$ be a PTSS with a stratification
  $S:\PTr\to\alpha$ for some ordinal $\alpha$.
  For all rules $r$, let $\degPTSS(r)$ be the smallest regular cardinal
  such that $\degPTSS(r) \ge |\pprem{r}|$, 
  and let $\degPTSS(P)$ be the smallest
  regular cardinal such that $\degPTSS(P)\geq\degPTSS(r)$ for all
  $r\in R$.
  The transition relation $\trans_{P, S}$ \emph{associated with} $P$
  (and based on $S$) is defined by
  ${\trans_{P,S}} = \bigcup_{\beta < \alpha} {\trans_{P_\beta}}$,
  where each $\trans_{P_\beta} = \bigcup_{j \leq \degPTSS(P)}
  {\trans_{P_{\beta,j}}}$ and each ${\trans_{P_{\beta,j}}}$ is defined
  by
  \begin{align*}
  {\trans_{P_{\beta,j}}} = \Big\{ \ &
     \psi \mathrel{\big|}
     S(\psi) = \beta \text{ and }
     \exists r \in R \text{ and proper substitution } \rho \mbox{ s.t. }
     \psi = \conc{\rho(r)}, \\[-.5ex]
  & \qquad \textstyle%
    (\bigcup_{ \gamma < \beta} {\trans_{P_\gamma}}) \cup (\bigcup_{ j' < j} {\trans_{P_{\beta,j'}}}) \models
       {\qprem{\rho(r)} \cup \pprem{\rho(r)}}  \text{ and } \\[-.5ex]
  & \qquad \textstyle%
    (\bigcup_{ \gamma < \beta} {\trans_{P_\gamma}}) \models \nprem{\rho(r)} \ \ \Big\}
 \end{align*}
\end{definition}
\noindent
A PTSS $P$ with rules $R=\bigg\{\frac{ f \strans[a] \mu}{ f \strans[a] \delta_{f}}, \frac{ f \strans[a]\!\!\!\!\!\!\not \quad }{ f \strans[b] \delta_{f}}\bigg\}$
can be stratified by $S(f \trans[a] \delta_f) = 0$ and $S(f
\trans[b] \delta_f) = 1$.  This stratification induces the
transition relation ${\trans_{P,S}}=\{f \trans[b] \delta_f\}$. Because 
(non-strict) stratifications allow that positive premises are in the same
stratum as the conclusion, the validity of a premise may depend on a rule with a conclusion 
literal of the same stratum.
In this case, the construction of $\trans_{P_\beta}$ requires
to iterate up to $D(P)$  times, denoted by $\bigcup_{j \leq \degPTSS(P)} {\trans_{P_{\beta,j}}}$,
to decide the the validity of all literals of this stratum.

The  existence of a stratification guarantees the existence of a
supported model. In fact, such model is the one in
Def.~\ref{def:assoc_with} (Theorem~\ref{th:existence:supmodel}). Furthermore,
all stratification define the same supported model (Theorem~\ref{th:weakunicity:supmodel}) which allows to omit the stratification
symbol in ${\trans_{P, S}}$ and use ${\trans_{P}}$ instead. Moreover, 
strict stratification ensures uniqueness of the supported model 
(Theorem~\ref{th:strongunicity:supmodel}). The proofs follow closely
their non-probabilistic counterparts in~\cite{Groote93} (Theorem
2.15, Lemma 2.16 and Theorem 2.18, resp.).
The only actual difference lies on the quantitative premises, which do
not pose any particular problem since their validity depends only on
the substitution.
%

\begin{theorem}\label{th:existence:supmodel}%
	Let $P$ be a PTSS with stratification $S$. Then $\trans_{P, S}$ is a supported model of $P$. 
\end{theorem}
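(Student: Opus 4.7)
The plan is to prove both inclusions of the biconditional in Definition~\ref{def:supmodel}: $\trans_{P,S}$ is \emph{supported by} $P$ (every transition has a justifying rule) and $\trans_{P,S}$ is a \emph{model of} $P$ (every justified transition is present). The overall argument mirrors Groote's proof of Theorem~2.15 in \cite{Groote93}, with quantitative premises added, which is essentially cost-free because their validity is independent of the transition relation. Before starting, I would record the elementary monotonicity observation that for $\gamma \le \gamma' < \alpha$ and $j \le j' \le \degPTSS(P)$ we have $\trans_{P_{\gamma,j}} \subseteq \trans_{P_{\gamma',j'}} \subseteq \trans_{P,S}$, and the \emph{stratum consistency} fact: every element $\psi' \in \trans_{P_\gamma}$ satisfies $S(\psi') = \gamma$, so the transition relations at different strata are disjoint.

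For the ``supported by'' direction, suppose $\psi \in \trans_{P,S}$, and pick the least $\beta$ and then least $j$ with $\psi \in \trans_{P_{\beta,j}}$. By definition, there exist a rule $r \in R$ and a proper substitution $\rho$ with $\conc{\rho(r)} = \psi$, such that the positive and quantitative premises hold in $(\bigcup_{\gamma<\beta}\trans_{P_\gamma}) \cup (\bigcup_{j'<j}\trans_{P_{\beta,j'}})$ and the negative premises hold in $\bigcup_{\gamma<\beta}\trans_{P_\gamma}$. Positive premises then hold in $\trans_{P,S}$ by monotonicity; quantitative premises hold by independence from the relation. For a negative premise $\rho(t_l \ntrans[b_l])$, stratification condition (ii) forces any hypothetical $\rho(t_l)\trans[b_l]\pi$ to lie in a stratum strictly below $\beta$; since none was present in $\bigcup_{\gamma<\beta}\trans_{P_\gamma}$ and stratum consistency prevents such transitions from being added at strata $\gamma \ge \beta$, no counterpart exists in $\trans_{P,S}$.

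For the ``model'' direction, suppose a rule $r$ and proper substitution $\rho$ satisfy $\conc{\rho(r)}=\psi$ and $\trans_{P,S} \models \rho(\prem(r))$. Set $\beta = S(\psi)$. Negative premises in $\trans_{P,S}$ descend to $\bigcup_{\gamma<\beta}\trans_{P_\gamma}$ by monotonicity. Quantitative premises hold unconditionally. By stratification condition (i), each positive premise $\varphi_k = \rho(t_k \trans[a_k]\mu_k)$ has stratum $\le \beta$; those with stratum $<\beta$ lie in $\bigcup_{\gamma<\beta}\trans_{P_\gamma}$, and those with stratum exactly $\beta$ lie in $\trans_{P_{\beta,j_k}}$ for some $j_k \le \degPTSS(P)$. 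The task is to locate a single $j \le \degPTSS(P)$ that dominates all these $j_k$: then the defining clause of $\trans_{P_{\beta,j}}$ is met and $\psi$ is added at stage $(\beta,j)$, hence belongs to $\trans_{P,S}$.

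The \textbf{main obstacle} is exactly this cardinality bookkeeping at the last step. Since $|\{k \in K : S(\varphi_k) = \beta\}| \le |\pprem{r}| \le \degPTSS(r) \le \degPTSS(P)$ and $\degPTSS(P)$ is regular, the supremum $j^* = \sup_k j_k$ cannot be cofinal in $\degPTSS(P)$ unless the index set realizes its full cardinality; a careful case split (using the fact that $\degPTSS(r)$ is the \emph{smallest} regular cardinal bounding $|\pprem{r}|$) gives $j^* < \degPTSS(P)$, so $j = j^{*}+1 \le \degPTSS(P)$ works. Everything else reduces to bookkeeping: an outer transfinite induction on $\beta$ keeps the negative premises under control via the stratification, and the inner induction on $j \le \degPTSS(P)$ handles same-stratum positive premises.
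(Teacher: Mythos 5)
Your proof is correct and follows exactly the route the paper intends: the paper gives no explicit proof of Theorem~\ref{th:existence:supmodel}, deferring instead to Groote's Theorem~2.15 with the single remark that quantitative premises are harmless because their validity is independent of the transition relation --- which is precisely the two-direction, transfinite-induction argument you reconstruct (supportedness via the least stage $(\beta,j)$ at which $\psi$ appears, the model property via stratification conditions (i)/(ii) and stratum disjointness). The step you flag as the main obstacle --- bounding $j^{*}=\sup_k j_k$ strictly below the regular cardinal $\degPTSS(P)$ --- is indeed the only delicate point, and it is inherited verbatim from the non-probabilistic construction rather than being anything new in the probabilistic setting, so your treatment is at least as detailed as the paper's own.
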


\begin{theorem}\label{th:weakunicity:supmodel}%
  Let $P$ be a PTSS. For all stratifications $S$, $S'$ of $P$ it holds ${\trans_{P, S}} =
  {\trans_{P, S'}}$.
\end{theorem}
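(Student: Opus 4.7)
The plan is to prove the result by transfinite induction on the ordinal $\beta$, establishing simultaneously both inclusions of the restricted equality
\[
(\star_\beta): \quad \{\psi \mid \psi \in {\trans_{P, S}},\ S(\psi) < \beta\} = \{\psi \mid \psi \in {\trans_{P, S'}},\ S(\psi) < \beta\}.
\]
Taking $\beta$ larger than the range of $S$ then yields the theorem. The limit case is immediate from the inductive hypotheses at all smaller ordinals. The key ingredient for the successor step is that both ${\trans_{P,S}}$ and ${\trans_{P,S'}}$ are supported models of $P$ by Theorem~\ref{th:existence:supmodel}, so every transition in ${\trans_{P,S'}}$ arises as the conclusion of a closed rule instance whose premises hold in ${\trans_{P,S'}}$, and analogously for ${\trans_{P,S}}$. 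This is what enables ``unfolding'' ${\trans_{P,S'}}$ in the argument below.

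For the successor step at $\beta = \gamma + 1$, fix $\psi$ with $S(\psi) = \gamma$. To see $\psi \in {\trans_{P,S}} \implies \psi \in {\trans_{P,S'}}$, unroll $\psi \in {\trans_{P_\gamma}}$ to obtain a rule $r$, a proper substitution $\rho$, and a subindex $j$ witnessing membership. The positive and quantitative premises of $\rho(r)$ lie in $(\bigcup_{\gamma' < \gamma} {\trans_{P_{\gamma'}}}) \cup (\bigcup_{j' < j} {\trans_{P_{\gamma,j'}}})$; by $(\star_\gamma)$, together with an inner subinduction on $j$, these also hold in ${\trans_{P,S'}}$. For a negative premise $\rho(t_l)\ntrans[b_l]$, observe that any $\rho(t_l)\trans[b_l]\pi'$ must have $S$-level strictly less than $\gamma$ by the stratification property of $S$; so by $(\star_\gamma)$ its membership in ${\trans_{P,S}}$ and in ${\trans_{P,S'}}$ coincide. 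Since no such transition lies in ${\trans_{P,S}}$ (precisely because the negative premise is satisfied in the construction), none lies in ${\trans_{P,S'}}$. Hence all premises of $\rho(r)$ hold in ${\trans_{P,S'}}$ and, since ${\trans_{P,S'}}$ is a model of $P$, we conclude $\psi \in {\trans_{P,S'}}$.

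For the reverse inclusion, exploit that ${\trans_{P,S'}}$ is supported: $\psi = \conc{\rho(r)}$ for some rule $r$ and proper substitution $\rho$ with all premises holding in ${\trans_{P,S'}}$. Positive premises have $S$-level at most $\gamma$; those strictly below $\gamma$ lie in ${\trans_{P,S}}$ by $(\star_\gamma)$. Premises of $S$-level exactly $\gamma$ are placed by an inner subinduction on $j$ into some iterate ${\trans_{P_{\gamma, j_k}}}$. Setting $j = \sup_k (j_k + 1)$, the regularity of $\degPTSS(P)$ together with $|\pprem{r}| \leq \degPTSS(P)$ forces $j \leq \degPTSS(P)$, which places $\psi$ in ${\trans_{P_{\gamma, j}}} \subseteq {\trans_{P_\gamma}}$ and hence in ${\trans_{P,S}}$. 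Negative premises are handled symmetrically, again using $(\star_\gamma)$ and the stratification property of $S$.

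The principal obstacle is the coordination between the outer transfinite induction on $\beta$ and the inner subinduction on $j$, which becomes necessary when positive premises of the unfolded rule lie in the same stratum as the conclusion. It is here that both directions of $(\star_\beta)$ must be available at each stage---so the induction is genuinely simultaneous, rather than reducing to symmetry---and that the regular-cardinal bound $\degPTSS(P)$ is essential for placing each witness rule instance at a valid subindex, matching what is already used in the definition of ${\trans_{P,S}}$ and in the non-probabilistic counterpart~\cite{Groote93}.
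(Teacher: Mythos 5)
Your overall plan---a simultaneous transfinite induction on the $S$-strata, with an inner induction on the iterate index $j$---is in the spirit of the proof the paper points to (Groote's Lemma~2.16), and your first direction is sound: unrolling $\psi\in{\trans_{P_{\gamma,j}}}$, using $(\star_\gamma)$ for premises of lower $S$-level and for the denials of negative premises, and then invoking that ${\trans_{P,S'}}$ is a \emph{model} does place $\psi$ in ${\trans_{P,S'}}$.

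The reverse inclusion, however, has a genuine gap. You use only the fact that ${\trans_{P,S'}}$ is a \emph{supported} model, and your ``inner subinduction on $j$'' has no decreasing measure: a premise of $S$-level exactly $\gamma$ is again just some element of ${\trans_{P,S'}}$ of $S$-level $\gamma$, and unrolling its support may lead straight back to $\psi$. Concretely, take the single rule $\frac{f\trans[a]\mu}{f\trans[a]\delta_f}$ with the constant stratification $S\equiv 0$. Then $M=\{f\trans[a]\delta_f\}$ is a supported model (the transition supports itself), it satisfies every property your argument invokes ($(\star_0)$ is vacuous, there are no negative premises, the unique positive premise has $S$-level $0$ and ``is supported in $M$''), yet ${\trans_{P,S}}=\emptyset$. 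So an argument that derives ${\trans_{P,S'}}\subseteq{\trans_{P,S}}$ from supportedness plus the inductive hypothesis at lower $S$-levels cannot be correct. The missing ingredient is the \emph{stratified construction} of ${\trans_{P,S'}}$ itself: every $\chi\in{\trans_{P,S'}}$ sits at a well-defined stage ${\trans_{P_{\beta',j'}}}$ of the $S'$-based construction, its witnessing rule instance has positive premises at strictly earlier stages, and it is transfinite induction on that stage $(\beta',j')$ (interleaved with your first direction so as to transfer negative premises) that makes the reverse inclusion go through; only then does your regularity argument for choosing $j=\sup_k(j_k+1)\leq\degPTSS(P)$ have well-defined indices $j_k$ to work with.
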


\begin{theorem}\label{th:strongunicity:supmodel}%
  Let $P$ be a PTSS with a strict stratification $S$. Then $\trans_{P, S}$ is the only supported model of $P$.
\end{theorem}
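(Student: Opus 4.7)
The plan is to fix any supported model $\trans'$ of $P$ and prove by transfinite induction on the stratum index $\beta < \alpha$ that
${\trans'} \cap S_\beta = {\trans_{P,S}} \cap S_\beta$.
Since $\trans_{P,S}$ is a supported model of $P$ by Theorem~\ref{th:existence:supmodel} and $\{S_\beta\}_{\beta<\alpha}$ partitions $\PTr$, this equality across all strata yields $\trans' = \trans_{P,S}$, giving uniqueness.

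For the inductive step, assume ${\trans'} \cap S_\gamma = {\trans_{P,S}} \cap S_\gamma$ for all $\gamma < \beta$. I would prove the two inclusions separately at stratum $\beta$. For $\psi \in {\trans'} \cap S_\beta$: since $\trans'$ is supported, pick a rule $r \in R$ and proper substitution $\rho$ with $\conc{\rho(r)} = \psi$ and ${\trans'} \models \rho(\prem{r})$. Strictness of $S$ forces $S(\rho(t_k \trans[a_k]\mu_k)) < \beta$ for every positive premise, so by the induction hypothesis these instances belong to $\bigcup_{\gamma<\beta}\trans_{P_\gamma}$. Likewise, strictness gives $S(\rho(t_l \trans[b_l]\mu)) < \beta$ for every $\mu$ and every negative premise $t_l \ntrans[b_l]$; the IH then transfers the absence of these instances from $\trans'$ to $\bigcup_{\gamma<\beta}\trans_{P_\gamma}$. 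Quantitative premises hold independently of the transition relation. Hence the conditions of Def.~\ref{def:assoc_with} are met already at the first iteration $j=0$ of stratum $\beta$, so $\psi \in \trans_{P_{\beta,0}} \subseteq \trans_{P,S}$.

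For the reverse inclusion, let $\psi \in \trans_{P,S} \cap S_\beta$. Then $\psi \in \trans_{P_\beta}$, so some rule $r$ and proper substitution $\rho$ witness $\conc{\rho(r)} = \psi$ with $\bigcup_{\gamma<\beta}\trans_{P_\gamma} \cup \bigcup_{j'<j}\trans_{P_{\beta,j'}}$ validating the positive and quantitative premises and $\bigcup_{\gamma<\beta}\trans_{P_\gamma}$ validating the negative premises. By strictness every positive premise instance actually lies in $\bigcup_{\gamma<\beta}\trans_{P_\gamma}$ (the $j'<j$ part is unused), and every potential witness to the falsity of a negative premise would lie in strictly lower strata as well. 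The induction hypothesis then transfers all of these positive inclusions and negative non-inclusions to $\trans'$, while quantitative premises require no transfer. Since $\trans'$ is a model of $P$, the conclusion $\psi$ must be in $\trans'$.

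The only mildly delicate point is the treatment of negative premises: one must ensure that $\rho(t_l \ntrans[b_l])$ really holds in $\trans'$, i.e.\ that no distribution $\pi$ provides a positive witness $\rho(t_l) \trans[b_l]\pi$ in $\trans'$. Strictness is exactly what guarantees that the stratum of every such putative witness is $<\beta$, making the induction hypothesis applicable. Quantitative and positive-premise strictness are comparatively routine, and distribution terms play no role here because satisfaction of a quantitative literal is model-independent. This matches the structure of Groote's argument~\cite{Groote93} with only notational changes for the probabilistic setting.
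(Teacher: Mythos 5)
Your proof is correct and follows exactly the route the paper intends: the paper gives no explicit proof but states that it "follows closely" Groote's Theorem 2.18, which is precisely your transfinite induction over strata, using strictness to push positive-premise instances (and all positive instances of negative premises) into strictly lower strata where the induction hypothesis identifies the two models, with quantitative premises handled trivially because their validity is model-independent. The only cosmetic remark is that the strict inequality for negative premises is already part of every stratification (condition (ii) of Def.~\ref{def:stratification}), not an extra consequence of strictness, which is only needed for the positive premises.
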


\subsection{Proof structures}\label{sec:proofStructure}

 In this section we introduce the notion of \emph{provable rules from a PTSS}.
 To define this notion we use \emph{proof structures} \cite{FokkinkvanGlabbeek96}.
 A proof structure is like a derivation tree where the rules do not share variable names. 
 The connection between the conclusion of a rule $r$ 
 and a premise $\psi$ in other rule is represented by a mapping $\phi$ 
 from rules to literals, i.e. $\phi(r) = \psi$.
 A substitution \emph{matches} with a proof structure if both 
 the conclusion and the premise related by $\phi$ are mapped to the same literal. 
 Thus, matching substitutions translate a proof structure into an actual derivation tree.  As a consequence, a matching substitution applied to a proof structure defines a \emph{provable rule} in which the premises
 are the leaves of the derivation tree and the conclusion is the root.
 The absence of shared variables allows to 
 define substitution on proof structures avoiding name clashes.
 Provable rules will be used in the following way through the paper:  
 given a PTSS $P$ we take the set of provable rules from $P$ 
 with a particular format, these rules will be used to define a 
 a new PTSS $P'$, then we show that $P$ and $P'$ derive the same 
 PTS.

 A PTSS is \emph{small} if for each of its rules the cardinality of its collection of premises does not exceed the cardinality of the set of variables $V$. Small PTSS ensure that there are enough variables to construct the proof structures.

\begin{definition}
 \label{def:proofStructure}
 A \emph{proof structure} is a tuple $\tuple{B, r, \phi}$ such that 
\begin{itemize}
 \item $r \in B$ and $B$ is a set of transition rules which do not have any variables in common,
 \item $\phi$ is an injective mapping from $B\setminus\{r\}$ to the collection of positive premises in $B$, 
          such that each chain $b_0, b_1, \ldots$ in $B$, with $\phi(b_{i+1})$ is a premise of $b_i$,
          is a finite chain. 
\end{itemize}
Let $\rtop(B,r,\phi)$ be the set of all premises of rules in $B$ that are outside the image of $\phi$. Let  $\qtop(B,r,\phi)$ be the set of all quantitative premises in  $\rtop(B,r,\phi)$.
\end{definition}

We introduce a partial well-order $<$ on proof structures to allow inductive reasoning. Define the partial order $<$ by
$(B', r', \phi') < (B,r,\phi)$ iff 
  $B' \subset B$, 
  $\phi'$ is $\phi$ restricted to $B' \setminus \{r'\}$,
  $\rtop(B', r', \phi') \subseteq \rtop(B,r,\phi)$,
  and there is a chain $b_0, b_1, \ldots, b_n$ with 
  $b_0=r$, $b_n=r'$, $n>0$ and $\phi(b_{i+1})$ is a premise of $b_i$.

A substitution $\sigma$ \emph{matches} with the proof structure $(B,r,\phi)$ if $\sigma(\conc{b}) = \sigma(\phi(b))$ for every $b \in B\setminus\{r\}$.

\begin{definition}\label{def:provable}
 Let $H = H_p \cup H_n \cup H_q$ a set of literals s.t. $H_p$, (resp. $H_n$ and $H_q$) is a set of positive (resp. negative and open quantitative) literals.
 A rule $\textstyle{\dedrule{H}{c}}$ is provable from a small PTSS $P = (\Sigma, A, R)$, 
 notation $P \proves \dedrule{H}{c} $, if $c \in H$ or there is a proof structure $(B,r,\phi)$ such that
 each rule in $B$ is in $R$ modulo $\alpha$-conversion and there is a substitution $\sigma$ that matches
 with $(B,r,\phi)$  such that:
\begin{itemize}
 \item $\sigma(\rtop(B, r, \phi) - \qtop(B, r, \phi)) \subseteq H$, 
 \item if $\psi \in \sigma(\qtop(B, r, \phi))$ is a closed quantitative premise then $\psi$ holds,
          otherwise $\psi \in H_q$ and 
 \item $\sigma(\conc{r})=c$.          
\end{itemize}
\end{definition}

 Note that closed quantitative literals do not need to be included in the premise of a provable rule because their validity can be decided without further instantiation. Notice additionally that all negative literals of premises of rules in $B$ are included in $H$ and thus no negative literals can be derived.
%

  \begin{figure}
\begin{minipage}{0.59 \linewidth}
 \centering\small
\begin{tikzpicture}[node distance=1.8cm]
\node (r1) at (0.7,0){$a \trans[a] \delta_{a}$}; 
\node (r2) at (0,-1.5)%
          {$\ddedrule{ \{y_1 \trans[a] \mu_{y_1}\mid y_1 \in Y_1\} \quad s \trans[a] \mu_s \quad \mu_s(Y_1) \geq 1}%
                           {s + t \trans[a] \mu_s}$}; 
\node (r3) at (4.8, -1.5)%
           {$\ddedrule{u \trans[\overline{a}] \mu_u}%
                           {u + v \trans[\overline{a}] \mu_u }$};
\node (r4) at (2.2, -3.5)%
           {$\ddedrule{\{y_2 \trans[b] \mu_{y_2}\mid y_2 \in Y_2\} \quad w \trans[a] \mu_w \quad  x \trans[\overline{a}] \mu_x \quad ((\mu_w \parallel \mu_x) (Y_2) \geq 0.5)}%
                           {w\parallel x \trans[\tau] \mu_w \parallel \mu_x} $};
 \node (r5) at (2, -5.3)%
            {$\ddedrule{ \{y_3 \trans[\textit{ok}] \mu_{y_3} \mid y_3 \in Y_3\} \quad y \parallel z \trans[\tau] \mu_{\parallel} \quad  (\mu_{\parallel} (Y_3) \geq 0.2)}%
               {y\parallel z \trans[\textit{ok}]  \mu_{y'_3}}\;\;{y'_3 \in Y_3}$};
\draw [->] (0.55,-0.3) -- (0.55,-0.8); 
\draw [->] (0.1,-2.1) -- (1.1,-2.9); 
\draw [->] (4.8,-2.1) -- (3.3,-3); 
\draw [->] (2,-4.1) -- (2,-4.7); 
\end{tikzpicture}
\end{minipage}
\hfill
\begin{minipage}{0.33\linewidth}
 \begin{align*}
  \sigma(s) & = a &  \sigma(\mu_s) & = \delta_a\\
  \sigma(y_1) & = a & \text{with } y_1 & \in Y_1 \\
  \sigma(\mu_{y_1}) & = \delta_a & \text{with } y_1 & \in Y_1 \\
  \sigma(w) & = a + t & \sigma(\mu_w) & = \delta_{a}\\
   \sigma(x) & = u + v & \sigma(\mu_x) & = \mu_u \\
   \sigma(y) & = a + t & \sigma(\mu_{\parallel}) & = \delta_a \parallel \mu_u\\
   \sigma(z) & = u + v & & \\
 \end{align*}
\end{minipage}
\caption{An example of proof structure. (See Example~\ref{ex:proofStructure})}
\label{fig:proofStructure}
\end{figure}
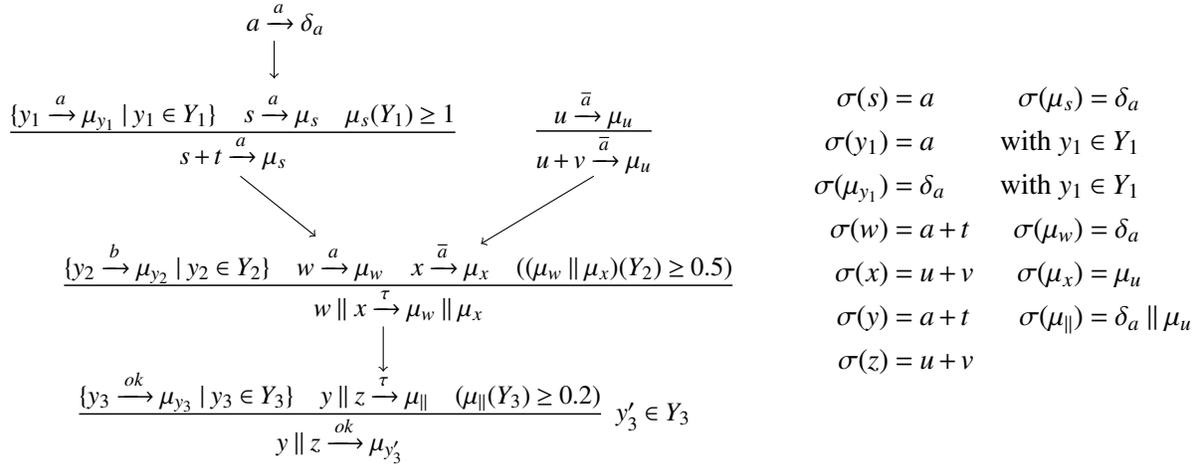

\begin{example} \label{ex:proofStructure}
 Let $P = \tuple{\Sigma, A, R}$ be a PTSS with
 $\{a,{+},{\parallel}\}\subseteq\Sigma$,
 $\{a,\overline{a},b,\tau,ok\}\subseteq A$ and all rules in
 Fig.~\ref{fig:proofStructure} appear in $R$.
 Let $(B,r,\phi)$ the proof structure of
 Figure~\ref{fig:proofStructure} where mapping $\phi$ is represented
 by the arrows.
 Let $\sigma$ be the substitution defined in
 Fig.~\ref{fig:proofStructure}, with $\sigma(\zeta)=\zeta$ for any
 other (term or distribution) variable not specified in the figure.
%
%
 Then the following rule is provable from $P$:
 \begin{equation}\label{eq:ex-provable-rule}
 \ddedrule{u \trans[\overline{a}] \mu_u \quad%
                  \{y_2 \trans[b] \mu_{y_2}\mid y_2 \in Y_2\} \quad ((\delta_a \parallel \mu_u) (Y_2) \geq 0.5) \quad %
                   \{y_3 \trans[\textit{ok}] \mu_{y_3} \mid y_3 \in Y_3\} \quad  ((\delta_a \parallel \mu_u)  (Y_3) \geq 0.2) }
                    {(a+t) \parallel (u + v) \trans[\textit{ok}] \mu_{y'_3} }
 \end{equation}
 Both in Fig.~\ref{fig:proofStructure} and in the above rule we used
 shorthand notations for the different distribution terms.  We write
 $(\mu_w \parallel \mu_x)$ and $(\delta_a \parallel \mu_u)$ instead of
 $(\mu_w \times \mu_x)\circ{\parallel}^{-1}$ and $( (\delta_{()} \circ
 k_a^{-1}) \times \mu_x)\circ{\parallel}^{-1}$, with $k_a(())=a$,
 respectively (trivial summations are omitted).

 Since $\sigma(y_1) = a$ for all $y_1\in y_1$ and $\sigma(\mu_s) =
 \delta_a$, then $\sigma(\mu_s(Y_1) \geq 1) = (\delta_{a}(\{a\}) \geq
 1)$ is closed, and moreover, it holds.  As a consequence, it does not
 appear as a premise of rule~(\ref{eq:ex-provable-rule}).
 Also notice that $\mu_{\parallel}$ was substituted by $(\delta_a
 \parallel \mu_u)$.  This is why we needed to upgrade the format
 of~\cite{DL-fossacs12} to consider the more complex distribution
 terms on the quantitative premises instead of only distribution
 variables.
%
\end{example}

The set of all provable rules from a PTSS can be alternatively defined
in a recursive manner without using the notion of proof structure
(Def.~\ref{def:provClosure}). We prove that both definitions are
equivalent in Lemma~\ref{lemma:closure}.


\begin{definition}\label{def:provClosure}
 The \emph{provable closure} of a PTSS $P = \tuple{\Sigma, A, R}$ 
  is the smallest set $R^\proves$ of rules such that
\begin{itemize}
 \item if $c \in H$ then $\dedrule{H}{c} \in R^\proves$,
 \item if $r \in R$ and there is a substitution $\sigma$ such that 
 \begin{itemize}
    \item for all $p \in \pprem{r} \cup \nprem{r}$ it holds $\dedrule{H}{\sigma(p)} \in R^\proves$  and
    \item for all $p \in \qprem{r}$ if $\sigma(p)$ is not a closed literal then $\dedrule{H}{\sigma(p)} \in R^\proves$, otherwise $\sigma(p)$ holds
 \end{itemize}   
  then  $\dedrule{H}{\sigma(conc(r))} \in R^\proves$.
\end{itemize}
\end{definition}

\begin{lemma}\label{lemma:closure}
 A rule $\dedrule{H}{c}$ is provable from a small PTSS $P = \tuple{\Sigma, A, R}$ iff
 $\dedrule{H}{c} \in R^\proves$.
\end{lemma}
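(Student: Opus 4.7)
The plan is to prove the two directions separately, each by an appropriate induction. For $(\Rightarrow)$, I will induct on the well-founded order $<$ on proof structures defined right after Def.~\ref{def:proofStructure}. For $(\Leftarrow)$, I will induct on the recursive generation of $R^\proves$ from Def.~\ref{def:provClosure}. The equivalence is analogous to the standard correspondence between ``derivation tree'' and ``least-fixed-point'' views of provability, with additional bookkeeping for the quantitative premises.

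For $(\Rightarrow)$, if $c\in H$ then the first clause of Def.~\ref{def:provClosure} gives $\dedrule{H}{c}\in R^\proves$ immediately. Otherwise a proof structure $(B,r,\phi)$ with matching substitution $\sigma$ witnesses provability. Applying the inductive hypothesis, I would establish $\dedrule{H}{\sigma(p)}\in R^\proves$ for every premise $p$ of $r$: (a) if $p$ is a positive premise with $p=\phi(b)$ for some $b\in B\setminus\{r\}$, then $(B_b,b,\phi|_{B_b\setminus\{b\}})$, where $B_b=\{b'\in B\mid \text{there is a chain from } b \text{ to } b'\}\cup\{b\}$, is strictly smaller than $(B,r,\phi)$, so by IH $\dedrule{H}{\sigma(\conc{b})}\in R^\proves$, and matching gives $\sigma(\conc{b})=\sigma(\phi(b))=\sigma(p)$; (b) if $p$ is positive but not in the image of $\phi$, or $p$ is negative, then $\sigma(p)\in H$, hence $\dedrule{H}{\sigma(p)}\in R^\proves$ by the axiom clause; (c) quantitative premises are handled exactly as prescribed by Def.~\ref{def:provClosure}. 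The second clause of Def.~\ref{def:provClosure} then yields $\dedrule{H}{\sigma(\conc{r})}=\dedrule{H}{c}\in R^\proves$.

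For $(\Leftarrow)$, the base case $c\in H$ is immediate from the first clause of Def.~\ref{def:provable}. For the inductive step, assume $\dedrule{H}{\sigma(\conc{r})}\in R^\proves$ was generated from $r\in R$ and substitution $\sigma$, with $\dedrule{H}{\sigma(p)}\in R^\proves$ for every positive, negative, or non-closed quantitative premise $p$ of $r$. By IH, each such sub-rule is provable in the sense of Def.~\ref{def:provable}, either trivially because $\sigma(p)\in H$, or via some proof structure $(B_p,r_p,\phi_p)$ with matching substitution $\sigma_p$ such that $\sigma_p(\conc{r_p})=\sigma(p)$. Using the smallness hypothesis to supply enough fresh variables, I $\alpha$-rename each $B_p$ (and an $\alpha$-variant of $r$) so that all the rule sets are pairwise variable-disjoint. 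Define $B=\{r\}\cup\bigcup_p B_p$ (union taken only over those positive $p$ derived non-axiomatically), extend each $\phi_p$ by mapping $r_p\mapsto p$, and let $\sigma^*$ agree with $\sigma$ on $\Var(r)$ and with each $\sigma_p$ on the variables of $B_p$. One then verifies that $(B,r,\phi)$ is a proof structure (injectivity of $\phi$ and finiteness of chains follow from variable-disjointness and the well-founded construction), that $\sigma^*$ matches, and that $\rtop$ is mapped into $H$, with closed quantitative premises holding.

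The main obstacle is the backward direction's combination step: verifying that the sub-structures glue into a single valid proof structure. This requires (i) $\alpha$-renaming of potentially infinitely many sub-structures, which is where the smallness hypothesis on $P$ is essential so that enough fresh term and distribution variables exist; (ii) checking that the finite-chain condition on $\phi$ is preserved when prepending $r\to r_p$ to chains inside each $B_p$, which follows from the well-foundedness of the inductive construction of $R^\proves$; and (iii) confirming that the combined substitution $\sigma^*$ is well-defined and simultaneously matching for the whole structure, which is guaranteed by the disjointness achieved via renaming. The handling of quantitative premises is routine since their truth depends only on the substitution, not on $\trans$, and Def.~\ref{def:provable} and Def.~\ref{def:provClosure} both discharge closed ones that hold while keeping open ones in $H$.
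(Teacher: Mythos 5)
Your proposal is correct and follows essentially the same route as the paper's own proof: the forward direction by induction on the well-founded order $<$ on proof structures (peeling off the sub-structures rooted at each $b$ with $\phi(b)$ a premise of $r$, and discharging the remaining premises via the axiom clause or the closed-quantitative case), and the backward direction by induction on the generation of $R^\proves$, gluing the witnessing sub-structures into one proof structure after $\alpha$-renaming, with smallness supplying enough fresh variables. If anything, your write-up is more explicit than the paper's (e.g.\ in defining the sub-structure $B_b$ and in separating out axiomatically derived and negative premises), but no new idea is introduced or missing.
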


The following lemma is an immediate consequence of Def.~\ref{def:provClosure}.
\begin{lemma}\label{lemma:provability}
 Let $P$ and $P'$ be two PTSS such that all rules in $P'$ are provable from $P$. Then all rules provable from $P'$ are also provable from $P$.
\end{lemma}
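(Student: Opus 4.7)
The plan is to proceed by induction on the recursive construction of the provable closure $(R')^\vdash$ given in Def.~\ref{def:provClosure}, using Lemma~\ref{lemma:closure} to identify provability with membership in the provable closure. It then suffices to show $(R')^\vdash \subseteq R^\vdash$ under the hypothesis $R' \subseteq R^\vdash$. The base case is immediate: if $c \in H$, then $\dedrule{H}{c} \in R^\vdash$ by the first clause of Def.~\ref{def:provClosure} applied to $P$ itself.

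For the inductive step, suppose $\dedrule{H}{\sigma(\conc{r})} \in (R')^\vdash$ was obtained from some $r \in R'$ and substitution $\sigma$ such that $\dedrule{H}{\sigma(p)} \in (R')^\vdash$ for every positive or negative premise $p$ of $r$ and for every quantitative premise $p$ whose instance $\sigma(p)$ is not closed, while $\sigma(p)$ holds whenever it is closed. By hypothesis $r \in R^\vdash$, and by the induction hypothesis each $\dedrule{H}{\sigma(p)} \in R^\vdash$. The goal is to glue these derivations together to obtain $\dedrule{H}{\sigma(\conc{r})} \in R^\vdash$. I would isolate this as a \emph{cut} lemma:

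\begin{quote}
For every $r' \in R^\vdash$ of the form $\dedrule{H'}{c'}$ and every substitution $\tau$ such that $\dedrule{H}{\tau(h)} \in R^\vdash$ for every non-closed premise $h$ of $r'$, with $\tau(h)$ holding for every closed quantitative $h$, one has $\dedrule{H}{\tau(c')} \in R^\vdash$.
\end{quote}

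The inductive step then follows by instantiating the cut lemma with $r' := r$ and $\tau := \sigma$. I would prove the cut lemma by a secondary induction on how $r'$ was introduced into $R^\vdash$. If it came from the first clause of Def.~\ref{def:provClosure}, then $c' \in H'$, so $\tau(c') = \tau(h)$ for some $h \in H'$, and the hypothesis directly yields $\dedrule{H}{\tau(c')} \in R^\vdash$. If it came from the second clause via some $r'' \in R$ and substitution $\tau''$ with $\tau''(\conc{r''}) = c'$, one composes $\tau$ with $\tau''$ (after an $\alpha$-renaming to avoid variable clashes between $r''$ and anything already in scope), applies the secondary induction hypothesis on each premise $p''$ of $r''$ to get $\dedrule{H}{(\tau \circ \tau'')(p'')} \in R^\vdash$, and then invokes the second clause of $R^\vdash$ on $r''$ with substitution $\tau \circ \tau''$ to conclude $\dedrule{H}{(\tau \circ \tau'')(\conc{r''})} = \dedrule{H}{\tau(c')} \in R^\vdash$.

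The main obstacle is the bookkeeping in the cut lemma: one must verify that closed quantitative premises remain closed (and true) after composition, that originally open quantitative premises are correctly dispatched to the induction hypothesis when they become closed under $\tau \circ \tau''$, and that the variable-hygiene condition of Def.~\ref{def:proofStructure} is preserved, which is precisely what the $\alpha$-conversion clause of Def.~\ref{def:provable} makes available. Once these routine-but-tedious checks are handled, the result really is an immediate consequence of Def.~\ref{def:provClosure}, as claimed.
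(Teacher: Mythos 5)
Your proof is correct. The paper itself offers no argument here --- it simply declares the lemma ``an immediate consequence of Def.~\ref{def:provClosure}'' --- so there is nothing to diverge from; what you have done is make precise why the consequence is not literally immediate: the second clause of Def.~\ref{def:provClosure} only fires on rules $r \in R$, not on arbitrary members of $R^\proves$, so one genuinely needs the substitution-closure (cut) property you isolate, proved by the secondary induction on how a rule entered $R^\proves$. Your handling of the delicate points is also right: conclusions entering via the first clause are covered by the hypothesis on premises, negative and open quantitative literals can only enter $R^\proves$ through the first clause and hence land in $H$, and quantitative premises that become closed under the composed substitution are dispatched to the ``otherwise $\sigma(p)$ holds'' branch. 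The only cosmetic remark is that the $\alpha$-renaming concern is largely moot in the closure-based formulation (it matters for the proof-structure formulation of Def.~\ref{def:provable}, not for Def.~\ref{def:provClosure}, where the substitution absorbs any renaming), and that provability is only defined for small PTSSs, an implicit standing assumption of the paper rather than a gap in your argument.
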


\subsection{Well-supported proofs}

In the following we adapt the notion of \emph{well-supported proof}~\cite{vanGlabbeek04} to PTSS.
In the following, we say that literals $t \trans[a] \pi$ and $t \ntrans[a]$ \emph{deny each other}.

\begin{definition}
 \label{def:ws-proof}
 A \emph{well-supported proof} of a closed literal $\psi$ from a PTSS
 $P = (\Sigma, \Act, R)$ is a well-founded, upwardly branching tree of which the nodes
 are labeled by positive or negative literals, such that
\begin{itemize}
 \item the root is labeled by $\psi$, and 
 \item if $\chi$ is the label of the node $q$ and $\{\chi_k \mid k \in K\}$ is the set of labels of 
  the nodes directly above $q$, then:
\begin{itemize}
 \item if $\chi$ is a positive literal then there is a rule $r \in R$ and a closed proper substitution $\rho$ such that 
          $\{\chi_k \mid k \in K\}=\pprem{\rho(r)}\cup\nprem{\rho(r)}$, the quantitative premises $\qprem{\rho(r)}$ are valid and  
          $\conc{\rho(r)} = \chi$,
 \item if $\chi$ is a negative premise then for all $P \vdash \dedrule{N}{\phi}$ with $\phi$ a closed literal denying $\chi$, a 
       literal in $\{\chi_k \mid k \in K\}$ denies a literal in $N$.
\end{itemize}
\end{itemize}
A literal $\psi$ is \emph{ws-provable}, notation $P \wsproves \psi$, if there is a well-supported proof of
$\psi$ from $P$. A literal $\psi$ is \emph{ws-refutable} if there is a literal $\psi'$ ws-provable from $P$ and $\psi$ denies $\psi'$.
\end{definition}

Notice that nodes in the proof tree of Def.~\ref{def:ws-proof} are not
quantitative literals.  This is due to the fact that the validity of
closed quantitative literals is already known.  In fact, the
definition requires that all quantitative literal introduced by a rule
$r$ should become valid after substitution.
%

We say that a PTSS $P$ is \emph{complete} if for all closed literal $t \ntrans[a]$, $P \vdash_{ws} t\trans[a] \pi$ for some distribution $\pi$ or $P \vdash_{ws} t\ntrans[a]$.
In addition, $P$ is \emph{consistent} if there are no pair of literals
derived from $p$ that deny each other.
We will focus only on complete PTSSs. 
\remarkDG{Main theorem does not require completeness, so for which part and why is completeness required?}
\remarkPRD{You are right and I thought of it, but since we are defining in this paper the concept of a model based on well-supported proof maybe it is ok to keep it.  Anyway I am not really aiming to push this.  If space is needed, this would be a candidate to remove}
The transition relation based on well-supported proofs associated to 
a (complete) PTSS $P$ (denoted by ${\trans_{ws}}$) is the set of ws-provable transitions
of $P$.

\begin{lemma}\label{lemma:completeThenCons}
 Let $P$ be a PTSS. If $P$ is complete then it is also consistent.
\end{lemma}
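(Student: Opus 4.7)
The plan is to argue by contradiction, using descent on the ordinal rank of ws-proof trees. Suppose $P$ is complete but not consistent: then there exist ws-provable closed literals $\psi, \psi'$ that deny each other. Since denying literals have opposite polarities, without loss of generality I take $\psi = t \trans[a] \pi$ and $\psi' = t \ntrans[a]$. Each ws-proof, being a well-founded upwardly branching tree, carries an ordinal rank (leaves rank $0$; each internal node rank equal to the supremum of the successor ranks of its children). By well-foundedness of the ordinals I fix a counter-pair $(\psi, \psi')$ minimizing the maximum of the ranks of the two proofs.

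Next I unfold the ws-proof of $\psi$. By Def.~\ref{def:ws-proof} for positive literals, the root is justified by a rule $r \in R$ and a closed proper substitution $\rho$ with $\conc{\rho(r)} = \psi$, whose children's labels are $N := \pprem{\rho(r)} \cup \nprem{\rho(r)}$ (each with its own ws-subproof) and whose closed quantitative premises $\qprem{\rho(r)}$ all hold. The single-rule proof structure $(\{r\}, r, \emptyset)$ instantiated by $\rho$ then witnesses the provable rule $P \proves \dedrule{N}{\psi}$, using that Def.~\ref{def:provable} drops closed, valid quantitative premises from the hypotheses of a provable rule.

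I then invoke the ws-proof clause for the negative literal $\psi'$ on this very rule. Since $\psi$ denies $\psi'$, the provable rule $\dedrule{N}{\psi}$ satisfies the precondition of that clause, so some label $\chi_k$ of a child of the $\psi'$-root denies some literal $\eta \in N$. Both $\chi_k$ and $\eta$ are ws-provable: $\chi_k$ via the proper subtree of the ws-proof of $\psi'$ rooted at that child, and $\eta$ via the proper subtree of the ws-proof of $\psi$ rooted at the child labeled $\eta$. Hence their ws-proofs have ranks strictly smaller than those of the $\psi'$- and $\psi$-proofs respectively, so the new mutually denying pair (either $(\chi_k, \eta)$ or $(\eta, \chi_k)$, depending on polarity) has strictly smaller max-rank, contradicting the minimality of the chosen counter-pair.

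The main obstacle is setting up the descent in the presence of transfinite rank and possibly infinite branching in ws-proofs; it must be phrased on ordinal ranks using the well-ordering of the ordinals rather than on any natural-number measure. A secondary technical point is reading off the provable rule $\dedrule{N}{\psi}$ cleanly from the positive-literal clause of the ws-proof, which requires lining up the ws-proof requirement that quantitative premises are valid with the omission clause in Def.~\ref{def:provable}. I note in passing that completeness of $P$ does not visibly enter this argument; consistency appears to hold for every PTSS, with completeness serving only as a convenient auxiliary hypothesis.
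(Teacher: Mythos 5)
Your proof is correct; the paper gives no explicit proof of this lemma (it implicitly defers to the non-probabilistic counterpart in van Glabbeek's work), and your ordinal-rank descent on a minimal mutually-denying pair, combined with reading off the provable rule $\dedrule{N}{\psi}$ from the positive-literal clause and feeding it to the negative-literal clause, is exactly the expected adaptation of that argument. Your closing observation is also right: completeness is never used, so consistency of $\wsproves$ holds for every PTSS and the stated hypothesis is redundant.
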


 Lemma~\ref{lemma:completeThenCons} allows us to show that, for any stratifiable PTSS, the model obtained using well-supported proofs coincides with the model obtained through stratification. 
 Notice that this does not imply that the methods are equivalent: it could be the case
 that a PTSS is complete but not stratifiable (see \cite[Prop. 27]{vanGlabbeek04}).
 
\begin{lemma}\label{lemma:WSPsubsumesStra}
 Let $P$ be a PTSS with stratification $S$ and $\psi$ a positive or negative literal, 
 then  $\psi \in {\trans_{ws}}$ iff ${\trans_{P, S}} \models \psi$.
\end{lemma}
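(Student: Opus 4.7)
The plan is to prove both directions of the biconditional. Direction~1, $P \wsproves \psi \implies \trans_{P,S} \models \psi$, will go by induction on the ws-proof tree of $\psi$ and exploit that $\trans_{P,S}$ is a supported model (Theorem~\ref{th:existence:supmodel}). Direction~2, $\trans_{P,S} \models \psi \implies P \wsproves \psi$, splits into a positive subcase (by transfinite induction on strata and the iteration index of Def.~\ref{def:assoc_with}) and a negative subcase (which will fall out cleanly from Direction~1 together with the standing completeness assumption for this subsection).

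For Direction~1 with positive $\psi$: the root applies some rule $r$ with proper substitution $\rho$, its positive and negative premises are ws-provable from children and hence hold in $\trans_{P,S}$ by the inductive hypothesis, and its quantitative premises are valid by Def.~\ref{def:ws-proof}; the supported model property then puts $\psi = \conc{\rho(r)}$ into $\trans_{P,S}$. For Direction~1 with negative $\psi = t \ntrans[a]$: I assume for contradiction that $t \trans[a] \pi \in \trans_{P,S}$ for some $\pi$. The model property hands me a rule $r \in R$ and proper closing substitution $\rho$ whose premises all hold in $\trans_{P,S}$, which packages (via the single-node proof structure $\tuple{\{r\},r,\emptyset}$) as a provable rule $P \vdash \frac{N}{t \trans[a] \pi}$ with $N = \rho(\pprem{r} \cup \nprem{r})$ satisfied by $\trans_{P,S}$. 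Def.~\ref{def:ws-proof} then forces some child $\chi_k$ of the root of the ws-proof of $\psi$ to deny some $\nu \in N$; by IH $\chi_k$ also holds in $\trans_{P,S}$, but $\chi_k$ and $\nu$ are two literals denying each other and cannot both hold, a contradiction.

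For Direction~2 with negative $\psi = t \ntrans[a]$: completeness of $P$ gives either $P \wsproves t \trans[a] \pi$ for some $\pi$ or $P \wsproves t \ntrans[a]$. In the first case, Direction~1 puts $t \trans[a] \pi$ in $\trans_{P,S}$, contradicting $\trans_{P,S} \models t \ntrans[a]$; so the second holds. For Direction~2 with positive $\psi$: I do transfinite induction on the lexicographic pair $(S(\psi), j)$, where $j$ is the least index with $\psi \in \trans_{P_{S(\psi),j}}$. From Def.~\ref{def:assoc_with} I extract a witness $(r, \rho)$ whose positive and quantitative premises lie at $(\gamma, \cdot)$ with $\gamma < S(\psi)$ or at $(S(\psi), j')$ with $j' < j$, and whose negative premises lie in strictly earlier strata; the IH (used on the positive subcase for positive premises and on the already-established negative subcase for negative premises) produces ws-proofs of every non-quantitative premise, which I assemble into a ws-proof of $\psi$ rooted at $\rho(r)$. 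Quantitative premises are directly valid because they hold in $\trans_{P,S}$.

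The main obstacle I anticipate is the packaging step in the Direction~1 negative case: I must check that the premises $N$ of the provable rule obtained from $(r, \rho)$ really are $\rho(\pprem{r} \cup \nprem{r})$ and that the closed quantitative premises $\rho(\qprem{r})$ drop out as required by Def.~\ref{def:provable}. This hinges on a careful reading of how closed instances of quantitative literals built from nested products and convex combinations of distribution variables are handled in Def.~\ref{def:provable}; once that is in place the deductive skeleton mirrors the nondeterministic version in~\cite{vanGlabbeek04}, with the probabilistic novelty confined to the distribution-term machinery rather than the induction architecture.
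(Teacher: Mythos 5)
Your proof is correct, but it is organized differently from the paper's. The paper establishes $\trans_{P,S}\models\psi\implies P\wsproves\psi$ first, by transfinite induction on strata, and its negative case is a direct construction that does not invoke completeness: for every provable rule $\dedrule{H}{t\,\trans[a]\,\pi}$ with closed $H$, an auxiliary soundness lemma for provable rules forces some premise $\delta\in H$ to fail in the earlier strata, so a denying literal $\delta'$ holds there and is ws-provable by the induction hypothesis, and these $\delta'$ populate the node above $t\ntrans[a]$. The converse direction is then obtained in the paper purely from consistency (Lemma~\ref{lemma:completeThenCons}) and contraposition. You invert this: you prove $P\wsproves\psi\implies\trans_{P,S}\models\psi$ directly by induction on the ws-proof tree, and then derive the negative half of the other direction from completeness together with that. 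Both routes are sound under the standing assumption that $P$ is complete (needed anyway for $\trans_{ws}$ to be defined), but they buy different things. Your direct induction for ws-provable positive literals is in fact tighter than the paper's contraposition step, which as written only yields $t\trans[a]\pi'\in{\trans_{P,S}}$ for \emph{some} $\pi'$ rather than for the specific target $\pi$; using the model property of $\trans_{P,S}$ at each positive node closes that gap. Conversely, the paper's completeness-free treatment of ${\trans_{P,S}}\models t\ntrans[a]\implies P\wsproves t\ntrans[a]$ in effect \emph{proves} that stratifiable PTSSs are complete, whereas your argument presupposes it. Your packaging of the instance $(r,\rho)$ as a provable rule via the one-node proof structure, with closed valid quantitative premises dropped as Def.~\ref{def:provable} prescribes, is exactly the delicate point and substitutes for the paper's auxiliary lemma. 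One small step to make explicit in the positive case of your Direction~2: Def.~\ref{def:assoc_with} only gives that the negative premises hold in $\bigcup_{\gamma<\beta}{\trans_{P_\gamma}}$; you need condition (ii) of Def.~\ref{def:stratification} to promote this to ${\trans_{P,S}}\models\nprem{\rho(r)}$ before your negative subcase can be invoked on them.
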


The proof of this lemma follows the same structure of its non-probabilistic counterpart (see \cite[Prop. 25]{vanGlabbeek04}).

The next lemma states that it suffices to show that the same rules
having only negative premises are provable in two different PTSSs to
state that these PTSSs define the same set of ws-provable transitions.
%
%

\begin{lemma}\label{lemma:sameNegRules}
 Let $P$ and $P'$ be two PTSSs over the same signature such that
 $P \proves \dedrule{H}{c}$ iff $P' \proves \dedrule{H}{c}$ for all closed rule $\dedrule{H}{c}$ with $H$ containing only negative premises.  Then  
 $P \proves_{ws} \psi$ iff $P' \proves_{ws} \psi$ for all closed literal $\psi$. 
\end{lemma}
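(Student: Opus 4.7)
The plan is a well-founded induction on the ws-proof tree, using the following auxiliary fact as the bridge between the two PTSSs: for a positive closed literal $\psi$, $P \wsproves \psi$ iff there is a closed rule $\dedrule{N}{\psi}$ provable from $P$ with $N$ consisting of negative literals only, and each $\chi \in N$ is itself ws-provable from $P$. The forward direction of this fact is obtained by collapsing the positive spine of the ws-proof tree (formed by iteratively following positive-literal children of positive-literal nodes) into a single proof structure in the sense of Def.~\ref{def:proofStructure}: the positive premises of the rules along the spine are witnessed by the conclusions of rules sitting lower in the spine (giving the injective map $\phi$), whereas negative premises end up in $\rtop$; after applying the combined substitution, the top literals are exactly the closed negative literals that appear as children of positive nodes in the original tree. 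The backward direction is routine: graft the ws-proofs of each $\chi \in N$ onto the given proof structure.

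By symmetry, it suffices to show $P \wsproves \psi \implies P' \wsproves \psi$. For a positive $\psi$ with $P \wsproves \psi$, the auxiliary fact gives a closed rule $\dedrule{N}{\psi}$ provable from $P$ with only negative premises, each $\chi \in N$ ws-provable from $P$ via a strictly smaller subtree. The lemma's hypothesis yields $P' \proves \dedrule{N}{\psi}$; the induction hypothesis gives $P' \wsproves \chi$ for each $\chi \in N$; and the auxiliary fact applied to $P'$ produces $P' \wsproves \psi$.

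For a negative literal $\chi$ with $P \wsproves \chi$, let $\{\chi_k\}_{k \in K}$ be the children of $\chi$ in its ws-proof from $P$; by the induction hypothesis, each $\chi_k$ is ws-provable from $P'$. It remains to show that the same set $\{\chi_k\}_{k \in K}$ satisfies the covering condition relative to $P'$: for every closed rule $\dedrule{N'}{\phi}$ provable from $P'$ with $\phi$ denying $\chi$, some $\chi_k$ denies a literal in $N'$. The main obstacle is that the lemma's hypothesis only matches provable rules whose premise sets are entirely negative, so the covering condition must first be reduced to such rules. To this end, I would prove a companion lemma stating that it suffices to verify the covering condition against provable rules whose premises are all negative: a positive premise of $N'$ that is denied by some $\chi_k$ discharges the covering condition immediately, while any positive premise not so denied can be absorbed into the rule via Lemma~\ref{lemma:provability} (by composing $\dedrule{N'}{\phi}$ with a further proof structure deriving that positive premise), producing an enlarged rule $\dedrule{N''}{\phi}$ still provable from $P'$. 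Once this reduction is in place, so that $N''$ consists only of negative literals, the lemma's hypothesis gives $P \proves \dedrule{N''}{\phi}$, and the given ws-proof of $\chi$ from $P$ supplies a $\chi_k$ denying a literal in $N''$, which by construction must lie in $N'$ as required.
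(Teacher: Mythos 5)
Your overall strategy---well-founded induction on the ws-proof tree, with the positive case handled by collapsing the positive spine into a provable rule $\dedrule{N}{\psi}$ whose premises are exactly the closed negative literals hanging off positive nodes, each of which is ws-provable by a strictly smaller subtree---is precisely the paper's argument, and that part of your proposal is sound.

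The problem is the negative case, and it stems from how you read the covering condition in Def.~\ref{def:ws-proof}. You take ``for all $P \proves \dedrule{N}{\phi}$ with $\phi$ denying $\chi$'' to range over arbitrary provable rules, including ones with positive premises, and then try to reduce to negative-only rules via a companion lemma. Two things go wrong. First, the reduction is unsound as stated: after absorbing a positive premise $p$ of $N'$ to obtain $N''$, the literal of $N''$ that the $P$-side covering condition hands you may well be one of the freshly introduced premises coming from the proof structure for $p$, not a literal of the original $N'$; your closing claim that it ``must lie in $N'$'' has no justification, and a positive premise that is neither denied by any $\chi_k$ nor derivable cannot be absorbed at all. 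Second, and more decisively, under your reading the trivial rule $\dedrule{\{t \trans[a]\pi\}}{t\trans[a]\pi}$ is provable (since $c \in H$ suffices in Def.~\ref{def:provable}), so every negative node labelled $t \ntrans[a]$ would need a child denying $t\trans[a]\pi$, i.e.\ a child equal to $t\ntrans[a]$ itself, and no negative literal would ever be ws-provable. The covering condition only makes sense, and is only intended (following van Glabbeek's original definition), with $N$ ranging over sets of closed \emph{negative} literals. Under that reading your companion lemma evaporates: the lemma's hypothesis directly transfers the entire family of relevant provable rules between $P$ and $P'$ (both directions of the ``iff'' are used here), and the same children $\{\chi_k\}$, ws-provable from $P'$ by the induction hypothesis, witness the covering condition for $P'$. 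That is exactly what the paper does.
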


\section{The \bntmufxt\  format}

In this section we revise the \ntmufxnu\ format of~\cite{DL-fossacs12}
adapting it to the richer quantitative premises introduced before.
Furthermore we correct some mistakes of~\cite{DL-fossacs12}.

Before, we recall the notion of bisimulation on PTSs~\cite{LarsenSkou91}. 
Given a relation ${\relR} \subseteq \closedTerms \times \closedTerms$,
a set $Q \subseteq \closedTerms$ is $\closed{\relR}$ if for all $t \in
Q$ and $t' \in \closedTerms$, $t \relR t'$ implies $t' \in Q$
(i.e. ${\relR}(Q) \subseteq Q$).
If a set $Q$ is $\closed{\relR}$ we write $\closed{\relR}(Q)$.
It is easy to verify that if two relation ${\relR},
{\relR'} \subseteq \closedTerms \times \closedTerms$ are such that
${\relR'}\subseteq{\relR}$, then for all set $Q \subseteq
\closedTerms$, $\closed{\relR}(Q)$ implies $\closed{\relR'}(Q)$.

\begin{definition}
  A relation ${\relR} \subseteq \closedTerms \times \closedTerms$ is a
  bisimulation if ${\relR}$ is symmetric and for all $t, t' \in
  \closedTerms$, $\pi \in \Delta(\closedTerms)$, $a\in A$,
  \begin{quote}
    $t \relR t'$ and $t \trans[a] \pi$ imply that there exists
    $\pi' \in \Delta(\closedTerms)$ s.t.\ $t' \trans[a] \pi'$ and
    $\pi \relR \pi'$,
  \end{quote}
  where $\pi \relR \pi'$ if and only if $\forall Q \subseteq
  \closedTerms: \closed{\relR}(Q) \limp \pi(Q) = \pi'(Q)$.
  We define bisimilarity $\bisim$ as the smallest relation that
  includes all other bisimulations.
  It is well-known that $\bisim$ is itself a bisimulation and an
  equivalence relation.
\end{definition}

Let $\{Y_l\}_{l\in L}$ be a family of sets of term variables with the same cardinality.
The $l$-th element of a tuple $\vec{y}$  is denoted by $\vec{y}(l)$.  For a set of
tuples $T=\{\vec{y_i} \mid i\in I\}$ we denote the $l$-th projection by 
$\pi_l(T)=\{\vec{y_i}(l) \mid i\in I\}$.
Fix a set $\diag{Y_l}_{l\in L} \subseteq \prod_{l\in L} Y_l $ such that:
%
\begin{enumerate}[(i)]
\item%
  for all $l \in L$, $\pi_l(\diag{Y_l}_{l\in L}) = Y_l$; and
\item%
  for all $\vec{y}, \vec{y'} \in \diag{Y_l}_{l\in L}$, $(\exists l \in L :
  \vec{y}(l) = \vec{y'}(l)) \limp \vec{y} = \vec{y'}$.
\end{enumerate}
Property (ii) ensures that different $\vec{y}, \vec{y'} \in \diag{Y_l}_{l\in L}$ differ in all positions and by property (i) every variable of every $Y_l$ is used in one $\vec{y} \in \diag{Y_l}_{l\in L}$.
$\Diag$ stands for ``diagonal'', following the intuition that each
$\vec{y}$ represents a coordinate in the space $\prod_{l\in L} Y_l$,
then $\diag{Y_l}_{l\in L}$ can be seen as the line that traverses
the main diagonal of the space.
Notice that, letting $L$ be a natural number,  for $Y_l=\{y_l^0,y_l^1,y_l^2,\ldots\}$ a possible
definition for $\diag{Y_l}_{l\in L}$ is $\diag{Y_l}_{l\in L} =
\{(y_0^0,y_1^0,\ldots,y_L^0),(y_0^1,y_1^1,\ldots,y_L^1),(y_0^2,y_1^2,\ldots,y_L^2),\ldots\}$.

\begin{definition} \label{def:ntmufxnu}
  Let $P = (\Sigma, A, R)$ be a PTSS.
  A rule $r\in R$ is in \emph{\ntmuft\ format} 
  if it has the following form %
  \[
 \ddedrule%
      {
	{\textstyle
         \bigcup_{m\in M}
           \{ t_m(\vec{z})\trans[a_m]\mu_m^{\vec{z}} : \vec{z}\in \mathcal{Z} \} \ \cup \
         \bigcup_{n\in N}
           \{ t_n(\vec{z})\ntrans[b_n] : \vec{z}\in \mathcal{Z} \} \  \cup \
	  }
         {\textstyle
		\{ \theta_l(Y_l)\gtgeq_{l,k} p_{l,k} : l\in L, k\in K_l \}
	  }
      }
      {f(x_1,\ldots,x_{\rank(f)}) \trans[a] \theta }
  \]
  %
%
  with ${\gtgeq_{l,k}} \in \{>, \geq\}$
  for all $l\in L$ and $k\in K_l$, and it satisfies the following conditions:
  \begin{enumerate}
  \item\label{item:conditions_on_cardinality}%
    Each set $Y_l$ should be at least countably infinite, for all
    $l\in L$, and the cardinality of $L$ should be strictly smaller
    than that of the $Y_l$'s.
  \item\label{item:condition_on_Z}%
    $\mathcal{Z} = \diag{Y_l}_{l\in L} \times\prod_{w\in W}\{w\}$, with 
    $W\subseteq \V \backslash \bigcup_{l\in L} Y_l$.
  \item\label{item:condition_mus_are_different}%
    All variables $\mu_m^{\vec{z}}$, with $m\in M$ and
    $\vec{z}\in \mathcal{Z}$, are different.
    %
\remarkPRD{Notice that there is one item less thanks to the removal of the $\zeta$'s in the $\theta$'s and a simplification of items~\ref{item:conditions_on_nonrepeating_z}, \ref{item:conditions_on_variables_of_theta}, and \ref{item:conditions_on_distribution_terms}.}
  \item\label{item:conditions_on_nonrepeating_z}%
    For all $\vec{z}, \vec{z'} \in \mathcal{Z}$, $m\in M$, if 
   $\mu^{\vec{z}}_m, \mu^{\vec{z'}}_{m} \in \Var(\theta) \cup (\cup_{l \in L} \Var(\theta_l))$ then $\vec{z}= \vec{z'}$.
  \item\label{item:conditions_on_Y}%
    For all $l\in L$,
    $Y_l \cap \{x_1,\ldots,x_{\rank(f)}\} = \emptyset$, %
    and %
    $Y_l \cap Y_{l'} = \emptyset$ for all $l'\in L$, $l\neq l'$.
  \item\label{item:conditions_on_z}%
    All variables $x_1,\ldots,x_{\rank(f)}$ are different.
  \item\label{item:conditions_on_variables_of_theta}%
    For all $l\in L$,
    $\Var(\theta_l) \cap (\{x_1,\ldots,x_{\rank(f)}\} \cup \bigcup_{l'\in L} Y_{l'}) = \emptyset$.
    %
  \item\label{item:conditions_on_terms}%
    $f\in F$ and for all $m\in M$ and $n\in N$, $t_m, t_n \in
    \openTerms$.  In all cases, if $t\in\openTerms$ and $\Var(t)
    \subseteq \{w_1,\ldots,w_H\}$, $t(w'_1,\ldots,w'_H)$ is the same
    term as $t$ where each occurrence of variable $w_h$ (if it appears
    in $t$) has been replaced by variable $w'_h$, for $1 \leq h \leq  H$.
  \item\label{item:conditions_on_distribution_terms}%
    $\theta, \theta_l \in \openDTerms$ for all $l \in L$.
  \end{enumerate}
  A rule $r\in R$ is in \emph{\ntmuxt{} format} if its form is like above
  but has a conclusion of the form
 $x \trans[a] \theta$
  and, in addition,
  it satisfies the same conditions as above only that whenever we write $\{x_1,\ldots,x_{\rank(f)}\}$, we should write $\{x\}$.
%
  A rule $r\in R$ is in \emph{\nxmuft{} format} if it is in 
  \ntmuft{} format and the sources of its positive premises are term variables.
  $P$ is in \ntmuft{} (resp.\ \emph{\ntmuxt{}},\ \emph{\nxmuft{}}) \emph{format}
  if all its rules are in \ntmuft{} (resp.\ \ntmuxt{},\ \nxmuft{}) format.
  $P$ is in \emph{\ntmufxt{} format} if each of its rules is either in
  \ntmuft{} format or \ntmuxt{} format.
\end{definition}
The rationale behind each of the restrictions are discussed
in~\cite{DL-fossacs12} in depth.
In the following we briefly summarize it.
Term variables $x_1,\ldots,x_{\rank(f)}$ appearing in the source of the
conclusion are binding.  Variables in $\bigcup_{l\in L} Y_l$ and those
appearing in instantiable Dirac distributions are also binding when
appearing in quantitative premises.  Therefore they need to be all
different.  This is stated in conditions
\ref{item:condition_mus_are_different}, \ref{item:conditions_on_Y},
and \ref{item:conditions_on_variables_of_theta}.
Distribution variables in 
$\{\mu_m^{\vec{z}}\mid {m\in M} \land {\vec{z}\in\mathcal{Z}}\}$
are also binding when appearing on the target of a positive
premise. Hence they also need to be different, which is stated in
condition \ref{item:conditions_on_z}.
If $Y_l$ is finite, quantitative premises will allow to count the
minimum number of terms that gather certain probabilities.  This goes
against the spirit of bisimulation that measures equivalence classes of
terms regardless of the size of them.  Therefore $Y_l$ needs to be
infinite (condition~\ref{item:conditions_on_cardinality}).
Condition~\ref{item:conditions_on_nonrepeating_z} is more subtle;
together with each set of premises
$\{t_m(\vec{z})\trans[a_m]\mu_m^{\vec{z}} : \vec{z}\in \mathcal{Z} \}$
it ensures a symmetric behaviour of terms $t_m(\vec{z})$ for every
possible instantiation of variables $\vec{z}$.  A clear example that
shows the need for this symmetry is provided in~\cite{DL-fossacs12}.
The need for the source of the conclusion and targets of positive
premises to have a particular shape is the same as in the
\emph{tyft/tyxt} format~\cite{GrooteVaandrager92}.
Conditions~\ref{item:condition_on_Z},
\ref{item:conditions_on_terms},
and~\ref{item:conditions_on_distribution_terms} are actually notations
and definitions.

%

The definition provided here corrects some mistakes inadvertently
introduced in the \ntmufxnu\ format in~\cite{DL-fossacs12}, more
precisely on the quantitative premises and condition~4 in Def.~11
(which corresponds to our
condition~\ref{item:conditions_on_nonrepeating_z}).
Another mistake in~\cite{DL-fossacs12} was omitting to require that
PTSS are well-founded as hypothesis for the congruence theorem.  This
is corrected in the following, where we extend the congruence theorem
to the \ntmufxt\ format.
\footnote{Both issues are explained in detail in the corrigendum of \cite{DL-fossacs12}:
http://cs.famaf.unc.edu.ar/~lee/publications/corrigendum-Fossacs2012.pdf}


\begin{definition}
 Let $W$ be a set of positive and quantitative premises.
 The \emph{dependency directed graph} of $W$ is given by $G_W = (V, E)$ with
 $V = \textstyle{\cup_{\psi \in W} \Var(\psi)}$ and
$E = \{\tuple{x, \mu} \mid t \trans[a] \mu, x \in \Var(t)\} \cup \{\tuple{\zeta, y} \mid (\theta(Y)\gtgeq p) \in W, \zeta \in \Var(\theta), y \in Y\}$.
%
 We say that $W$ is \emph{well-founded} if any backward chain of edges in  $G_W$ is finite.
%
%
 Define for each $x \in V$, $\nvdg(x) = \sup(\{\nvdg(y) + 1 \mid (y,x) \in E \})$,
 where $\sup(\emptyset) = 0$.   
 A rule is called \emph{well-founded} if its set of positive and quantitative premises is well-founded. 
 A PTSS is called \emph{well-founded} if all its rules are well-founded. 
\end{definition}


\begin{theorem}\label{th:congruence}
 Let $P$ be a well-founded stratifiable PTSS in $\ntmufxt$\ format. Then $\bisim$ is a congruence relation for all operators defined in $P$.
\end{theorem}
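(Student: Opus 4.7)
The plan is to adapt the Groote--Vaandrager/Groote congruence proof for \tyfxt{}/\ntyfxt{} to the probabilistic quantitative setting. Define the candidate relation
\[
\mathsf{R} = \{(\eta(u),\eta'(u)) \mid u\in\openTerms,\ \eta,\eta' \text{ closed substitutions with } \eta(x)\bisim\eta'(x) \text{ for every } x\in\TVar\},
\]
and lift it to distributions by: $\pi\,\mathsf{R}\,\pi'$ iff $\pi(Q)=\pi'(Q)$ for every $\closed{\mathsf{R}}$ set $Q$. Since $\mathsf{R}$ contains every pair $(f(s_1,\ldots,s_n),f(s_1',\ldots,s_n'))$ with $s_i\bisim s_i'$ (take $u=f(x_1,\ldots,x_{\rank(f)})$), showing that $\mathsf{R}\cup\bisim$ is a bisimulation will force $\mathsf{R}\subseteq\bisim$, proving congruence of $\bisim$ for every operator in $P$.

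To establish the transfer property for a pair $(s,s')=(\eta(u),\eta'(u))\in\mathsf{R}$ with $s\trans[a]\pi$, I would do a transfinite induction on the stratum $\beta=S(s\trans[a]\pi)$ together with the inner iteration index $j\le\degPTSS(P)$ used in Definition~\ref{def:assoc_with}. By Theorem~\ref{th:existence:supmodel} there is a rule $r$ in \ntmufxt{} format together with a proper substitution $\sigma$ such that $\sigma(\conc{r})=s\trans[a]\pi$, the positive and quantitative premises of $\sigma(r)$ hold at strictly earlier stages, and the negative premises hold in $\bigcup_{\gamma<\beta}\trans_{P_\gamma}$. The goal is to build a second proper substitution $\sigma'$, pointwise $\mathsf{R}$-related to $\sigma$, such that $\sigma'(r)$ is again a valid instance; its conclusion $\sigma'(\conc{r})=s'\trans[a]\pi'$ with $\sigma(\theta)\,\mathsf{R}\,\sigma'(\theta)$ delivers the required matching transition.

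The construction of $\sigma'$ is the heart of the proof and uses well-foundedness of $r$: induct on $\nvdg$ along the dependency-directed graph of $\pprem{r}\cup\qprem{r}$. At rank $0$ lie the source variables $x_1,\ldots,x_{\rank(f)}$ (or the single $x$ in the \ntmuxt{} case); set $\sigma'(x_i)=\eta'(x_i)$ so that $\sigma(x_i)=\eta(x_i)\bisim\eta'(x_i)=\sigma'(x_i)$. For a distribution variable $\mu_m^{\vec z}$ introduced by a positive premise $t_m(\vec z)\trans[a_m]\mu_m^{\vec z}$, the variables of $t_m(\vec z)$ have smaller $\nvdg$, so $\sigma'(t_m(\vec z))$ is already defined and $\sigma(t_m(\vec z))\,\mathsf{R}\,\sigma'(t_m(\vec z))$. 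The outer induction hypothesis (applied to the premise, which sits at a lower stratum or lower $j$-iteration) then yields a $\pi'_m$ with $\sigma'(t_m(\vec z))\trans[a_m]\pi'_m$ and $\sigma(\mu_m^{\vec z})\,\mathsf{R}\,\pi'_m$; set $\sigma'(\mu_m^{\vec z})=\pi'_m$. For fresh term variables appearing only as bound witnesses in the $Y_l$'s, we will exploit conditions~\ref{item:conditions_on_cardinality}, \ref{item:conditions_on_Y} and the $\Diag$ structure to choose $\sigma'(Y_l)$ as an $\mathsf{R}$-saturation of $\sigma(Y_l)$, preserving cardinalities via a bijection. Negative premises are matched using the stratification: a denying literal for $\sigma'(t_n(\vec z))\ntrans[b_n]$ would lift back across $\bisim$ and $\mathsf{R}$ (using the outer induction hypothesis, since positive denials sit in strictly lower strata) to contradict $\sigma(t_n(\vec z))\ntrans[b_n]$.

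The main obstacle is discharging the quantitative premises $\theta_l(Y_l)\gtgeq_{l,k}p_{l,k}$ and showing $\sigma(\theta)\,\mathsf{R}\,\sigma'(\theta)$ at the conclusion, because $\theta$ and the $\theta_l$'s are now arbitrary distribution terms rather than single variables. I will need the auxiliary lemma that the lifting $\mathsf{R}$ on distributions is compatible with the constructors of $\openDTerms$: if $\theta_i$ and $\theta_i'$ denote $\mathsf{R}$-related distributions under $\sigma,\sigma'$, then so do $\sum_i p_i(\prod_{n\in N_i}\theta_{n_i})\circ g_i^{-1}$ and the corresponding primed term, and instantiable Diracs $\delta_t$ are handled by the rank-$0$ step since $\sigma(t)\,\mathsf{R}\,\sigma'(t)$ implies $\delta_{\sigma(t)}$ and $\delta_{\sigma'(t)}$ are $\mathsf{R}$-related on distributions. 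From $\sigma(\theta_l)\,\mathsf{R}\,\sigma'(\theta_l)$ and the $\mathsf{R}$-closedness of the saturated $\sigma(Y_l)\cup\sigma'(Y_l)$, the quantitative premises are transported, and the same compositionality argument applied to $\theta$ yields $\pi=\sigma(\theta)\,\mathsf{R}\,\sigma'(\theta)=\pi'$. The proof for rules in \ntmuxt{} format is identical, modulo replacing the source pattern $f(x_1,\ldots,x_{\rank(f)})$ by a single variable $x$ and setting $\sigma'(x)$ to any term $\bisim$-related to $\sigma(x)$ whose matching transition is provided by the bisimulation between $\sigma(x)$ and $\sigma'(x)$ itself.
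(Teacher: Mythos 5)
Your overall strategy coincides with the paper's: the candidate relation $\mathsf{R}$ is (extensionally) the paper's congruence closure $\relR_P$ of $\bisim$ from Def.~\ref{def:RP}; the transfer property is established by transfinite induction over the stratification, with the matching substitution $\sigma'$ built by induction on $\nvdg$ along the dependency graph of the positive and quantitative premises; the $Y_l$'s are instantiated by $\mathsf{R}$-saturation so that the quantitative premises are transported via $d\subseteq d^\uparrow$; and the compositionality of the lifted relation with the constructors of $\openDTerms$ is exactly the paper's Lemma~\ref{lemma:sust_de_Dvar}. So the architecture is right.

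There is, however, a genuine gap in your induction measure. You induct only on $\beta=S(s\trans[a]\pi)$ (plus the inner index $j$), i.e.\ on the stratum of the \emph{given} transition. This suffices for the positive premises, whose $\sigma$-instances lie at earlier stages of the construction, but it breaks down for the negative premises: to show $\sigma'(t_n(\vec z))\ntrans[b_n]$ you assume a denying transition $\sigma'(t_n(\vec z))\trans[b_n]\pi''$ and must transfer it back across $\mathsf{R}$ to contradict $\sigma(t_n(\vec{z'}))\ntrans[b_n]$. The stratification only bounds $S(\sigma'(t_n(\vec z)),b_n)$ by the stratum of $\sigma'(\conc{r})$, i.e.\ of the transition of $s'$ you are trying to \emph{construct}, and that stratum bears no a priori relation to $\beta=S(s,a)$; hence your induction hypothesis does not apply to the pair $(\sigma'(t_n(\vec z)),\sigma(t_n(\vec{z'})))$. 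The paper repairs this by first passing to a target-independent stratification (Lemma~\ref{lm:ntmufnu:target:indep}) and making the outermost induction run over the \emph{sum} $\gamma=S(s,a)+S(s',a)$, with the $(\beta,j)$ induction nested inside; both negative-premise sums then drop strictly below $\gamma$. Two smaller omissions: you never reduce to a \emph{pure} PTSS (Lemmas~\ref{lemma:ntmuft} and~\ref{lemma:semi-pure}), so the base case of your $\nvdg$-induction ("rank $0$ consists of the source variables") is wrong in the presence of free variables, which also sit at rank $0$ and must be frozen or pre-instantiated; and "preserving cardinalities via a bijection" for the $Y_l$'s does not quite work, since saturation strictly enlarges classes --- one needs the explicit partition $\partZ$ of $\mathcal{Z}$ with $\cardof{Z_{\prod_l d_l}}\geq\max_l\cardof{d_l^\uparrow}$, which is where conditions~\ref{item:conditions_on_cardinality} and~\ref{item:conditions_on_nonrepeating_z} of Def.~\ref{def:ntmufxnu} are actually consumed.
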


\section{\bntmufxt\ format reduces to pntree}\label{sec:reduction_pntree}

The reduction procedure requires results from unification theory over infinite domains.  
Instead using the result presented in \cite{Fokkink1997183}, we use the 
variation presented in \cite[Lemma 3.2]{FokkinkvanGlabbeek96} that proves some extra
properties needed to prove our main result.

\begin{definition}
 A substitution $\sigma$ is a \emph{unifier} for a substitution $\rho$ if $\sigma \rho = \sigma$.
 In this case, we say that $\rho$ is \emph{unifiable}.
\end{definition}

\begin{lemma}\label{lemma:unification}
 If a substitution $\rho$ is unifiable, then there is a unifier $\hat{\sigma}$ for $\rho$  such that:
 \begin{inparaenum}[(i)]
  \item each unifier $\sigma$ for $\rho$ is also a unifier for $\hat{\sigma}$
  \item if $\rho(\zeta) = \zeta$ then $\hat{\sigma}(\zeta) = \zeta$, for all $\zeta\in\TVar\cup\PVar$, and
  \item if $\rho^n(\zeta)$ is a variable for all $n \geq 0$ then $\hat{\sigma}(\zeta)$ is a variable.
 \end{inparaenum}
We call $\hat{\sigma}$ the most general unifier.
\end{lemma}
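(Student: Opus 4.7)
My plan is to adapt the proof of \cite[Lemma~3.2]{FokkinkvanGlabbeek96}, extending the classical orbit-based construction to the richer syntax of distribution terms.

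First, for every variable $\zeta\in\TVar\cup\PVar$ I would analyse the iterate sequence $\zeta,\rho(\zeta),\rho^2(\zeta),\ldots$ Two situations arise: either there is a least $n_\zeta$ for which $\rho^{n_\zeta}(\zeta)$ is not a variable, in which case I write $s_\zeta$ for that term (an open term when $\zeta\in\TVar$, a distribution term when $\zeta\in\PVar$), or every iterate remains a variable.

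Next I would define $\hat\sigma$ by recursion. For $\zeta$ of the first kind, set $\hat\sigma(\zeta) := \hat\sigma(s_\zeta)$, where $\hat\sigma$ is extended homomorphically through the constructors of (distribution) terms and applied to the variables appearing inside $s_\zeta$ via the same recursion. For $\zeta$ of the second kind, pool together all variables $\zeta'$ whose orbits share a common tail with that of $\zeta$, i.e.\ $\rho^m(\zeta)=\rho^n(\zeta')$ for some $m,n\geq 0$; these form an equivalence class that must be collapsed by every unifier, since $\sigma\rho=\sigma$ forces $\sigma$ to be constant on orbits. Using the axiom of choice, pick a canonical representative $\zeta^\ast$ for each class, taking $\zeta^\ast := \zeta$ whenever $\rho(\zeta)=\zeta$, and set $\hat\sigma(\zeta):=\zeta^\ast$. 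Well-foundedness of the recursion on the first kind follows from unifiability: an infinite descent would force the given unifier $\sigma$ to assign arbitrarily deep terms to the starting variable, contradicting that $\sigma$ is a map into $\openTerms\cup\openDTerms$.

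Finally I would verify the three clauses. Clause~(ii) is immediate by the deliberate choice of $\zeta$ as representative when $\rho(\zeta)=\zeta$. Clause~(iii) is the case where every iterate remains a variable, so by construction $\hat\sigma(\zeta)$ is a variable. For clause~(i), let $\sigma$ be any unifier of $\rho$. Iterating $\sigma\rho=\sigma$ gives $\sigma=\sigma\rho^n$ for all $n$, so $\sigma(\zeta)=\sigma(s_\zeta)$ in the first case and $\sigma(\zeta)=\sigma(\zeta^\ast)$ in the second. A straightforward induction along the recursion defining $\hat\sigma$ then yields $\sigma(\hat\sigma(\zeta))=\sigma(\zeta)$, i.e.\ $\sigma\hat\sigma=\sigma$.

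The main obstacle is pushing this argument cleanly through distribution terms, which have the shape $\sum_{i\in I} p_i (\prod_{n_i\in N_i}\theta_{n_i})\circ g_i^{-1}$ with fixed structural data $p_i$ and $g_i$. No substitution can modify the coefficients $p_i$ or the underlying functions $g_i$, so unification must descend positionally through these constructors, demanding equality of coefficients and function symbols and only then recurring on the embedded subterms and variables. Because these structural constants behave like rigid function symbols, the standard first-order unification machinery carries over unchanged, and both the well-foundedness of the recursion and the minimality of $\hat\sigma$ are preserved in the new setting.
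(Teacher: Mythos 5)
The paper does not actually prove this lemma: it is imported verbatim from Fokkink and van Glabbeek (Lemma~3.2 of the cited work), with the remark that the paper uses that variation precisely because it supplies properties (ii) and (iii). Your reconstruction follows the same orbit-based most-general-unifier construction that underlies the cited lemma, and it is essentially correct, including the observation that the fixed coefficients $p_i$ and functions $g_i$ in distribution terms behave as rigid constructors so that unification descends positionally through them.

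Two points deserve tightening. First, the statement asserts that $\hat\sigma$ \emph{is a unifier for} $\rho$, i.e.\ $\hat\sigma\rho=\hat\sigma$, and you never verify this; it does follow from your construction (if $\rho(\zeta)$ is a variable it lies in the same orbit class, or has the same eventual non-variable term $s_\zeta$, as $\zeta$; if $\rho(\zeta)$ is a non-variable then $s_\zeta=\rho(\zeta)$), but the check should be stated, since clauses (i)--(iii) alone do not make $\hat\sigma$ a unifier. Second, your termination argument via ``arbitrarily deep terms'' implicitly assumes terms have finite depth; distribution terms admit infinite index sets $I$ and $N_i$, so the correct formulation is that an infinite chain $\zeta_0,\zeta_1,\ldots$ with $\zeta_{i+1}$ occurring properly in $s_{\zeta_i}$ would force an infinite strictly descending chain in the (well-founded) subterm order on $\sigma(\zeta_0)$, which is impossible because $\openTerms$ and $\openDTerms$ are inductively generated. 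With these two repairs your argument is a complete and faithful proof of the lemma in the extended syntax.
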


The main theorem~\ref{th:ntumufxtPNTree} showing that every PTSS in \ntmufxt-format 
can be reduced to a transition equivalent PTSS in pntree format is 
developed incrementally. 
First of all, we show that every \ntmuxt-rule can be expressed by a set of \ntmuft-rules
by replacing the source variable of the conclusion with an appropriate context 
$f(\vec{x})$ (Lemma~\ref{lemma:ntmuft}).
Secondly, we show that for all PTSS $P$ in \ntmuft\ format 
there is a PTSS $P'$ in \nxmuft\ format such that 
$P \proves \dedrule{H}{c}$ iff $P' \proves \dedrule{H}{c}$
for all rules $\dedrule{H}{c}$ in \nxmuft\ format (Lemma~\ref{lemma:nxmuft}).
Notice that this result implies that 
$P \proves \dedrule{H'}{c}$ iff $P' \proves \dedrule{H'}{c}$
for all rule $\dedrule{H'}{c}$ with $H'$ a set of closed negative premises, then
by Lemma~\ref{lemma:sameNegRules}, $P$ and $P'$ are equivalent. 
Finally, we prove that for all PTSS $P$ in \nxmuft\ format 
there is a PTSS $P'$ in  pntree format
(a PTSS in well-founded \nxmuft\ format without free variables),
such that for every closed transition rule $\dedrule{H}{c}$  
with only negative premises,
$P \proves \dedrule{H}{c}$ iff $P' \proves \dedrule{H}{c}$ (Lemma~\ref{lemma:pntree}).
Again, by Lemma~\ref{lemma:sameNegRules}, $P$ and $P'$ are
equivalent. 
This series of lemmas leads to the main theorem stating that every PTSS consisting of rules in the \ntmufxt\ format can be reduced to a transition equivalent PTSS in the more restrictive pntree format. Furthermore, this shows also that the rules of a PTSS in \ntmufxt\ format do not have to be well-founded in order to guarantee that the bisimilarity of the induced PTS is a congruence.

The reduction of proof structures follows the logic of \cite{FokkinkvanGlabbeek96}. In the probabilistic setting we need to treat additionally quantitative premises as follows: While substitutions replace distribution variables by distribution terms the substitution $\rho(\theta(Y) > p)$ leads to a well-defined quantitative literal ($\rho$ is defined as $\rho(y) = y $ for all $y \in Y$). Because by construction $\sigma$ unifies $\rho$ we have that whenever $\sigma(\theta(Y) > p)$ then also $\sigma(\rho(\theta(Y) > p))$. This shows the satisfaction of the quantitative premises.

\begin{lemma}\label{lemma:ntmuft}
  Let $P = (\Sigma,A, R)$ be a stratifiable PTSS in \ntmufxt\ format.
  Then there is a stratifiable PTSS $P' = (\Sigma, A, R')$ in
  \ntmuft\ format that is transition equivalent to $P$.
\end{lemma}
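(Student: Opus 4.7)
The plan is to eliminate \ntmuxt-rules by replacing each one with a family of \ntmuft-rules indexed by the operators of the signature. Concretely, for every rule
\[ r = \ddedrule{H}{x \trans[a] \theta} \]
of $R$ in \ntmuxt\ format and each $f\in F$ of arity $n=\rank(f)$, I would introduce the rule $r_f$ obtained from $r$ by applying the substitution $\rho_f$ that maps $x$ to $f(x_1,\ldots,x_n)$, where $x_1,\ldots,x_n$ are fresh term variables not occurring anywhere in $r$, and leaves every other variable fixed. Then $R'$ is defined as the union of all \ntmuft-rules already in $R$ together with every such $r_f$.

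First I would verify that each $r_f$ lies in \ntmuft\ format. The conclusion has the required shape $f(x_1,\ldots,x_n)\trans[a]\rho_f(\theta)$. Since the $x_i$ are fresh, they do not clash with any variable of $r$: the sets $Y_l$, the distribution variables $\mu_m^{\vec z}$, the variables appearing in the distribution terms $\theta_l$, and the targets of positive premises all remain pairwise disjoint and distinct from $\{x_1,\ldots,x_n\}$, so the binding and disjointness conditions of Def.~\ref{def:ntmufxnu} are preserved. The cardinality conditions on $L$ and $Y_l$, and the shape of $\mathcal{Z}$, do not depend on the source of the conclusion and are untouched. The premises themselves already satisfy the \ntmuft\ requirements because they satisfied the \ntmuxt\ requirements, and $\rho_f$ only rewrites occurrences of $x$ in the source of the conclusion (condition on the conclusion's source is the only item where \ntmuft\ differs from \ntmuxt).

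Next I would show transition equivalence via the observation that the set of closed proper substitution instances of rules in $R'$ coincides with that of $R$. Every closed proper substitution $\sigma$ of the original rule $r$ must send $x$ to some closed term, which has the form $f(t_1,\ldots,t_n)$ for a unique $f\in F$ of arity $n$. Hence $\sigma$ factors as $\sigma = \sigma'\circ \rho_f$ where $\sigma'$ agrees with $\sigma$ on the variables of $r$ and sends each fresh $x_i$ to $t_i$; this $\sigma'$ is a closed proper substitution of $r_f$ with the same premises and conclusion as $\sigma(r)$. Conversely, any closed proper substitution of $r_f$ yields, by precomposition with $\rho_f$, a closed proper substitution of $r$ producing the same closed literals. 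Thus the closed derivation steps available in $P$ and $P'$ are the same.

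Finally, any stratification $S$ of $P$ is also a stratification of $P'$, because the stratification inequalities only constrain closed instances of the rules, and these coincide between $P$ and $P'$ by the factorization above. By Theorem~\ref{th:existence:supmodel} and Theorem~\ref{th:weakunicity:supmodel}, $P$ and $P'$ induce the same supported model, so they are transition equivalent. The main bookkeeping obstacle is verifying that every side-condition of the \ntmuft\ format survives $\rho_f$; this is routine once one notices that $\rho_f$ only rewrites the single variable $x$ into a context over fresh variables, and cannot disturb disjointness or freshness of any other variable in $r$.
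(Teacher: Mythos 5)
Your proposal is correct and takes essentially the same route as the paper: the paper likewise eliminates each \ntmuxt-rule by applying, for every $f\in F$, the substitution $\rho_f$ sending $x$ to $f(z_1,\ldots,z_{\rank(f)})$ for fresh variables $z_i$, and establishes transition equivalence by the same factorization of closed substitutions through $\rho_f$ (the paper just phrases this as a transfinite induction over the strata of Def.~\ref{def:assoc_with}). One small nitpick: Theorems~\ref{th:existence:supmodel} and~\ref{th:weakunicity:supmodel} by themselves do not yield equality of the associated relations (supported models need not be unique without strictness); what you actually need, and implicitly have, is that the construction in Def.~\ref{def:assoc_with} depends only on the set of closed proper rule instances, which you showed coincide for $P$ and $P'$.
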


\begin{lemma}\label{lemma:nxmuft}
  Let $P = (\Sigma,A, R)$ be a PTSS in \ntmuft\ format.
  Then there is a PTSS $P' = (\Sigma, A, R')$ in
  \nxmuft\ format such that 
  $P \proves \dedrule{H}{c}$ iff $P' \proves \dedrule{H}{c}$
  for all rule $\dedrule{H}{c}$ in \nxmutt\ format.
  (A rule is in \nxmutt\ format if the source of every positive
  premise is a term variable and its target is a distribution
  variable.)
\end{lemma}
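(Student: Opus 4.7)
The plan is to define $P' = (\Sigma, A, R')$ where $R'$ is the set of all rules in \nxmuft\ format that are provable from $P$ (up to $\alpha$-conversion of bound variables). The ``if'' direction of the equivalence is then immediate from Lemma~\ref{lemma:provability}: every rule in $R'$ is by construction provable from $P$, so every rule provable from $P'$ is also provable from $P$.

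For the ``only if'' direction I would fix a proof structure $(B, r, \phi)$ of $P$ with matching substitution $\sigma$ whose derived rule $\dedrule{H}{c}$ lies in \nxmutt\ format, and show that $\dedrule{H}{c}$ is already provable from $P'$. The proof proceeds by well-founded induction on the order $<$ from Def.~\ref{def:proofStructure}, the goal being to rewrite $(B, r, \phi)$ into a proof structure all of whose rules are in \nxmuft\ format.

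The crucial step eliminates a ``bad'' positive premise. Suppose some $b \in B$ has a positive premise $t_m \trans[a_m] \mu_m$ whose source $t_m = g(\vec{s})$ is not a variable. Since $\sigma$ cannot turn $g(\vec{s})$ into a variable and $\dedrule{H}{c}$ is in \nxmutt\ format, this premise cannot lie in $\rtop(B, r, \phi)$; hence $r' = \phi^{-1}(t_m \trans[a_m] \mu_m) \in B$. Its conclusion is forced by $\sigma$-matching to have the form $g(\vec{x}') \trans[a_m] \theta'$, with $\sigma(\vec{x}') = \sigma(\vec{s})$ and $\sigma(\theta') = \sigma(\mu_m)$. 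The substitution $\rho$ that sends $\vec{x}'$ to $\vec{s}$ and $\mu_m$ to $\theta'$, and is the identity elsewhere, is unified by $\sigma$, so by Lemma~\ref{lemma:unification} there is a most general unifier $\hat{\sigma}$ with $\sigma\hat{\sigma} = \sigma$. Applying $\hat{\sigma}$ to $b$ and splicing the premises of $\hat{\sigma}(r')$ in place of $\hat{\sigma}(t_m \trans[a_m] \mu_m)$ yields a merged rule $b''$ which is still provable from $P$, via a strictly smaller proof structure. Iterating this elimination over all non-variable sources produces a rule in \nxmuft\ format, which by definition belongs to $R'$; reassembling these transformed rules gives the desired proof structure in $P'$ for $\dedrule{H}{c}$.

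The main obstacle is checking that the structural side-conditions of the \ntmuft\ format (distinctness of the $\mu_m^{\vec{z}}$'s, the diagonal set $\diag{Y_l}_{l\in L}$, disjointness of the $Y_l$'s from the source variables, and well-formedness of the distribution terms $\theta, \theta_l$) survive each merge. Clause~(iii) of Lemma~\ref{lemma:unification} is decisive here: variables that are never mapped by any iterate of $\rho$ to a non-variable remain variables under $\hat{\sigma}$, so the targets of positive premises stay fresh distribution variables and each $Y_l$ stays a fresh set of term variables after unification. Quantitative premises pose no additional difficulty: the variables unified between $b$ and $r'$ are disjoint from the $Y_l$ by the format conditions of Def.~\ref{def:ntmufxnu}, so $\rho$ acts as the identity on each $Y_l$, and as observed in the paragraph preceding the lemma, $\sigma$ unifying $\rho$ implies that every $\sigma(\theta_l(Y_l) \gtgeq p)$ is carried to itself and remains a valid quantitative literal in the merged rule.
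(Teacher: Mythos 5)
Your proposal follows essentially the same route as the paper: $P'$ is taken to be the set of provable \nxmuft\ rules, the ``if'' direction is Lemma~\ref{lemma:provability}, and the ``only if'' direction eliminates non-variable sources of positive premises by unifying them with the conclusions of the rules that $\phi$ assigns to them, invoking the most general unifier of Lemma~\ref{lemma:unification} and its clauses (ii)--(iii) to keep the sources and targets of the surviving premises as variables; your treatment of the quantitative premises is also the paper's. The one place where your organization is weaker is the merging step itself. You merge one bad premise at a time and ``iterate'', but $B$ and the premise sets of its rules may be infinite, so there may be infinitely many premises to eliminate even at depth one; moreover, after a single merge the resulting rule $b''$ is no longer a rule of $R$, so the intermediate object is not a proof structure over $P$, and it is not smaller in the order $<$ on proof structures, since that order requires the root to change along a nonempty $\phi$-chain. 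The paper avoids both problems by doing all the merging in one shot: it carves out the sub-proof-structure $B'$ of those rules whose $\phi$-images have sources that eventually become non-variable under the iterated partial substitution $\rho$, applies the most general unifier $\rho'$ once to obtain a single rule $s \in R'$, and only then uses the order $<$, applied to the sub-structures rooted at the rules left outside $B'$, to discharge the remaining variable-source premises by induction before composing everything with Lemma~\ref{lemma:closure}. Your sketch becomes correct once recast in that simultaneous form; as stated, ``iterating the elimination'' and ``via a strictly smaller proof structure'' do not quite type-check.
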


\begin{proof}
 Define $P' = (\Sigma, A, R')$ such that $r \in R'$ iff 
 $r$ is a provable rule from $P$ in \nxmuft\ format.
 The right to left implication follows straightforward from Lemma~\ref{lemma:provability}.

 For the left to right implication we proceed by induction on the partial order over proof structures.
 Suppose $P \proves \dedrule{H}{c}$, with a rule $\dedrule{H}{c}$ in \nxmutt\ format, 
 and let $(B, r, \phi)$ be a proof structure for  $\dedrule{H}{c}$ over $P$.
 Then by Def.~\ref{def:provable} there is substitution $\sigma$ s.t.
 \begin{inparaenum}[(a)]
 \item%
 $\sigma(\rtop(B, r, \phi) - \qtop(B, r, \phi))  \subseteq H$,
 \item%
 closed quantitative premise in $\sigma(\qtop(B, r, \phi))$ hold,
 \item%
 open quantitative premise in $\sigma(\qtop(B, r, \phi))$ belong to $H$,
 and
 \item%
 $\sigma(\conc{r}) = c$.  
 \end{inparaenum}

 From $(B, r, \phi)$ we construct recursively a substructure $(B', r, \phi')$ 
 which is a proof structure for a rule $r' \in R'$, i.e. $r'$ is in \nxmuft\  format, 
 such that  $\sigma(\conc{r'}) = c$ and 
 for each premise $c'$ of $\sigma(r')$ the rule $\dedrule{H}{c'}$ is provable from $R'$
  i.e.  $\dedrule{H}{c} \in R'^\proves$ or $c'$ is a valid closed quantitative literal.
 Then, by Lemma~\ref{lemma:closure}, $\dedrule{H}{c}$ is provable from $P'$.
  Furthermore, we construct a partial substitution $\rho$ which is unified by $\sigma$, 
  i.e. if $\rho(x)$ is defined then $\sigma(\rho(x)) = \sigma(x)$.
  In this construction $\rho^0$ is defined as the identity function.
 We proceed with the definitions of the transition rules $B'$ and the substitution $\rho$: 
\begin{enumerate}[(i)]
  \item $r \in B'$. 
  \item \label{cond:B&rho:ii}
  If $b \in B \setminus \{r\}$ and $\phi(b)$ is a premise $t_m(\vec{z}) \trans[a_m] \mu_{m}^{\vec{z}}$
  of a rule in $B'$ s.t there is $k \geq 0$ with:
 \begin{enumerate}
  \item $\rho^i(t_m(\vec{z}))$ is defined for $i = 0, \dots, k$ 
  \item $\rho^i(t_m(\vec{z}))$ are variables for $i = 0, \dots, k-1$ 
  \item $\rho^k(t_m(\vec{z}))$ has the form $f(t_1, \dots, t_{\rank(f)})$ with $t_i \in \openTerms$
 \end{enumerate}
 then $b \in B'$. 
 Notice that the conditions can be satisfied only if  
 $\rho^i(t_m(\vec{z}))$ is a variable for $i = 0, \dots, k-1$. 
 Moreover $\rho^0(t_m(\vec{z})) = t_m(\vec{z})$ is a variable. 
 In addition, this variable belongs to  $\vec{z}$.

\item \label{cond:B&rho:iii}%
 Since $\sigma$ matches with $(B, r, \phi)$, 
 $\sigma(\conc{b}) = \sigma(t_m(\vec{z}) \trans[a_m] \mu_{m}^{\vec{z}})$. 
 Because the rule format restricts the form of the conclusion $\conc{b}$, 
 then we can rewrite the last equality by:
  $
  \sigma(f(x_1,\ldots,x_{\rank(f)}) \trans[a] \theta
   ) =
  \sigma(t_m(\vec{z}) \trans[a_m] \mu_{m}^{\vec{z}})
  $
 In addition, $\sigma$ unifies the partial substitution $\rho$,
 then if $\rho^{k-1} (t_m(\vec{z}))$ is a variable it holds:
 $\sigma(t_m(\vec{z})) = \sigma \rho^k (t_m(\vec{z})) = \sigma(f(t_1, \dots, t_n)). $

 Because $\conc{b}$ has the form $f(x_1,\ldots,x_{\rank(f)}) \trans[a] \theta$ 
 it holds $\sigma(x_j) = \sigma(t_j)$ for $j = 1, \dots, \rank(f)$ and 
 $\sigma(\theta) =  \sigma(\mu_m^{\vec{z}})$.
 Define $\rho(x_j) = t_j $ for $j = 1, \dots, \rank(f)$
 (here we define the left side of a conclusion of a rule in $B'\setminus r$).   
 Besides, define $\rho(\mu_m^{\vec{z}}) = \theta$.
 Notice that this extension of $\rho$ is unified by $\sigma$ and, 
 by Def.~\ref{def:provable}, the variables $x_j$ and $\mu_m^{\vec{z}}$ 
 appear only in this rule, then we are not redefining substitution $\rho$.
\item Define $\rho(\zeta) = \zeta$ for all variable $\zeta$
 if $\zeta$ is not defined for $\rho$. 
 Substitution $\sigma$ unifies this extension of $\rho$.
\item Finally, $\phi'$ is the restriction of $\phi$ to $B'\setminus\{r\}$.  
(Notice that the substitution $\rho$ is defined for the the right side 
of a positive premise in the image of $\phi'$ in item (\ref{cond:B&rho:iii}).)

\end{enumerate}

Substitution $\sigma$ unifies substitution $\rho$, by Lemma~\ref{lemma:unification},
there is a substitution $\rho'$ which unifies $\rho$ and:
\begin{enumerate}[($\rho'$i)]
 \item \label{unification:i} 
 $\sigma \rho' = \sigma$.
 \item \label{unification:ii}
 If $\rho(\zeta) = \zeta$ then $\rho'(\zeta) = \zeta$, with  $\zeta$ a term or distribution variable.
 \item \label{unification:iii} 
 If $\rho^k(\zeta)$ is a variable for $k \geq 0$ then $\rho'(\zeta)$ is a variable. 
\end{enumerate}
 
The proof structure  $(B', r, \phi')$ and the substitution $\rho'$ are completely defined,
now we can prove that $\rho'$ matches with $(B', r, \phi')$. 
Let $b$ a rule used to construct $B'$ and consider the substitution $\rho$.
Recall that the conclusion of $b$ has the form
$f(x_1,\ldots,x_{\rank(f)}) \trans[a] \theta$ and
$\phi'(b) = t_m(\vec{z}) \trans[a_m] \mu_{m}^{\vec{z}}$ is such that 
$\rho^k (t_m(\vec{z})) = f(t_1, \dots, t_{\rank(f)}) =\rho(f(x_1, \dots, x_{\rank(f)}))$ 
by (\ref{cond:B&rho:ii}) and the definition of $\rho$ for $x_i$ in 
(\ref{cond:B&rho:iii}). 
Since $\rho'$ unifies $\rho$ then 
\[\begin{array}{rcccccl}
  	\rho'(\phi'(b))
        & = & \rho'( t_m(\vec{z}) \trans[a_m] \mu_{m}^{\vec{z}})
   	& = &\rho'( \rho^k( t_m(\vec{z})) \trans[a_m] \rho(\mu_{m}^{\vec{z}})))
        & = & \\
   	& = &\rho'( \rho(f(x_1, \dots, x_{\rank(f)}) \trans[a_m]  \theta )
   	& = &\rho'( f(x_1, \dots, x_{\rank(f)}) \trans[a_m] \theta)
	& = &\rho'(\conc{b})\\
  \end{array}
\]
Then the substitution $\rho'$ matches with the proof structure $(B', r , \phi')$.

To show that the rule 
$s = \rho' \left(\ddedrule{\{h : h \in \rtop(B', r , \phi'), h \text{ is not a closed quantitative premise}\}}{\conc{r}}\right)$
is provable (Def.~\ref{def:provable}), it remains to show that if a quantitative premise 
in $\qtop(B', r , \phi')$ is closed then it is also valid.
 Let $\psi \in \qtop(B', r , \phi')$ be a quantitative premise. 
 Then if  $\rho'(\psi)$ is closed, since $\sigma$ unifies $\rho'$, it holds that
 $\sigma(\psi) = \sigma(\rho'(\psi)) = \rho'(\psi)$,
 which implies that also $\sigma(\psi)$ is a closed literal. 
 Because the rule 
 $\sigma \left(\ddedrule{\{h : h \in \rtop(B, r , \phi), h \text{ is not a closed quantitative premise}\}}{\conc{r}} \right)$
 is provable we have that $\sigma(\psi)$ holds and therefore also $\rho'(\psi)$ holds.

Finally we prove that the rule $s$ is in \nxmuft\ format. 
From the construction by $\rho$ we know that if $x$ is
s.t. $\rho(x) \neq x$ then $x$ satisfies one of the following conditions:

\begin{enumerate}
 \item $x$ appears in the left-hand side of a conclusion of a rule in $B'\setminus \{r\}$,
 \item $x$ appears in the right-hand side of a positive premise in the image of $\phi'$. 
\end{enumerate}

 Then if $g(x_1, \dots, x_m) \trans[b] \theta$ is the conclusion of $r$, 
 $\rho(x_j) = x_j$ for $j=1, \dots, m$ and, 
 hence $\rho'(x_j) = x_j$ because of ($\rho'$\ref{unification:ii}).
 On the other hand, if $\zeta \in \Var(\theta)$ is a variable that appears in the right-hand 
 side of a positive premise in the image of $\phi'$, i.e. $\zeta$ is a distribution variable, 
 we have $\rho'(\zeta) \in \openDTerms$ and then 
 $\rho'(\zeta) \in \openDTerms$. 
 Therefore the conclusion $\rho'(g(x_1, \dots, x_m) \trans[b] \theta)$ of $s$ 
 has the form $g(x_1, \dots, x_m) \trans[a] \rho'(\theta)$ as the \nxmuft\ format demands. 

 We continue with the premises of $s$.   
 Let $\rho'(t \trans[a] \mu)$ be a positive premise in $\rho'(\rtop(B', r , \phi'))$ 
 then $t \trans[a] \mu$ is a positive premise of a rule in $B'$ which does not belong
 to the image of $\phi'$. Then
 $\mu$ is such that $\rho(\mu) = \mu$ and this implies $\rho'(\mu) = \mu$.
 To prove that $\rho'(t)$ is a variable there are 2 cases to investigate:

\begin{itemize}
 \item $t \trans[a] \mu \in \text{top}(B,r,\phi)$. 
 Then $\sigma(t \trans[a] \mu) \in H$ and because $\frac{H}{c}$ is in  \nxmutt\ format, 
 then $\sigma(t)$ is a variable.
 Therefore $\sigma \rho'(t) = \sigma(t)$ and then $\rho'(t)$ is a variable.
 \item $t \trans[a] \mu \not \in \text{top}(B,r,\phi)$.
  Then there is a rule $b$ s.t. $\phi(b) = t \trans[a] \mu$.
  Since  $t \trans[a] \mu$ does not belong to the image of $\phi'$ we have that 
  $b \not  \in B'$. By $B'$ and the construction of $\rho$
  we have that $\rho^k(t)$ is a variable for all $k\geq 0$. 
  Then ($\rho'$\ref{unification:iii}) ensures that $\rho'(t)$ is a variable.  
\end{itemize}

This shows that the positive premises also fulfill the requirements of the \nxmuft\ format.

We proceed with the quantitative premises. 
Let $(\theta (Y) \gtgeq p) \in \qtop(B', r, \phi')$ with $\theta \in \openDTerms$.
By the same reasoning as applied for the target of the conclusion we get $\rho'(\theta) \in \openDTerms$. 
In addition, $\rho(y) = y$ for all $y \in Y$ because they do not appear in the left-hand side of a conclusion, and hence $\rho'(y) = y$.
Thus, $\rho'(\theta (Y) \gtgeq p)$ has the proper form.

Syntactical restriction for positive and quantitative premises  and conclusion are satisfied.
Besides, there is no restriction for negative premises, therefore $s$ is in \nxmuft\ format 
and then $s \in R'$.

 For all positive premises $c'\in \sigma(\rtop(B', r, \phi'))$ the rule $\frac{H}{c'}$
 is in $\nxmutt$ and it is provable in $R$ by a proof sub-structure smaller 
 than $(B, r, \phi)$. Thus, by induction we get that these rules are provable in $R'$.
 Applying Lemma~\ref{lemma:closure} on these rules and $s$ shows that $\frac{H}{c}$ is provable in $R'$.
\end{proof}

\begin{definition}
We say that a variable $x$ occurs \emph{free} in a rule $r$ if it
occurs in $r$ but not in the source of the conclusion nor in 
$W_j$ with $\theta_j(W_j) \gtrless_j q_j \in \qprem{r}$. 
\remarkDG{Does a instantiable dirac distribution $\delta_x$ as part of $\theta_j$ not also bind a variable and this should be included in the def?}
We say that a distribution variable $\mu$ occurs \emph{free} in a rule $r$ if it occurs in $r$ 
but not in the target of a positive premise. 
\end{definition}

\begin{definition}
 A PTSS $P = (\Sigma, A, R)$ is in \emph{pntree format} if
 all rules in $R$ are well-founded \nxmuft\ rules without free variables. 
\end{definition}

\begin{lemma}\label{lemma:pntree}
  Let $P = (\Sigma,A, R)$ be a PTSS in \nxmuft\ format.
  Then there is a PTSS $P' = (\Sigma, A, R')$ in
  pntree format such that for every closed transition rule $\dedrule{H}{c}$
  with only negative premises,
  $P \proves \dedrule{H}{c}$ iff $P' \proves \dedrule{H}{c}$
\end{lemma}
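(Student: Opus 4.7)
The plan is to define $P' = (\Sigma, A, R')$ by taking $R'$ to be the set of all rules provable from $P$ that are in pntree format (i.e., well-founded \nxmuft\ rules without free term or distribution variables). The right-to-left direction is then immediate: every rule of $R'$ is provable from $P$, so by Lemma~\ref{lemma:provability} every rule provable from $P'$ is provable from $P$; specializing to closed rules $\dedrule{H}{c}$ with only negative premises gives the claim.

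The left-to-right direction is the substantive part. Given a proof structure $(B, r, \phi)$ over $P$ together with a substitution $\sigma$ witnessing $P \proves \dedrule{H}{c}$, where $H$ is a set of closed negative literals, I would mimic the strategy of Lemma~\ref{lemma:nxmuft}: recursively build a sub-structure $(B^*, r, \phi^*)$ and a partial substitution $\rho$ that is unified by $\sigma$, so that $\rho$ forces every variable of the resulting rule $r^*$ to be bound. Concretely, whenever a rule $b \in B$ supplies, via $\phi$, the target of a premise whose source variable would otherwise be free in the composed rule (or would leave the composed rule non-well-founded), I absorb $b$ into $B^*$ and extend $\rho$ so that it equates the source pattern of $b$'s conclusion with the corresponding variable above and equates $b$'s target distribution term with the distribution variable above. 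By Lemma~\ref{lemma:unification} the partial substitution $\rho$ has a most general unifier $\rho'$ with $\sigma \rho' = \sigma$, and by property (iii) of that lemma any variable that is never rewritten by iterates of $\rho$ remains a variable under $\rho'$. Quantitative premises are handled exactly as in Lemma~\ref{lemma:nxmuft}: if $\rho'$ closes a quantitative premise, then it agrees with the original (valid) closed instance under $\sigma$, so validity is preserved; otherwise the premise survives in the top of the reduced structure with its required syntactic shape.

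The resulting rule $r^* = \rho'\left(\dedrule{\rtop(B^*, r, \phi^*) \setminus \{\text{closed quantitative}\}}{\conc{r}}\right)$ has, by construction, (i) no free term variable, since every variable appearing in a premise was either in the source of the conclusion of $r$, in some $Y_l$ of a quantitative premise, or absorbed by the matching process into the source of a conclusion of a deeper rule; (ii) no free distribution variable, since targets of positive premises are now distribution variables $\rho'(\mu) = \mu$, and other distribution variables were absorbed as targets in the corresponding deeper rules; and (iii) well-foundedness of the positive/quantitative dependency graph, which is the key point and ensured by the inductive termination of the absorption procedure applied at every source that would otherwise re-introduce a cycle. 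Hence $r^* \in R'$, and since the closed instance of $r^*$ under the substitution that maps remaining top-variables as $\sigma$ dictates reproduces $\dedrule{H}{c}$, we obtain $P' \proves \dedrule{H}{c}$.

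The main obstacle is the simultaneous enforcement of the two new conditions that pntree imposes beyond \nxmuft: absence of free variables \emph{and} well-foundedness. Absorbing a rule to bind a free variable can create new positive premises whose sources are themselves free or whose targets feed back into previously-bound distribution variables, threatening well-foundedness; conversely, trimming the proof structure to enforce well-foundedness can leave variables free. The termination of the recursive construction therefore relies crucially on the fact that the original closed proof is a well-founded tree, which bounds the length of the chains along which absorption propagates and guarantees that the partial substitution $\rho$ stabilises before any cycle is introduced. The unification lemma then packages this fixed point into a genuine substitution realising $r^*$ as a pntree rule provable from $P$.
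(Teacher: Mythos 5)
Your setup (taking $R'$ to be the set of provable rules from $P$ in pntree format and discharging the right-to-left direction via Lemma~\ref{lemma:provability}) matches the paper, but the core of your left-to-right argument --- absorbing deeper rules of the proof structure and unifying, in the style of Lemma~\ref{lemma:nxmuft} --- is not the mechanism the paper uses, and I do not see how it can be made to work. The two defects that pntree must repair in an \nxmuft\ rule, free variables and non-well-foundedness, are properties of the variable dependency graph of a \emph{single} rule's premises, and composing that rule with deeper rules of the proof structure does not repair them. Absorbing a rule $b$ whose conclusion matches a premise $w \trans[a] \mu$ removes that premise from the top but imports all of $b$'s own variables, whose binders (the source of $b$'s conclusion) do \emph{not} survive into the composed rule, so they become new free variables; a free distribution variable of $r$ (one not occurring as the target of any positive premise) can never become bound by absorption at all; and since in \nxmuft\ format every premise source $w$ is already a variable, unifying it with the compound source $f_b(\vec{x_b})$ of $b$'s conclusion forces the most general unifier to send $w$ to a compound term, which destroys the required shape of any other premise in which $w$ occurs (in particular a set $Y_l$ containing $w$ would no longer consist of variables). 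Your appeal to the well-foundedness of the proof structure also conflates two different graphs: finiteness of $\phi$-chains says nothing about backward chains in the dependency graph of $\pprem{r}\cup\qprem{r}$, and a rule of $R$ that is non-well-founded on its own stays so no matter what it is composed with.

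The idea you are missing is that the hypothesis that $\dedrule{H}{c}$ is closed and $H$ contains only negative literals lets you \emph{instantiate} rather than absorb. The paper works with the recursive characterization of $R^\proves$: given $r$ and the closed substitution $\rho$ with $\rho(\conc{r})=c$ and $\dedrule{H}{c'}\in R^\proves$ for each positive premise instance $c'$, it considers the dependency graph $G$ of $\pprem{r}\cup\qprem{r}$ and singles out the positive premises $w_m^{\vec z}\trans[a_m]\mu_m^{\vec z}$ whose target has no backward path in $G$ to a conclusion-source variable $x_i$. Those premises, together with every free variable of $\theta$ and the $\theta_l$, are replaced by their closed $\rho$-images, yielding a substitution instance $r''$ of $r$ with no free variables; each now-closed positive premise $\rho(w_m^{\vec z}\trans[a_m]\mu_m^{\vec z})$ is provable from $H$ (this is exactly where closedness and negativity of $H$, together with the induction hypothesis, enter), so it can be replaced by $H$ itself, and in the resulting rule every remaining backward dependency chain terminates in some $x_i$, hence the rule is well-founded, free-variable-free, and therefore pntree. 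Without this instantiation step your construction has no way to produce a rule that is simultaneously without free variables and well-founded, which is the whole content of the lemma.
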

\begin{proof}
 Let $P' = (\Sigma, A, R')$ such that $R'$ is the set of provable rules 
 from $P$ in pntree format. 
 By Lemma~\ref{lemma:provability}, the right to left implication holds. 

 For the left to right implication we proceed by induction.
%
%
 Let $\dedrule{H}{c}$ be closed with $H$ containing negative literals
 only. Let $\dedrule{H}{c}$ be provable from $P$,
 i.e.\ $\dedrule{H}{c}\in R^\proves$.  Then either $c\in H$, $c$ is a
 valid closed quantitative literal, or there is a rule $r$ and a
 substitution $\rho$ such that $\rho(\conc{r}) = c$ and, for all
 premises $c' \in \rho(\pprem{r})$, $\dedrule{H}{c'}\in R^\proves$.
 Then $\dedrule{H}{c'}\in {R'}^\proves$ either trivially or by
 induction.

 Because $r$ is \nxmuft\ format, $r$ has the form
 \[
 \ddedrule%
      {
	{\textstyle
         \bigcup_{m\in M}
           \{ w_m^{\vec{z}} \trans[a_m]\mu_m^{\vec{z}} : \vec{z}\in \mathcal{Z} \} \ \cup \
         \bigcup_{n\in N}
           \{ t_n(\vec{z})\ntrans[b_n] : \vec{z}\in \mathcal{Z} \} \  \cup \
	  }
         {\textstyle
	    \{ \theta_l(Y_l)\gtgeq_{l,k} p_{l,k} : l\in L, k \in K_l \}
	  }
      }
      {f(x_1,\ldots,x_{\rank(f)}) \trans[a] \theta }
 \]
 where each $w_m^{\vec{z}}$ is a variable in $\vec{z}$.

 Let $G$ be the variable dependency graph associated to
 $\pprem{r}\cup\qprem{r}$.
 From $r$, we construct a rule $r'\in S$ as follows.
 Let $\mu_m^{\vec{z}}$ be the target of a positive premise such that
 there is no backward path in $G$ from a vertex $\mu_m^{\vec{z}}$ to
 some vertex $x_i$, with $i\in\{1,\ldots,x_{\rank(f)}\}$.  Notice
 that, by the symmetry requirements in Def.~\ref{def:ntmufxnu}, this
 happens for all $\mu_m^{\vec{z}}$ with $\vec{z}\in\mathcal{Z}$.
 We first obtain a rule $r''$ by
 \begin{inparaenum}[(i)]
 \item%
   replacing variables $w_m^{\vec{z}}$ and $\mu_m^{\vec{z}}$ by
   $\rho(w_m^{\vec{z}})$ and $\rho(\mu_m^{\vec{z}})$, respectively,
   and
 \item%
   replacing every free variable $\zeta$ in $\theta_l$ and $\theta$ by
   $\rho(\zeta)$.
 \end{inparaenum}
 The resulting rule $r''$ does not have free variables and it is a
 substitution instance of $r$, so $r''$ is provable from $P$.
 To obtain $r'$, replace each closed positive premise
 $\rho(w_m^{\vec{z}}\trans[a]\mu_m^{\vec{z}})$ by $H$.  Since,
 $w_m^{\vec{z}}\trans[a]\mu_m^{\vec{z}}$ is a positive premise of $r$,
 $\frac{H}{\rho(w_m^{\vec{z}}\trans[a]\mu_m^{\vec{z}})}\in R^\proves$.
 Then $r'$ is also provable from $P$.

 Notice that the resulting rule $r'$ is in \nxmuft\ format without
 free variables.  Morever, $r'$ is well-founded since any dependency
 backward chain ends in a vertex $x_i$.  Hence $r'$ is a pntree rule
 and therefore $r'\in R'$.

 Let $p\in\prem{r'}$. Then either $p\in H$ (and hence $p$ is closed)
 or $p\in\prem{r}$. In any case, $\frac{H}{\rho(p)}\in {R'}^\proves$
 (if $p\in\prem{r}$, it follows by induction). Therefore
 $\frac{H}{\rho(\conc{r'})}\in {R'}^\proves$.  Since
 $\rho(\conc{r'})=\rho(\conc{r})=c$, $\frac{H}{c}\in {R'}^\proves$.
\end{proof}

\begin{theorem}\label{th:ntumufxtPNTree}
 Let $P = (\Sigma, A, R)$ be a PTSS in \ntmufxt\ format. There is a 
 PTSS $P' = (\Sigma, A, R')$ in pntree format that is transition equivalent to $P$.
\end{theorem}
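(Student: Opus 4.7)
The plan is to prove the theorem by composing the three preceding reduction lemmas in sequence, using Lemma~\ref{lemma:sameNegRules} to lift the rule-level provability equivalences to transition equivalence of the underlying PTSs.

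First, I would apply Lemma~\ref{lemma:ntmuft} to $P$ to obtain an intermediate PTSS $P_1 = (\Sigma, A, R_1)$ in \ntmuft\ format that is transition equivalent to $P$. This step handles the rules whose conclusion source is a single variable (the \ntmuxt\ subformat) by unfolding them into rules whose conclusion source is a constructor context $f(\vec{x})$, using the signature to enumerate the possible instantiations.

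Second, I would apply Lemma~\ref{lemma:nxmuft} to $P_1$ to obtain $P_2 = (\Sigma, A, R_2)$ in \nxmuft\ format. The lemma guarantees that $P_1 \proves \dedrule{H}{c}$ iff $P_2 \proves \dedrule{H}{c}$ for every rule in \nxmutt\ format. In particular this holds for any closed rule whose premises are purely negative (these trivially satisfy the \nxmutt\ syntactic shape since they have no positive premises). Hence by Lemma~\ref{lemma:sameNegRules}, $P_1$ and $P_2$ have the same set of ws-provable literals, so they induce the same PTS.

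Third, I would apply Lemma~\ref{lemma:pntree} to $P_2$ to obtain the desired $P' = (\Sigma, A, R')$ in pntree format, together with the guarantee that $P_2 \proves \dedrule{H}{c}$ iff $P' \proves \dedrule{H}{c}$ for every closed rule $\dedrule{H}{c}$ with $H$ containing only negative premises. Invoking Lemma~\ref{lemma:sameNegRules} once more, $P_2$ and $P'$ induce the same PTS. Chaining the three equivalences, $P$ and $P'$ are transition equivalent, and $P'$ is in pntree format by construction.

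The substantive content has been pushed into the three lemmas, so the main obstacle in this final proof is bookkeeping: making sure that the equivalences provided by Lemmas~\ref{lemma:nxmuft} and \ref{lemma:pntree} really are strong enough to feed into Lemma~\ref{lemma:sameNegRules} (both only promise provability equivalence on a restricted class of rules, but in both cases closed rules with purely negative premises fall inside that class, which is precisely what Lemma~\ref{lemma:sameNegRules} requires). A minor subtlety is that Lemma~\ref{lemma:ntmuft} is stated for stratifiable PTSSs; since every pntree rule is a well-founded \ntmuft\ rule, stratifiability of $P$ is not lost along the chain, but if the statement of the theorem is to be read without a stratifiability hypothesis one has to either reprove Lemma~\ref{lemma:ntmuft} without it or observe that the \ntmufxt\ to \ntmuft\ reduction is purely syntactic and does not actually use stratification.
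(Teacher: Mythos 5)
Your proposal is correct and takes essentially the same route as the paper, which proves the theorem precisely by applying Lemmas~\ref{lemma:ntmuft}, \ref{lemma:nxmuft}, \ref{lemma:pntree} and~\ref{lemma:sameNegRules} in that order, with the same bookkeeping observation that closed rules with only negative premises fall inside the \nxmutt\ class required by Lemma~\ref{lemma:sameNegRules}. Your remark about the stratifiability hypothesis in Lemma~\ref{lemma:ntmuft} identifies a genuine mismatch with the theorem's statement that the paper itself glosses over.
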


The proof of Theorem~\ref{th:ntumufxtPNTree} follows by applying
Lemmas~\ref{lemma:ntmuft}, \ref{lemma:nxmuft}, \ref{lemma:pntree},
and~\ref{lemma:sameNegRules}, in that order.

Let $P$ be a stratifiable PTSS in \ntmufxt\ format and let $S$ be its
stratification.
If $r$ is a provable rule from $P$, conditions (i) and (ii) in
Def.~\ref{def:stratification} also hold for stratification $S$ in rule
$r$.  (This can be shown by induction.)
Then, $S$ is also a stratification for the PTSS $P'$ in pntree format
obtained as in Theorem~\ref{th:ntumufxtPNTree}.
Since pntree rules are well-founded \ntmuft\ rules, from
Theorems~\ref{th:congruence} and~\ref{th:ntumufxtPNTree}, we have the
following corollary.

\begin{corollary}  
 If $P$ is a stratifiable PTSS in $\ntmufxt$\ format, $\bisim$ is a congruence for all operators in $P$.
\end{corollary}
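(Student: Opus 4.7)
The plan is to combine Theorem~\ref{th:congruence} with Theorem~\ref{th:ntumufxtPNTree}, after verifying that a stratification of $P$ carries over to the pntree PTSS produced by the reduction. First, I would apply Theorem~\ref{th:ntumufxtPNTree} to obtain a PTSS $P' = (\Sigma, A, R')$ in pntree format that is transition equivalent to $P$, so that ${\trans_P} = {\trans_{P'}}$.

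Next, I would verify that the stratification $S$ of $P$ is also a stratification of $P'$. Every rule $r' \in R'$ is provable from $P$ (by the construction in Lemmas~\ref{lemma:ntmuft}, \ref{lemma:nxmuft}, and~\ref{lemma:pntree}), so an induction on the proof structure used to derive $r'$ shows that any positive premise $t_k \trans[a_k]\mu_k$ of $r'$ satisfies $S(\rho(t_k\trans[a_k]\mu_k))\le S(\conc{\rho(r')})$, and any positive instance of a negative premise of $r'$ lies strictly below $S(\conc{\rho(r')})$, for every proper closed substitution $\rho$. Hence $P'$ is stratifiable. This is the main technical step; the argument is sketched in the paragraph preceding the corollary, and amounts to composing the stratification inequalities along the finite chains of rules in the proof structure.

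Because pntree rules are, by definition, well-founded \nxmuft rules, they are in particular well-founded \ntmuft rules, and so $P'$ is a well-founded stratifiable PTSS in \ntmufxt format. By Theorem~\ref{th:congruence}, $\bisim$ is a congruence for every operator of $P'$. Finally, since $P$ and $P'$ share the same signature $\Sigma$ (hence the same operators) and induce the same probabilistic transition relation, they induce the same bisimilarity $\bisim$ on $\closedTerms$. Congruence for the operators of $P'$ therefore immediately yields congruence for the operators of $P$, completing the proof.
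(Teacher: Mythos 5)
Your proposal is correct and follows essentially the same route as the paper: reduce $P$ to a transition-equivalent pntree PTSS $P'$ via Theorem~\ref{th:ntumufxtPNTree}, check by induction on proof structures that the stratification of $P$ carries over to the provable rules constituting $P'$, and then invoke Theorem~\ref{th:congruence} on $P'$ (whose rules are well-founded \ntmuft\ rules), transferring the congruence back through transition equivalence. No gaps to report.
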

\remarkDG{Even if it is a straight forward conclusion from the earlier propositions should we really just use corollary for this main theorem/main result of the paper (beside the technical machinery developed)? Maybe some of the earlier theorems should become only propositions?}

To conclude the section, we remark that negative premises cannot be
reduced to variables.
Following the nomenclature of~\cite{FokkinkvanGlabbeek96}, we say that
a rule is in \emph{simple pntree format} if it is in pntree format and
all its negative premises have the form $x \ntrans[a]$.
It turns out that the pntree format (and hence also the
\ntmufxt\ format) is strictly more expressive than simple pntree
format.
We will not dwell on this since example and rationale of the
difference of expressiveness in the non-probabilistic case applies
mutatis mutandi to our case (see~\cite{FokkinkvanGlabbeek96}).

 
\section{Concluding remarks}

We introduced the rule format \ntmufxt\ which enriches
\ntmufxnu~\cite{DL-fossacs12} by allowing distribution terms to appear
in quantitative premises and conclusions of rules.
We showed that it ensures that bisimulation equivalence is a
congruence for operators of well-founded PTSSs.  On proving this, we
corrected a mistake introduced in~\cite{DL-fossacs12}.
The richer syntactic structure of the quantitative premises and the
conclusion of the rules allows us to define a reduction of
\ntmufxt\ PTSSs to a transition equivalent PTSS consisting of only
pntree rules.
This construction confirms that the well-foundedness requirement in
\ntmufxt\ is not necessary to guarantee that bisimilarity is a
congruence.


We already know that the \ntmufxt\ format is equally expressive if
restricted to quantitative premises of the form $\theta(Y)> q$ with
$q\in [0,1]\cap\Q$.  
\remarkDG{Do you want to say that without loss of expressivity one can restrict $\gtgeq$ to only $>$ and express $\theta(Y)\ge q$ by $\{{\theta(Y) > q} \mid {q\in \Q \land q < r}\}$ ?}%
\remarkPRD{Not only that, but also that it suffices that $q$ is a rational number (instead of real). BTW, I corrected $\cup$ by $\cap$.}%
\remarkDG{Not required for the paper but just for my understanding: What is the reason that $\mathbb Q$ suffices? Is this following Dedikind-cut argumentation?}
\remarkPRD{Yes. For the case of $\theta(Y) > r$, we need to introduce an (denumerably) infinite number of rules. (This remark is also for me to remember how is the encoding.)}
However, we do not know whether distribution
terms are really needed.  We actually suspect that they are, and
hence, that the \ntmufxt\ format is strictly more expressive than the
\ntmufxnu\ format. 

Pntree rules are nearly ruloids \cite{BloomIM95:jacm} except that
negative premises may still contain non-variable terms. The decomposition 
method of \cite{Bloom:2004:PFD:963927.963929,Gebler:2012:phml_lt_sos} to develop
modular compositional proof systems can be adapted to pntree
rules by applying the negation-as-failure semantics for the logical characterization of
negative premises of pntree rules.
This will allow us to derive expressive congruence formats for probabilistic behavioral equivalences from their logical characterization in a structured way, following the approach of \cite{Bloom:2004:PFD:963927.963929}.

Both~\cite{DL-fossacs12} and this work have opened a new way of
thinking about probabilistic transition system specifications.  One of the
nicest things is that the \ntmufxt\ follows quite closely the
structure of non-probabilistic formats (particularly, \ntyfxt).
Hence, many ideas for further work can be borrowed from the
non-probabilistic setting.

\bibliographystyle{eptcs}
\bibliography{pntree}

\end{document}

\appendix

\section{Proof of the congruence theorem} \label{ap:congruence}

Before proving the theorem we need to show some auxiliary lemmas. 
%

\begin{proof}[Proof of Lemma~\ref{lemma:ntmuft}]
   Let $\Sigma = (F,\rank)$. 
  For all rule 
  \[r = \ddedrule{\prem{r}}
        {x \trans[a] \theta }\in R\]
  in \ntmuxt\ format,
  $\theta$ is such that $\theta \in \openDTerms$ then either
  $\theta \in \PVar \cup \DVar$ or 
  $\theta = {\textstyle \sum_{i\in I} p_i (\prod_{n_i \in N_i} \theta_{n_i}) \circ g_i^{-1}} $ 
  where $p_i \in (0,1]$ with $\sum_{i\in I} p_i = 1$,
  each $g_i : \closedTerms^{N_i} \rightarrow \closedTerms$ is a  function,
  and $\theta_{n_i}\in \openDTerms$.
  We only take in account the second case, the first one is similar to this one. 
  W.l.o.g. we suppose that each product $(\prod_{n_i\in N_i} \theta_{n_i})$ has at most 
  one term $\theta_{n_i}$ such that $\theta_{n_i} = \delta_t$ with $x \in \Var(t)$.  
  Then for each $r$ and function symbol $f \in F$ we define the  open substitution $\rho_f$ by:
  \begin{center}
    \begin{tabular}{l!{\ \ }l}
      $\rho_f(x) = f(z_1, \cdots, z_{\rank(f)})$ &
      if $x$ is in the source of the conclusion of $r$ and \\
      & $z_1, \cdots, z_{\rank(f)}$  are variables that do not occur in $r$.  \\
      $\rho_f(x) = x$ & otherwise.
    \end{tabular}
  \end{center}
%

Let $R'$ be a set of rules such that if $r
\in R$ and $r$ is in \ntmuft\ format, then $r \in R'$.
In addition, for all rule $r$ in \ntmuxt\ format and every symbol function $f$, 
 we add the rule $r'$ in $R'$ such that $r'$ is defined by
$r' = \rho_f(r)$. 

  The set of rules $R'$ is in \ntmuft\ format. 
  It is not hard to prove that if $S$ is a stratification for $P$,
  then it is also a stratification for $P'$.  By transfinite induction
  suppose ${\trans_{P_{\beta'}}} = {\trans_{P'_{\beta'}}}$ holds for
  all $\beta' < \beta$ and ${\trans_{P_{\beta,j'}}} =
  {\trans_{P'_{\beta,j'}}}$ for all $j'<j$, then we have to prove that
  ${\trans_{P_{\beta,j}}} = {\trans_{P'_{\beta,j}}}$ to ensure
  ${\trans_{P}} = {\trans_{P'}}$.
%
  We only show ${\trans_{P_{\beta,j}}} \subseteq
  {\trans_{P'_{\beta,j}}}$, the case ${\trans_{P_{\beta,j}}} \supseteq
  {\trans_{P'_{\beta,j}}}$ is analogous.

  Suppose $\psi = f(t_1, \dots, t_n) \trans[a] \pi \in
  {\trans_{P_{\beta,j}}}$.
  Then, there is rule $r \in R$ and substitution $\rho$ such that
  $\conc{\rho(r)}=\psi$ and, for all $\phi\in\prem{\rho(r)}$,
  $\bigcup_{\beta'<\beta}{\trans_{P_{\beta'}}} \cup
  \bigcup_{j'<j}{\trans_{P_{\beta,j'}}} \models \phi$.
  If $\phi$ is positive then, by induction,
  $\bigcup_{\beta'<\beta}{\trans_{P'_{\beta'}}} \cup
  \bigcup_{j'<j}{\trans_{P'_{\beta,j'}}} \models \phi$.
  If $\phi \equiv t \ntrans[b]$ then, for all
  $\pi'\in\closedDTerms$,
  $t\trans[b]\pi' \notin \bigcup_{\beta'<\beta}{\trans_{P_{\beta'}}}$
  and therefore
  $t\trans[b]\pi' \notin
  \bigcup_{\beta'<\beta}{\trans_{P'_{\beta'}}}$.
  Besides, if $\phi\in\qprem{\rho(r)}$,
  $\bigcup_{\beta'<\beta}{\trans_{P'_{\beta'}}} \cup
  \bigcup_{j'<j}{\trans_{P'_{\beta,j'}}} \models \phi$
  trivially, as the model has not effect on its validity.
  Therefore, $\bigcup_{\beta'<\beta}{\trans_{P'_{\beta'}}} \cup
  \bigcup_{j'<j}{\trans_{P'_{\beta,j'}}} \models \prem{\rho(r)}$.

  If $r$ is in \ntmuft\ format, then $r \in R'$ and using $\rho(r)$ one
  can derive $\psi\in{\trans_{P'_{\beta,j}}}$
  If, instead, $r$ is in \ntmuxt\ format with
  $\conc{r} = x \trans[a] {\textstyle \sum_{i\in I} p_i ( \prod_{n_i \in N_i } \theta_{n_i}) \circ g_{i}^{-1} }$, 
  then take the rule
  \[r' = \ddedrule{\rho_f(\prem{r})}
           {\rho_f(x) \trans[a] 
           \rho_f({\textstyle \sum_{i\in I} p_i ( \prod_{n_i \in N_i } \theta_{n_i}) \circ g_{i}^{-1} }) } \in R'\]
  with $\rho_f(x) = f(z_1, \cdots, z_{\rank(f)})$ and define substitution $\rho'$ as $\rho'(z_k) = t_k$, for $1\leq
  k\leq\rank(f)$ and $\rho'(x) = \rho(x)$ for any other (term or
  distribution) variable.
  Notice that
  $\rho'(\prem{r'})=\rho'(\rho_f(\prem{r})) = \rho(\prem{r})$, so
  $\bigcup_{\beta'<\beta}{\trans_{P'_{\beta'}}} \cup
  \bigcup_{j'<j}{\trans_{P'_{\beta,j'}}} \models \rho'(\prem{r'})$.
  Moreover,
  $\rho'(f(z_1,\dots,z_{\rank(f)})) = f(t_1,\dots,t_{\rank(f)}) =
  \rho(x)$,
  then we only have to prove that the targets of $\psi$ and
  $\rho'(\conc{r'})$ are the same probability distribution.
  This is straightforward, applying the same reasoning used with 
  the premises of the rule.
\end{proof}

The rule $r$ is called \emph{pure} if it
is well-founded and  does not contain free variables. 
A PTSS $P$ is called \emph{pure} if all of its rules are pure.
The next lemma states that
well-founded and stratifiable PTSSs in
\ntmufxt\ format can be reduced to another that is also pure and
transition equivalent to the given one.

\begin{lemma}\label{lemma:semi-pure}
  Let $P=(\Sigma, A, R)$ be a stratifiable and well-founded PTSS in \ntmufxt\ format.
  Then there is a stratifiable PTSS $P'=(\Sigma, A, R')$ in pure
  \ntmufxt\ format that is transition equivalent with $P$.
  Moreover, if $P$ is in \ntmuft\ format then $P'$ is in
  \ntmuft\ format.
\end{lemma}
\begin{proof}
  This proof proceeds very much like the proof of lemma~\ref{lemma:ntmuft}.
  For every rule $r \in R$, if $r$ has no free variables then $r \in R'$.
  If $r \in R$ has free variables, then for each $r$ such that
  \[r = \ddedrule{\prem{r}}
         {t \trans[a] {\textstyle \sum_{i\in I} p_i (\prod_{n_i\in N_i}
         \theta_{n_i}) \circ g_i^{-1}}} \in R\]
  and all substitution $\hat{\rho}$ such that
  \begin{enumerate}
  \item%
    $\hat{\rho}(x) = t_x \in \closedTerms$, for all free term variable $x$ in $r$,
  \item%
    $\hat{\rho}(\mu) = \pi_\mu \in \closedDTerms$,
    for all free distribution variable  $\mu$ in $r$, and
  \item%
    $\hat{\rho}(x) = x$ otherwise,
  \end{enumerate}
  we define $r' \in R'$ by 
  \[r' = \ddedrule{\hat{\rho}(\prem{r})} 
           {t \trans \hat{\rho}  \left ({\textstyle \sum_{i\in I} p_i (\prod_{n_i\in N_i}
           \theta_{n_i}) \circ g_i^{-1}} \right) }\]
  One more time with do not take in account the case where the 
  target of the conclusion belongs to $\PVar \cup \DVar$.
  Note that the rules in $R'$ has not free variables.
  By transfinite induction
  suppose ${\trans_{P_{\beta'}}} = {\trans_{P'_{\beta'}}}$ holds for
  all $\beta' < \beta$ and ${\trans_{P_{\beta,j'}}} =
  {\trans_{P'_{\beta,j'}}}$ for all $j'<j$, then we have to prove that
  ${\trans_{P_{\beta,j}}} = {\trans_{P'_{\beta,j}}}$ to ensure
  ${\trans_{P}} = {\trans_{P'}}$.
  We only show
  ${\trans_{P_{\beta,j}}} \subseteq {\trans_{P'_{\beta,j}}}$, the case
  ${\trans_{P_{\beta,j}}} \supseteq {\trans_{P'_{\beta,j}}}$ is
  analogous.

  Suppose $t \trans[a] \pi \in {\trans_{P_{\beta,j}}}$.
  Then, there is a rule $r \in R$ with
  $\conc{r} = \hat{t} \trans[a]
       \sum_{i\in I} p_i (\prod_{n_i\in N_i} \theta_{n_i}) \circ g_i^{-1}$,
  and substitution $\rho$ such that
  $\rho(\hat{t} \trans[a]
        \sum_{i\in I} p_i (\prod_{n_i\in N_i} \nu_{n_i}) \circ g_i^{-1})
   = {t \trans[a] \pi}$
  and, for all $\phi\in\prem{\rho(r)}$,
  $\bigcup_{\beta'<\beta}{\trans_{P_{\beta'}}} \cup
  \bigcup_{j'<j}{\trans_{P_{\beta,j'}}} \models \phi$.
  Like before, we obtain that 
  $\bigcup_{\beta'<\beta}{\trans_{P'_{\beta'}}} \cup
  \bigcup_{j'<j}{\trans_{P'_{\beta,j'}}} \models \phi$ and, in particular,
  if it is negative
  $\bigcup_{\beta'<\beta}{\trans_{P'_{\beta'}}}\models\phi$.

  Define $\hat{\rho}$ by $\hat{\rho}(x) = \rho(x)$, for all free term
  variable $x$ in $r$, $\hat{\rho}(\mu) = \pi_\mu = \rho(\mu)$ with $\pi_\mu \in \closedDTerms$,
  for all free distribution variable $\mu$ in $r$, and $\hat{\rho}(x) = x$
  otherwise.
  Then take rule
  \[r' = \ddedrule{\hat{\rho}(\prem{r})} 
           {t \trans \hat{\rho}  \left ({\textstyle \sum_{i\in I} p_i (\prod_{n_i\in N_i}
           \theta_{n_i}) \circ g_i^{-1}} \right) }\]
  Notice that
  $\rho(\prem{r'})=\rho(\hat{\rho}(\prem{r})) = \rho(\prem{r})$, so
  $\bigcup_{\beta'<\beta}{\trans_{P'_{\beta'}}} \cup
  \bigcup_{j'<j}{\trans_{P'_{\beta,j'}}} \models \rho(\prem{r'})$.
  Moreover, the source of both $\rho(\conc{r})$ and
  $\hat{\rho}(\conc{r'})$ is $\rho(\hat{t})$.
  Then we only have to prove that their targets are the same
  probability distribution.  
  This is straightforward applying the same reasoning 
  used for the premises. 
%
\end{proof}

The proof the following lemma follows from straightforward calculations.

\begin{lemma}\label{lemma:eq:dist:equiv}%
  Let ${\relR} \subseteq \closedTerms \times \closedTerms$ be an
  equivalence relation.  For all (discrete) distributions
  $\pi,\pi'\in\Delta(\closedTerms)$, $\pi \relR \pi'$ iff for every
  equivalence class $Q\in \closedTerms/{\relR}$, $\pi(Q) = \pi'(Q)$.
\end{lemma}

We say that a stratification $S : \PTr \to \alpha$ is
\emph{target-independent} if the stratum does not depend of the
target, that is, if for all $t\in\closedTerms$, $a\in A$, and
$\pi,\pi'\in\Delta(\closedTerms)$,
$S(t\trans[a]\pi)=S(t\trans[a]\pi')$.  In this case, we unambiguously
write $S(t,a)$ instead of $S(t\trans[a]\pi)$.

The proof of the next lemma follows mutatis mutandi the proof of
Lemma~4.8 in~\cite{Groote93}.

\begin{lemma}\label{lm:ntmufnu:target:indep}%
  Let $P$ be a stratifiable PTSS in \ntmufxt\ format. Then $P$
  has a target independent stratification.
\end{lemma}

Relation $\relR_P$, introduced in Def.~\ref{def:RP}, is central to the
proof of Theorem~\ref{th:congruence}. By construction, $\relR_P$ is
preserved by all context (see Lemma~\ref{lemma:sust_de_var} whose
proof follows by straightforward induction on the structure of $t$.)

\begin{definition}\label{def:RP}%
  Let $\Sigma = (F, \rank)$ be a signature and $P=(\Sigma, A, R)$ a
  PTSS with an associated transition relation. The relation ${\relR_P}
  \subseteq \closedTerms \times \closedTerms$ is the smallest 
  relation satisfying:
  \begin{enumerate}[(i)]
  \item%
    ${\bisim} \subseteq {\relR_P}$, and
  \item%
    for all $f \in F$, %
    $\left(\bigwedge_{k=1}^{\rank(f)} \ {u_k \relR_P v_k}\right)
    \ \limp \ 
    f(u_1,\cdots,u_{\rank(f)}) \relR_P f(v_1,\cdots,v_{\rank(f)})$.
  \end{enumerate}
\end{definition}

 The proof of the next lemma follows by straightforward induction on the
 structure of $t$.

\begin{lemma}\label{lemma:sust_de_var}
 Let $t \in \openTerms$ and let $\rho, \rho' : \TVar \to \closedTerms$ be two 
 substitutions such that for all $x \in V$ holds $\rho(x) \relR_P \rho'(x)$.
 Then  $\rho(t) \relR_P \rho'(t)$.
\end{lemma}

Besides, a similar theorem holds for distribution variables:
\begin{lemma}\label{lemma:sust_de_Dvar}
 Let $\theta \in \openDTerms$ and let $\rho, \rho' : \TVar \cup \PVar \to \closedTerms \cup \closedDTerms$ 
 be two substitutions such that for all variable $\zeta \in \Var(\theta)$ holds $\rho(\zeta) \relR \rho'(\zeta)$.
 Then  $\rho(\theta) \relR \rho'(\theta)$.
\end{lemma}

\begin{proof}
  The proof is by structural induction over $\theta \in \openDTerms$.
  The base case is straightforward. 
  Suppose then that for all $\zeta \in \Var({\textstyle \sum_{i\in I} p_i (\prod_{n_i \in N_i} \theta_{n_i}) \circ g_i^{-1}})$ 
  holds $\rho(\zeta) \relR \rho'(\zeta)$,  we should prove that

  \begin{equation}\label{eq:equivalence:of:target}
    \rho\left(
      {\textstyle \sum_{i\in I} p_i (\prod_{n_i \in N_i} \theta_{n_i}) \circ g_i^{-1}}
      \right)
    \relR_P
    \rho'\left(
      {\textstyle \sum_{i\in I} p_i (\prod_{n_i\in N_i} \theta_{n_i}) \circ g_i^{-1}}
      \right)
  \end{equation}

  By induction hypothesis we have that $\rho(\theta_{n_i}) = \rho'(\theta_{n_i})$.
  By Lemma~\ref{lemma:sust_de_var}, for all $g_i$ and
  $u_h,u'_h\in\closedTerms$ such that $u_h \relR_P u'_h$, $1\leq h\leq
  N_i$,
  $g_i(u_1,\cdots,u_{N_i}) \relR_P g_i(u'_1,\cdots,u'_{N_i})$.
  As a consequence, for every $d\in \closedTerms/{\relR}$,
  $g_i^{-1}(d) = \biguplus_{k\in K} d^k_1\times\cdots\times d^k_{N_i}$,
  where $d^k_h\in \closedTerms/{\relR}$, for all $k\in K$ and $1\leq
  h\leq N_i$.
  Then, for each $i\in I$, and equivalence class $d\in \closedTerms/{\relR}$ we calculate
  \begin{align}
    \textstyle
    ((\prod_{n_i\in N_i} \rho(\theta_{n_i})) \circ g_i^{-1}) (d)
    &=\textstyle
    (\prod_{n_i\in N_i} \rho(\theta_{n_i})) \left(\biguplus_{k\in K} d^k_1\times\cdots\times d^k_{N_i}\right)
    \notag
    \\
    &=\textstyle
    \sum_{k\in K} (\prod_{n_i\in N_i} \rho(\theta_{n_i})) (d^k_1\times\cdots\times d^k_{N_i})
    \notag
    \\
    &=\textstyle
    \sum_{k\in K} \rho(\theta_1)(d^k_1) \cdot\ldots\cdot \rho(\theta_{N_i})(d^k_{N_i})
    \notag
    \\
    &=\textstyle
    \sum_{k\in K} \rho'(\theta_1)(d^k_1) \cdot\ldots\cdot \rho'(\theta_{N_i})(d^k_{N_i})
    \notag
    \\
    &=\textstyle
    ((\prod_{n_i\in N_i} \rho'(\theta_{n_i})) \circ g_i^{-1}) (d)
    \tag{calculations are as before}
  \end{align}
  From here, and using Lemma~\ref{lemma:eq:dist:equiv}, it is
  straightforward to calculate (\ref{eq:equivalence:of:target}),
  thus concluding the proof.
\end{proof}

The proof of Theorem~\ref{th:congruence} proceeds by showing that the relation
$\relR_P$ of Def.~\ref{def:RP} is a bisimulation.  As a consequence
${\bisim} = {\relR_P}$ since ${\bisim}\subseteq {\relR_P}$.
Lemma~\ref{lemma:sust_de_var} finally ensures congruence.

\begin{proof}[Proof of Theorem~\ref{th:congruence}]
 
  By Lemmas \ref{lemma:ntmuft} and \ref{lemma:semi-pure} we can assume $P$
  is in pure \ntmufnu\ format.  In addition, $P$ is stratifiable. Then
  there is a transition relation $\trans_P$ associated with $P$.
  We have to show that for all 
  $f \in F, u_1, \cdots u_{\rank(f)}, v_1, \cdots, v_{\rank(f)} \in \closedTerms$
  it holds that:
  \[%
    1 \leq k \leq \rank(f) : u_k \bisim v_k \implies 
    f(u_1, \cdots, u_{\rank(f)}) \bisim f(v_1, \cdots, v_{\rank(f)})
  \]
  In order to do so, we show that $\relR_P$ (see Def.~\ref{def:RP}) is
  a bisimulation relation.  If this is the case, then ${\relR_P}
  \subseteq {\bisim}$.  Since ${\bisim} \subseteq {\relR_P}$, by
  definition of ${\relR_P}$, we have that ${\relR_P} = {\bisim}$ and
  hence ${\bisim}$ is a congruence as consequence of $\relR_P$ being a
  congruence.

  First, notice that $\relR_P$ is an equivalence relation (this can be
  proved by straightforward induction on the construction of
  $\relR_P$).  Then, to show that $\relR_P$ is a bisimulation we have
  to check that $\relR_P$ satisfy the transfer property, i.e., that
  for all $t,t'\in \closedTerms$, $a\in A$, and
  $\pi\in\closedDTerms$,
  \[%
    t \relR_P t' \text{ and } t \trans[a] \pi \ \ \implies \ \
    \exists \pi' : t' \trans[a] \pi' \text{ and } \pi \relR_P \pi'.
  \]

%

  If $t \relR_P t'$ is a consequence of Def.~\ref{def:RP}(i) holds, then $t \bisim
  t'$ and hence $t \trans[a] \pi$ implies that there exists $\pi' \in
  \closedDTerms$ such that $t' \trans[a] \pi'$ and $\pi \bisim
  \pi'$.
  Since ${\bisim} \subseteq {\relR_P}$, $\closed{\relR_P}(c)$ implies
  $\closed{{\bisim}}(c)$, for all $c \subseteq \closedTerms$.
  Then it is also the case that ${\pi} {\relR_P} {\pi'}$ which proves the
  transfer property for case (i) in Def.~\ref{def:RP}.

%

  Now, suppose that $t \relR_P t'$ is a consequence of Def.~\ref{def:RP}(ii),
  that is $t= f(v_1, \cdots, v_{\rank(f)})$, $t' = f(w_1,
  \cdots, w_{\rank(f)}) $, with $f \in F$ and $v_k \relR_P w_k$ for $
  1 \leq k \leq \rank(f)$.

  By Lemma~\ref{lm:ntmufnu:target:indep}, let $S:\PTr\to\alpha$
  be a target-independent stratification of $P$.  Notice that it sufficies to show that, for all $\gamma$:
  \begin{quote}
    If
    $S(f(v_1,\cdots,v_{\rank(f)}),a)+S(f(w_1,\cdots,w_{\rank(f)}),a)=\gamma$,
    then,
    \hfill(IH) 
    \par %
    \quad\ %
    $f(v_1,\cdots,v_{\rank(f)}) \trans[a] \pi \in {\trans_P}$ and $v_k
    \relR_P w_k$ for $ 1 \leq k \leq \rank(f)$ implies that \par %
    \qquad\ \ %
    there exists $\pi' \in \closedDTerms$ s.t.\
    $f(w_1,\cdots,w_{\rank(f)}) \trans[a] \pi' \in {\trans_P}$ and
    $\pi \relR_P \pi'$.
  \end{quote}

  We proceed by induction.
  Suppose  (IH) holds for all
  $\gamma'<\gamma$, then the validity of (IH) for $\gamma$ follows
  immediately if for all $1\leq\beta<\alpha$ and $j\geq 0$,
  \begin{quote}
    If
    $S(f(v_1,\cdots,v_{\rank(f)}),a)+S(f(w_1,\cdots,w_{\rank(f)}),a)=\gamma$,
    then,
    \hfill(IH') 
    \par %
    \quad\ %
    $f(v_1,\cdots,v_{\rank(f)}) \trans[a] \pi \in {\trans_{P_{\beta,j}}}$
    and $v_k \relR_P w_k$ for $ 1 \leq k \leq \rank(f)$ implies that
    \par %
    \qquad\ \ %
    there exists $\pi' \in \closedDTerms$ s.t.\
    $f(w_1,\cdots,w_{\rank(f)}) \trans[a] \pi' \in {\trans_P}$ and
    $\pi \relR_P \pi'$.
  \end{quote}

  To prove (IH') in general, we proceed by induction on the stratum
  $\beta$ and, within it, by induction on $j$.
  So, suppose (IH') holds for all $\beta'<\beta$ and for $j'<j$
  whenever $\beta'=\beta$.
  Assume
  $S(f(v_1,\cdots,v_{\rank(f)}),a)+S(f(w_1,\cdots,w_{\rank(f)}),a)=\gamma$,
  and
  $f(v_1,\cdots,v_{\rank(f)})\trans[a]\pi \in {\trans_{P_{\beta,j}}}$.
  Then, by Def.~\ref{def:assoc_with}, there must exist a rule (call it $r$)
  \[\ddedrule%
      {\textstyle
         \bigcup_{m\in M}
           \{ t_m(\vec{z})\trans[a_m]\mu_m^{\vec{z}} : \vec{z}\in \mathcal{Z} \} \cup
         \bigcup_{n\in N}
           \{ t_n(\vec{z})\ntrans[b_n] : \vec{z}\in \mathcal{Z} \}
         \{ \theta_l(Y_l)\gtgeq_{l,k} p_{l,,k} : l\in L, k \in K_l\}
      }
      {f(x_1,\ldots,x_{\rank(f)}) \trans[a] \theta}
  \]
  and a proper substitution $\rho$ such that, for all $1\leq k\leq\rank(f)$,
  $\vec{z}\in \mathcal{Z}$, $m\in M$, $n\in N$, $l\in L$  and $k \in K_l$
  \begin{enumerate}[($\rho$1)~]
  \item\label{rho:i}%
    $\rho(x_k)=v_k$,
  \item\label{rho:ii}%
    $\pi = \rho(\theta)$,
  \item\label{rho:iii}%
    $\bigcup_{\beta'<\beta}{\trans_{P_\beta'}} \cup \bigcup_{j'<j}{\trans_{P_{\beta,j'}}} \models \rho(t_m(\vec{z})\trans[a_m]\mu_m^{\vec{z}})$,
  \item\label{rho:iv}%
    $\bigcup_{\beta'<\beta}{\trans_{P_\beta'}} \models
     \rho(t_n(\vec{z})\ntrans[b_n])$, and
  \item\label{rho:v}%
   $\rho(\theta_l(Y_l))\gtgeq_{l,k} p_{l,k}$.
  \end{enumerate}

  We need to prove the existence of a proper substitution $\rho'$ that, using rule $r$ and
  induction, allows us to conclude the transfer property in (IH').
  Notice that requiring that the substitution is proper creates a dependency between 
  the variables in $\Var(\theta_l)$ and $Y_l$, i.e., we cannot define 
  the variables $Y_l$ before the variables in $\Var(\theta_l)$. 
  To satisfy this dependency the definition of substitution $\rho'$ 
  is done using the variable dependency graph $G$ 
  of the positive and quantitative premises in $r$. 

  In addition, 
  to define $\rho'$ we need to partition the set $\mathcal{Z}$ in such a way that it
  allows us to cover all required pairs on relation $\relR_P$ between
  instances through substitution $\rho$ and instances through substitution $\rho'$.
  For this, define
  $\rho(Y_l)/{\relR_P} = \{\rho(Y_l)\cap[t]_{\relR_P} \mid t\in\closedTerms\}$.
  That is, we obtain the partition of $\rho(Y_l)$ induced by the
  equivalence relation $\relR_P$.
  For each $d\in\rho(Y_l)/{\relR_P}$ define $d^\uparrow$ as the unique
  $d^\uparrow \in \closedTerms/{\relR_P}$ s.t.\ $d \subseteq
  d^\uparrow$.
Now we define $\partZ$, a partition of $\mathcal{Z}$,  such that
for all $l\in L$ and $d_l \in \rho(Y_l)/{\relR_P}$ there is a
$Z_{\prod_{l\in L}d_l} \in \partZ$ such that 
\begin{enumerate}[($\partZ$1)~]
 \item \label{part_cond:i} %
 $\exists \vec{z} \in Z_{\prod_{l\in L}d_l} : \rho(\vec{z}) \in {\prod_{l\in L} d_l \times \prod_{k=1}^{\rank(f)}\{v_k\}}$
 \item \label{part_cond:ii}%
 $\cardof{Z_{\prod_{l\in L}d_l}}  \geq \max_{l\in L}\cardof{d_l^\uparrow}$
\end{enumerate}
  Notice that the cardinality of the $Y_l$'s and $L$
  (restriction~\ref{item:conditions_on_cardinality} in
  Def.~\ref{def:ntmufxnu}) guarantees that such partition can be
  constructed.

  Now, we will define a proper closed substitution $\rho'$ such that for any 
  variable $\zeta \in \Var(r)$ the following holds: 

\begin{enumerate}[($\rho'$1)~]
  \item\label{rhoprime:i}%
    if $\zeta$ is a term variable s.t. $\zeta = x_k$ for some $1\leq k\leq \rank(f)$ 
    then $\rho(\zeta) \relR_P \rho'(\zeta)$;
%
%
   \item\label{rhoprime:targetCond} %
    if $\zeta = \mu_m^{\vec{z}}$ with $m \in M$ and $\vec{z} \in \mathcal{Z}$,
    then $\rho'(t_m(\vec{z})\trans[a_m]\mu_m^{\vec{z}}) \in {\trans_P}$.  
    Moreover if $\mu^{\vec{z}}_m \in \Var(\theta)$
    or $\mu^{\vec{z}}_m$ appears in a quantitative premise,
    then $\rho'(\mu^{\vec{z}}_m) \relR_P \rho(\mu^{\vec{z}}_m)$.
  \item\label{rhoprime:iii} %
  if $\zeta = y \in Y_l$, then $Y_l$ is such that
  \footnote{Notice that $\forall y' \in Y_l : \nvdg(y) = \nvdg(y')$.
  This allows us to define the properties over all the set $Y_l$}: 
\begin{enumerate}
 \item \label{rhoprime:iii:d}%
      $\rho'(\pi_l(Z_{\prod_{l\in L}d_l})) =
      d_l^\uparrow \cap \support(\rho'(\theta_l))  $ (here, $\pi_l$ indicates the $l$th projection),
 \item \label{rhoprime:iii:a} 
          if $y \in \Var(\theta)$ then $\rho(y) \relR_P \rho'(y)$
 \item \label{rhoprime:termCond}%
    for all $t_m(\vec{z})$ with $\vec{z} = \tuple{z_0, \ldots, z_{|\vec{z}|}}$, 
     there is $\vec{z'} = \tuple{z'_0, \ldots, z'_{|\vec{z}|}}$ such that 
    and for all $i \in \{0, \ldots, |\vec{z}|\}$ if $\nvdg(z_i) \leq \nvdg(y)$ 
    then $\rho'(z_i) \relR_P \rho(z'_i)$.
    Moreover if $\mu^{\vec{z}}_m \in \Var(\theta)$
    or $\mu^{\vec{z}}_m$ appears in a quantitative premise,
    then $\rho'(z_i) \relR_P \rho(z_i)$.
\end{enumerate}
 \end{enumerate}

 We define $\rho'(x)$ by induction over $n = \nvdg(x)$,
 from which the properties above can be proven straightforwardly. 

 If $n=0$ then $\nvdg(\zeta) =0$, in view that the format is pure, 
 we have  $\zeta = x_i$ for $i \in \{1, \ldots, \rank(f)\}$. Then define $\rho'(\zeta) = w_i$.
 Notice that condition $(\rho'\ref{rhoprime:i})$ is 
 satisfied and the other conditions are not imposed over this kind of variables.  

 Suppose $\rho'$ is defined for $\zeta'$ such that
 $0 \leq \nvdg (\zeta') < n$, we define $\rho'$ for $\zeta$ such that $\nvdg(\zeta) = n$. 
  We have two cases to analyze. 
  In the first case we take $\zeta = \mu^{\vec{z}}_m$.
  Because $\nvdg(\mu^{\vec{z}}_m) = n$,
  for all $x' \in \Var(t_m(\vec{z}))$, $\nvdg(x')<n$.  
  By condition $(\rho'\ref{rhoprime:termCond})$, 
  we can ensure that there is $\vec{z'}$ such that 
  $\rho'(t_m(\vec{z})) \relR_P \rho(t_m(\vec{z'}))$.
   By ($\rho$\ref{rho:iii}),
  $\rho(t_m(\vec{z'}))\trans[a_m]\rho(\mu_m^{\vec{z'}}) \in
  \bigcup_{\beta'<\beta}{\trans_{P_\beta}} \cup \bigcup_{j'<j}{\trans_{P_{\beta,j}}}$,
  since $t_m(\vec{z'})\trans[a_m]\mu_m^{\vec{z'}}$ is also a
  positive premise of $r$.
  Now, by the definition of $\relR_P$, two cases arise:
  \begin{enumerate}
  \item%
    $\rho(t_m(\vec{z'})) \bisim \rho'(t_m(\vec{z}))$.
    Then there exists $\pi\in\closedDTerms$
    s.t.\ $\rho'(t_m(\vec{z}))\trans[a_m]\pi$ and
    $\rho(\mu_m^{\vec{z'}}) \bisim \pi$ (and hence also
    $\rho(\mu_m^{\vec{z'}}) \relR_P \pi$).
  \item%
    There is a function name $g\in F$ and terms
    $u_k,u'_k\in\closedTerms$, $1\leq k\leq\rank(g)$, such that
    \begin{itemize}
    \item%
      $\rho(t_m(\vec{z'})) = g(u_1,\cdots,u_{\rank(g)})$,
    \item%
      $\rho'(t_m(\vec{z})) = g(u'_1,\cdots,u'_{\rank(g)})$, and
    \item%
      $u_k \relR_P u'_k$ for all $1\leq k\leq\rank(g)$.
    \end{itemize}
    Furthermore, we know that
    $$S(\rho(t_m(\vec{z'})),a_m)+S(\rho'(t_m(\vec{z})),a_m) \leq
     S(f(v_1,\cdots,v_{\rank(f)}),a)+S(f(w_1,\cdots,w_{\rank(f)}),a).$$
    Now we can apply the first or second induction hypothesis and
    conclude that there exists some $\pi\in\closedDTerms$
    s.t.\ $\rho'(t_m(\vec{z}))\trans[a_m]\pi \in {\trans_P}$ and
    $\rho(\mu_m^{\vec{z'}}) \relR_P \pi$.
  \end{enumerate}
  Either way, define $\rho'(\mu_m^{\vec{z}}) = \pi$ and hence 
   \begin{itemize}
  \item
     $\rho'(t_m(\vec{z})\trans[a_m]\mu_m^{\vec{z}}) \in {\trans_P}$, and
   \item
     $\rho(\mu_m^{\vec{z'}}) \relR_P \rho'(\mu_m^{\vec{z}})$;
     in particular $\rho(\mu_m^{\vec{z}}) \relR_P \rho'(\mu_m^{\vec{z}})$
     whenever $\mu_m^{\vec{z}} \in \Var(\theta)$
    or $\mu^{\vec{z}}_m$ appears in a quantitative premise
    by condition ($\rho'$\ref{rhoprime:termCond}).
   \end{itemize}

 This definition ensures condition $(\rho'\ref{rhoprime:targetCond})$.

  In the second case we have that $\zeta \in Y_{l'}$ for some $l' \in L$. 
  In this case we define the substitution for all the set $Y_{l'}$ at the same time.
  For all term variable in $Y_{l'}$, define $\rho'$ such that
  for each $Z_{\prod_{l\in L}d_l}\in\partZ$
  \begin{itemize}
    \item%
      $\rho'(\pi_{l'} (Z_{\prod_{l\in L}d_l})) =
      d_{l'}^\uparrow \cap \support(\rho'(\theta_{l'}))$,
      and
    \item\label{cond:ii:b}%
      for all $\vec{z}\in \mathcal{Z}$, $m\in M$, if 
      $\mu^{\vec{z}}_m \in \Var(\theta)$
      or $\mu^{\vec{z}}_m \in \Var(\theta_{l'})$ for some $l' \in L$,
      then
      $\rho'(\vec{z}(l')) \in d_{l}^\uparrow \liff \rho(\vec{z}(l')) \in
      d_{l}$ for all $d_l\in\rho(Y_{l'})/{\relR_P}$,
    \item\label{cond:ii:c}%
      finally, if $y \in \Var(\theta)$
      then $\rho'(y) \in d^\uparrow \liff \rho(y) \in d$ for all $d \in\rho(Y_{l'})/{\relR_P}$.
  \end{itemize}
  Notice that this definition is possible because
  $\rho'$ is defined for all variable in $\Var(\theta_{l'})$ i.e. $\rho'(\theta_{l'}) \in \closedDTerms$, 
  $\cardof{Z_{\prod_{l\in L}d_l}} \geq \max_{l\in L}\cardof{d_l^\uparrow}$,
  $\mathcal{Z} = \diag{Y_l}_{l\in L} \times\prod_{k=1}^{\rank(f)}\{x_k\}$,
  conditions~\ref{item:conditions_on_cardinality}  
  and~\ref{item:conditions_on_nonrepeating_z} in
  Def.~\ref{def:ntmufxnu}, and the fact that variables in 
  $\theta$ are finite.
  This definition ensures properties $(\rho'\ref{rhoprime:iii:d})$ and  $(\rho'\ref{rhoprime:iii:a})$ 
  The property $(\rho'\ref{rhoprime:termCond})$ is ensured by 
  condition $(\partZ \ref{part_cond:i})$ and the induction hypothesis.

  The definition of $\rho'$ ensures that 
  $\rho'(t_m(\vec{z})\trans[a_m]\mu_m^{\vec{z}}) \in {\trans_P}$. 
  Now, we show that each $t_n(\vec{z})\ntrans[b_n]$ holds in
  $\trans_P$ under $\rho'$.
  Just like we did for the case of positive premises, we can conclude
  that there is some $\vec{z'}\in\mathcal{Z}$ such that
  $\rho(t_n(\vec{z'})) \relR_P \rho'(t_n(\vec{z}))$.
  Moreover, $t_n(\vec{z'})\ntrans[b_n]$ is also a premise in $r$ and
  hence
  ${\trans_P} \models \rho(t_n(\vec{z'})\ntrans[b_n])$,
  by ($\rho$\ref{rho:iv}).
  By the definition of $\relR_P$, two cases arise:
  \begin{enumerate}
  \item%
    $\rho(t_m(\vec{z'})) \bisim \rho'(t_m(\vec{z}))$,
    and hence ${\trans_P}\models\rho'(t_n(\vec{z}))\ntrans[b_n]$.
  \item%
    There is a function name $g\in F$ and terms
    $u_k,u'_k\in\closedTerms$, $1\leq k\leq\rank(g)$, such that
    \begin{itemize}
    \item%
      $\rho(t_n(\vec{z'})) = g(u_1,\cdots,u_{\rank(g)})$,
    \item%
      $\rho'(t_n(\vec{z})) = g(u'_1,\cdots,u'_{\rank(g)})$, and
    \item%
      $u_k \relR_P u'_k$ for all $1\leq k\leq\rank(g)$.
    \end{itemize}
    Suppose towards a contradiction that there exists
    $\pi\in\Delta(\closedTerms)$ such that
    $\rho'(t_n(\vec{z}))\trans[b_n]\pi \in {\trans_P}$.
    Since 
    $$S(\rho(t_n(\vec{z'})),b_n)+S(\rho'(t_n(\vec{z})),b_n) <
    S(f(v_1,\cdots,v_{\rank(f)}),a)+S(f(w_1,\cdots,w_{\rank(f)}),a),$$
    applying the first induction hypothesis, we have that
    $\rho(t_n(\vec{z'}))\trans[b_n]\pi' \in {\trans_P}$ for some
    $\pi'\in\closedDTerms$. This contradicts the fact that
    ${\trans_P} \models \rho(t_n(\vec{z'})\ntrans[b_n])$.
  \end{enumerate}
  Therefore, we have that all negative premises of $r$ hold under
  substitution $\rho'$. That is, for all $n\in N$ and
  $\vec{z}\in\mathcal{Z}$,
  \begin{enumerate}[($\rho'$1)~]
    \setcounter{enumi}{3}
  \item\label{rhoprime:iv} %
    ${\trans_P}\models\rho'(t_n(\vec{z}))\ntrans[b_n]$.
  \end{enumerate}

  It remains to show that all quantitative premises of $r$ also hold
  under $\rho'$.
  So, let $\theta_l(Y_l)\gtgeq_{l,k} p_{l,k}$ be a quantitative
  premise in $r$.
  If $\zeta \in \Var(\theta_l)$, then 
  by condition~\ref{item:conditions_on_variables_of_theta} of 
  Def.~\ref{def:ntmufxnu} and since the rule has not free variables,
  $\zeta$ is a distribution variable.
  Moreover $\zeta$ appears in the target of a positive premise;
  then $\zeta = \mu^{\vec{z}}_m$ with $m \in M$ and $\vec{z} \in \mathcal{Z}$. 
  By condition ($\rho'$\ref{rhoprime:targetCond}), 
  $\rho'(\mu_m^{\vec{z}}) \relR_P \rho(\mu_m^{\vec{z}})$.
  By Lemma\ref{lemma:sust_de_Dvar} we have 
  $\rho'(\theta_l) \relR_P \rho(\theta_l)$.
  In addition,  by ($\rho$\ref{rho:v}), $\rho(\theta_l(Y_l)) \gtgeq_{l,k} p_{l,k}$.
  Then, we calculate:
  \begin{align}
    \rho'(\theta_l)(\rho'(Y_l))
    & = \sum_{d'\in{\rho'(Y_l)/{\relR_P}}} \rho'(\theta_l)(d')
    \nonumber \\
    & = \sum_{d\in{\rho(Y_l)/{\relR_P}}} \rho'(\theta_l)(d^\uparrow)
    \tag{by ($\rho'$\ref{rhoprime:iii:d})} \\
    & = \sum_{d\in{\rho(Y_l)/{\relR_P}}} \rho(\theta_l)(d^\uparrow)
    \tag{by Lemma~\ref{lemma:sust_de_Dvar}} \\
    & \geq \sum_{d\in{\rho(Y_l)/{\relR_P}}} \rho(\theta_l)(d)
    \tag{$d\subseteq d^\uparrow$} \\
    & = \rho(\theta_l)(\rho(Y_l))
    \nonumber \\
    & \gtgeq_{l,k} p_{l,k} 
    \tag{since $\rho(\theta_l(Y_l))\gtgeq_l p_l$ holds}
  \end{align}
  So all quantitative premises of $r$ hold under $\rho'$, that is, for
  all $l\in L$, $k \in K_l$ and $\vec{z}\in\mathcal{Z}$,
  \begin{enumerate}[($\rho'$1)~]
    \setcounter{enumi}{4}
  \item\label{rhoprime:vii} %
    $\rho'(\theta_l (Y_l))\gtgeq_{l,k} p_{l,k}$.
  \end{enumerate}

  Since all premises of $\rho'(r)$ hold, we may conclude that
  $\rho'( f(x_1,\ldots,x_{\rank(f)}) \trans[a] \theta) \in {\trans_P}$.

  Finally, it remains to prove that
  $\rho(\theta) \relR_P \rho'(\theta)$. 
  By Lemma~\ref{lemma:sust_de_Dvar}, it is straightforward.
  Recall that the rule has not free variable,  
  then if $\zeta \in \Var(\theta)$ and $\zeta$ is a 
  a distribution varible we have 
  $\zeta = \mu^{\vec{z}}_m$ for some $\vec{z}\in\mathcal{Z}$ and
  $m\in M$.
  By condition ($\rho'$\ref{rhoprime:targetCond}) 
  $\rho(\mu_m^{\vec{z}}) \relR_P \rho'(\mu_m^{\vec{z}})$.
  On the other hand, if $\zeta$ is a term variable,
  by conditions ($\rho'$\ref{rhoprime:i}) and  
  ($\rho'$\ref{rhoprime:iii:a}) it holds $\rho(\zeta) \relR_P \rho'(\zeta)$,
  i.e. the condition of Lemma~\ref{lemma:sust_de_Dvar} are satisfied.
\end{proof}

\section{Proof of Lemma~\ref{lemma:closure}} \label{ap:lemma:closure}

\begin{proof}[Proof of Lemma~\ref{lemma:closure}]
 We prove first the left to right implication. 
 Suppose $\dedrule{H}{c}$ is provable from $P$. 
 We proceer by induction on the partial well-order $<$ on proof structures. 
 If $c \in H$, then  $\dedrule{H}{c}\in R^\proves$.
 For the induction case, let $\dedrule{H}{\sigma(\conc{r})}$ be provable from $P$ by means of a proof structure 
 $(B, r, \phi)$ and a matching substitution $\sigma$. 
 Assume that any term provable by means of a smaller proof structure belongs to $R^\proves$,
 except for the closed quantitative literals. 
 Then $\dedrule{H}{\sigma(p)}$ for $p \in \pprem{r} \cup \qprem{r}$ and for 
 $p \in \qprem{r}$ with $\sigma(p)$ an open term. 
 In addition, by Def.~\ref{def:provable}, if $p \in \qprem{r}$ with $\sigma(p)$  being closed, $\sigma(p)$ holds. Therefore $\dedrule{H}{\sigma(\conc{r})} \in R^\proves$.

 For the right to left implication, we apply induction on the construction of $R^\proves$.
 The base case is trivial.
  Suppose $r \in R$ is such that there is a substitution $\rho$ such that 
  for all premise $p \in \prem{r}$, $\dedrule{H}{\rho(p)}$ holds or
  $\rho(p)$ is a valid closed quantitative premise.
   Let $(B_p, r_p, \phi_p)$ be the proof structure with matching substitution $\sigma_p$ 
   associated to $\dedrule{H}{\rho(p)}$ with $p \in \prem{r}$ and $p$ is not a closed quantitative premise. 
   Since there exist at least as many variables as there are premises in K, the variables in these
   proof structures can be renamed to become all different, and different from the ones in $r$, and a
   substitution $\sigma$ can be constructed that matches with each of these proof structures so as to yield the
   corresponding rule, and equals $\rho$ on the variables in $r$. 
   Then, $(\cup_{p \in \prem{r}^*} B_p \cup \{r\}, r, \cup_{p \in \prem{r}^*} \phi_p \cup \phi')$
   with $\prem{r}^*$ the set of premises of $r$ that are not closed quantitative literals  
   and $\phi'$ the function that maps $r_p$ to $p$ for all $p \in \prem{r}^*$  is a proof structure 
   that matches with $\sigma$, then $\dedrule{H}{\sigma(\conc{r})}$
\end{proof}

\section{Proof of Lemma~\ref{lemma:WSPsubsumesStra}} \label{ap:WSPsubsumesStra}
 The proof uses the machinery introduced in Section~\ref{sec:proofStructure} 
 and the following Lemma.

\begin{lemma}\label{lemma:aux}
 Let $P$ be a PTSS and let $\dedrule{H}{c}$ be provable from $P$.
 If $\trans_{P} \models \psi$ for all $\psi \in H$ then $\trans_{P} \models c$.
\end{lemma}

The proof of the last lemma is straightforward by induction in the proof structure.
We proceed with the proof of Lemma~\ref{lemma:WSPsubsumesStra}.

\begin{proof}[Proof of Lemma~\ref{lemma:WSPsubsumesStra}.]
($\Leftarrow$) We proceed by transfinite induction.
Suppose  ${{\trans_{P_{\beta'}}} \models \psi} \implies P \proves_{ws} \psi$ 
for all $\beta'< \beta$ and  ${\trans_{P_{\beta, j'}}} \models \psi \implies P \proves_{ws} \psi$ 
for all $j'<j$.
If we prove ${\trans_{P_{\beta, j}}} \models \psi \implies P \proves_{ws} \psi$
then ${\trans_{P}} \models \psi \implies P \proves_{ws} \psi$.
 
Let $\psi$ be a positive literal such that $\psi \in {\trans_{P_{\beta, j}}}$,
then there is a rule $r \in R$ and a substitution $\rho$ such that
$\rho(\conc{r}) = \psi$  and $\bigcup_{\beta'<\beta}\trans_{P_{\beta'}} \cup
\bigcup_{j'< j}\trans_{P_{\beta,j'}} \models \prem{r}$.
By induction hypothesis there are derivation for all premises in $\prem{r}$.
Then, using $r$ and $\rho$ we can construct a new proof for $\psi$,
hence  $P \proves_{ws} \psi$.

Let $\psi$ be a negative literal $t \ntrans[a]$ such that ${\trans_{P}} \models \psi$. 
Then for all  $\pi \in \Delta(\closedTerms)$,  $S(t \trans[a] \pi) < \beta$.
Moreover, for each $t \trans[a] \pi$, ${\trans_{P}} \not \models t \trans[a] \pi$.
Therefore for each set of closed literals $H$, 
such that $P \proves \dedrule{H}{t \trans[a] \pi}$,
(here $\proves$ refers to the Def~\ref{def:provable},
then $H$ does not contain quantitative literals because the premises are closed),
there is a $\delta \in H$ such that $\bigcup_{\beta' < \beta}{\trans_{\beta'}} \not \models \delta$,
otherwise, $t \trans[a] \pi$ would be derivable by Lemma~\ref{lemma:aux}.
Notice that by the definition of stratification (Def.~\ref{def:stratification}) 
the required premises to derive $t \trans[a] \pi$  belong to a stratum less than or equal to $\beta$.
This implies that there is a literal $\delta'$, which denies $\delta$, such that
$\bigcup_{\beta' < \beta} {\trans_{\beta'}} \models \delta'$. 
In particular, this holds for $H$ with only negative premises and 
in view of $\bigcup_{\beta' < \beta} {\trans_{\beta'}} \models \delta'$,
by induction hypothesis,$P \proves_{ws} \delta'$.
Therefore, by Def.~{\ref{def:ws-proof}}, $P \proves_{ws} \psi$.

\medskip
($\Rightarrow$)
Suppose $P \proves_{ws} \psi$ and $\psi$ is negative. 
Because $\proves_{ws}$ is consistent, Lemma~\ref{lemma:completeThenCons},
for all premise $\psi'$ that denies $\psi$,
$P \not \proves_{ws} \psi'$. 
By ``$\Leftarrow$ case'',
and its contraposition,
$\trans_{P} \not \models \psi'$, and therefore
$\trans_{P} \models \psi$.
Applying a similar reasoning, suppose $P \proves_{ws} t \trans[a] t'$.
Then $P \not \proves_{ws} t \ntrans[a] t'$ and again 
$\trans_{P} \not \models t \ntrans[a] t'$. 
Finally $\trans_{P} \models t \trans[a] t'$. 
\end{proof}

\section{Proof of Lemma~\ref{lemma:sameNegRules}}\label{ap:lemma:sameNegRules}

\begin{proof}
Suppose
$P \proves \dedrule{N}{\phi} \Leftrightarrow P' \proves \dedrule{N}{\phi}$ holds.
We will only prove $P \proves_{ws}{\psi} \implies P' \proves_{ws} {\psi}$ 
for all closed (positive/negative) literal $\psi$.
The other case is symmetric.
We proceed by induction on the size of the derivation tree. 
The base case is straightforward.

If $\psi$ is a negative premise, by Def.~\ref{def:ws-proof}, 
all premises $\psi_k$ over $\psi$ are positive and each one denies a literal in a set $N$, 
where $N$ is such that $P \proves \dedrule{N}{\psi'}$, where $\psi'$ denies $\psi$.
By the hypothesis of the lemma $P \proves \dedrule{N}{\psi'} \Leftrightarrow P' \proves \dedrule{N}{\psi'}$, 
and, by induction hypothesis, $P \proves_{ws}{\psi_k} \implies P' \proves_{ws} {\psi_k}$,
therefore $P' \proves_{ws} {\psi}$.

Let $\psi$ be a positive premises.
Let $p_{\psi}$ a well supported proof of $\psi$; 
besides, let $N_\psi$ be the negative literals that appear in $p_{\psi}$ and 
$p'_{\psi}$ be the  well supported proof obtained by removing from $p_{\psi}$
the literals in $N_\psi$ with the sub well supported structure used to prove it.
Then $P \proves \dedrule{N_\psi}{\psi}$, because the proof structure and the substitution 
can be defined using $p'_\psi$. 
By hypothesis $P' \proves \dedrule{N_\psi}{\psi}$.
Repeating the reasoning of the previous case 
$P \proves_{ws} \psi'$ for all $\psi' \in N_\psi$ then 
$P' \proves_{ws} \psi'$ for all $\psi' \in N_\psi$.
Therefore $P' \proves_{ws} \psi$.
\end{proof}

\end{document}